\newcommand{\STOC}[1]{}
\newcommand{\JC}[1]{{\linespread{1}\marginpar{\red Jason: #1}}}
\newcommand{\argmax}{\operatorname{arg\,max}}
\newtheorem{theorem}{Theorem}%
\newaliascnt{lemma}{theorem}
\newtheorem{lemma}[lemma]{Lemma}%
\newaliascnt{claim}{theorem}
\newaliascnt{corollary}{theorem}
\newaliascnt{proposition}{theorem}
\newtheorem{proposition}[proposition]{Proposition}%
\newaliascnt{remark}{theorem}
\newaliascnt{algo}{procedure}
\theoremstyle{definition}
\newtheorem{definition}{Definition}
\newtheorem{example}{Example}
\newtheorem*{rep@theorem}{\rep@title}
\newcommand{\newreptheorem}[2]{%
\newenvironment{rep#1}[1]{%
 \def\rep@title{#2 \ref{##1}}%
 \begin{rep@theorem}}%
 {\end{rep@theorem}}}
\newcommand{\aref}[1]{\hyperref[#1]{Appendix~\ref{#1}}}
\newcommand{\AutoAdjust}[3]{\mathchoice{ \left #1 #2  \right #3}{#1 #2 #3}{#1 #2 #3}{#1 #2 #3} }
\newcommand{\Xcomment}[1]{{}}
\newcommand{\InBrackets}[1]{\AutoAdjust{[}{#1}{]}}
\newcommand{\Ex}[2][]{\operatorname{\mathbf E}_{#1}\InBrackets{#2}}
\newcommand{\Prx}[2][]{\operatorname{\mathbf{Pr}}_{#1}\InBrackets{#2}}
\def\expect{\Ex}
\newcommand{\dd}{\mathrm{d}}  
\newcommand{\given}{\;\mid\;}
\newcommand{\bfzero}{{\boldsymbol{0}}}
\newcommand{\typeprelim}{{\boldsymbol{t}}}
\newcommand{\quantile}{{\boldsymbol{q}}}
\newcommand{\alloc}{\boldsymbol{x}}
\newcommand{\sagentmech}{\boldsymbol{x}}
\newcommand{\sagentmechelement}{{x}}
\newcommand{\sagentmechsdim}{x}
\newcommand{\reals}{\mathbb R}
\newcommand{\sagentmechcor}[1]{\boldsymbol{x}^{#1}}
\newcommand{\sagentmechcorelement}[1]{x^{#1}}
\newcommand{\denscor}[1]{\dens^{#1}}
\newcommand{\alphavcor}[1]{\alphav^{#1}}
\newcommand{\partiald}[2]{\partial_{#1}#2}
\newcommand{\sagentutil}{u}
\newcommand{\virtualv}{\boldsymbol{\alpha}}
\newcommand{\alphav}{\boldsymbol{\phi}}
\newcommand{\vectorfield}{\boldsymbol\alpha}
\newcommand{\vectorfieldcomponent}{\alpha}
\newcommand{\ironedvirtualv}{\bar{\boldsymbol{\phi}}}
\newcommand{\ironed}{\hat{\boldsymbol{\phi}}}
\newcommand{\sdamortil}{\alpha}
\newcommand{\sdamorev}{\phi}
\newcommand{\sdpamorev}{\sdamorev_{\max}}
\newcommand{\sdpdens}{\dens_{\max}}
\newcommand{\sdpdist}{\dist_{\max}}
\newcommand{\sumamorev}{\sdamorev_{\textup{sum}}}
\newcommand{\cumdiffelement}{\Gamma}
\newcommand{\cumdiff}{\mathbf{\cumdiffelement}}
\newcommand{\sumdens}{\dens_{\textup{sum}}}
\newcommand{\sumdist}{\dist_{\textup{sum}}}
\newcommand{\virt}{\ironedvirtualv}
\newcommand{\virtelement}{\bar{\phi}}
\newcommand{\amortil}{\virtualv}
\newcommand{\amortilelement}{\alpha}
\newcommand{\amorev}{\alphav}
\newcommand{\amorevelement}{\phi}
\newcommand{\dens}{f}
\newcommand{\typespace}{T}
\newcommand{\allocspace}{X}
\newcommand{\type}{\typeprelim}
\newcommand{\alloci}{x_i}
\newcommand{\util}{\sagentutil}
\newcommand{\typeboundary}{\partial\typespace}
\newcommand{\normal}{\boldsymbol{\eta}}
\newcommand{\sdnormal}{{\eta}}
\newcommand{\cost}{c}
\newcommand{\corrcurve}{C_{\textup{cor}}}
\newcommand{\thetamonotone}{ratio-monotone\xspace}
\newcommand{\thetamonotonicity}{ratio-monotonicity\xspace}
\newcommand{\curve}{C}
\newcommand{\completecorr}{perfectly correlated\xspace}
\newcommand{\quasisymmetric}{max-symmetric\xspace}
\newcommand{\quasisymmetry}{max-symmetry\xspace}
\newcommand{\boundary}{boundary inflow\xspace}
\newcommand{\sortedtype}{{\boldsymbol{v}}}
\newcommand{\constrained}{\hat}
\newcommand{\optconstrained}{\composed{\optimized}{\constrained}}
\newcommand{\optimized}{\starred}
\newcommand{\differentiated}[1]{#1'}
\newcommand{\fortype}{\tilde}
\newcommand{\starred}[1]{#1^\star}
\newcommand{\noaccents}[1]{#1}
\newcommand{\composed}[3]{#1{#2{#3}}}
\newcommand{\forexquant}[1]{#1^{\exquant}}
\newcommand{\newagentvar}[3][\noaccents]{%
\expandafter\newcommand\expandafter{\csname #2\endcsname}{#1{#3}}%
\expandafter\newcommand\expandafter{\csname #2s\endcsname}{#1{\boldsymbol{#3}}}%
\expandafter\newcommand\expandafter{\csname #2smi\endcsname}[1][i]{#1{\boldsymbol{#3}}_{-##1}}%
\expandafter\newcommand\expandafter{\csname #2i\endcsname}[1][i]{#1{#3}_{##1}}%
\expandafter\newcommand\expandafter{\csname #2ith\endcsname}[1][i]{#1{#3}_{(##1)}}%
}
\composed{\forexquant}{\constrained}]{callocstep}{\qalloc}
\composed{\forexquant}{\fortype}]{toutcomestep}{\outcome}
\let\paragraphwithoutperiod\paragraph
\renewcommand{\paragraph}[1]{\paragraphwithoutperiod{#1.}}
\title{Multi-dimensional Virtual Values and \protect\\Second-degree Price Discrimination}
\author{Nima Haghpanah \\
MIT \\
EECS and Sloan School of Management \\
{\tt nima@csail.mit.edu} 
\and Jason Hartline \\
Northwestern University \\
EECS Department \\
{\tt hartline@northwestern.edu} 
}
\begin{document}

\begin{titlepage}
\maketitle
\begin{abstract}

We consider a multi-dimensional screening problem of selling a product with multiple quality levels and design virtual value functions to derive conditions that imply optimality of only selling highest quality. A challenge of designing virtual values for multi-dimensional agents is that a mechanism that pointwise optimizes virtual values resulting from a general application of integration by parts is not incentive compatible, and no general methodology is known for selecting the right paths for integration by parts.  We resolve this issue by first uniquely solving for paths that satisfy certain necessary conditions that the pointwise optimality of the mechanism imposes on virtual values, and then identifying distributions that ensure the resulting virtual surplus is indeed pointwise optimized by the mechanism. Our method of solving for virtual values is general, and as a second application we use it to derive conditions of optimality for selling only the grand bundle of items to an agent with additive preferences.

\end{abstract}

\thispagestyle{empty}
\end{titlepage}

\newpage

\section{Introduction}
\label{s:intro}

A monopolist seller can extract more of the surplus from consumers
with heterogeneous tastes through second-degree price discrimination.
While the optimal mechanism for a non-differentiated product is a
posted pricing, optimal mechanisms for a differentiated product can be
complex and even generally require the pricing of lotteries over the
variants of the product.  This paper gives sufficient conditions under
which the simple pricing of a non-differentiated product is optimal
even when product differentiation is possible.  These conditions allow
multi-dimensional tastes to be projected to a single dimension
where the pricing problem is easily solved by the classic theory.  The
identified conditions are natural and far more comprehensive than the
previous known conditions.  

The main technical contribution of the paper, from which these
sufficient conditions are identified, is a method for proving the
optimality of a family of mechanisms for agents with multi-dimensional
preferences.  This method extends the single-dimensional theory of
virtual values of \citet{M81} to multi-dimensional preferences.  The
main challenge of multi-dimensional mechanism design is that the paths
(in the agent's type space) on which the incentive constraints bind is
a variable; thus a straightforward attempt to generalize
single-dimensional virtual values to multi-dimensional agents is under
constrained.  To resolve this issue we introduce an additional
constraint on the virtual value functions that is imposed by the
optimality of mechanism in the family if point-wise optimization of
virtual values is indeed to result in a such a mechanism.  This
constraint pins down a degree of freedom in the derivation of virtual
value functions.  The family of mechanisms is optimal if there exists
virtual values that satisfy the additional as well as the standard
constraints on virtual values.  Importantly, this framework leaves the
paths on which the incentive constraints bind as a variable and solves
for them.

Consider a monopolist who can sell a high-quality or low-quality
product.  The values of a consumer for these differentiated products
can be seen as a point in the plane.  It will be convenient to write
the consumer's value for these two versions of the product as a base
value for the high-quality product and the same base value times a
discount factor for the low-quality product.  It is a standard result
of \citet{Sto79} and \citet{RZ-83} (and of \citealp{M81}, more
generally) that when the base value is private but the discount factor
is public, i.e., the values of the agent for the two qualities of
products are distributed on a line through the origin, then selling
only the high-quality good is optimal (and it is done by a posted
price).  The analysis of \citet{A96}, applied to this setting,
generalizes this result to the case where the base value and discount
factor are independently distributed but both private to the agent.
His result follows from solving the problem on every line from the
origin, as if the discount factor was public, and observing that these
solutions are consistent, i.e., they do not depend on the discount
factor, and therefore the same mechanism is optimal even when the
discount factor is private.

Our sufficient conditions generalize these results further to
distributions where the base value and discount factor are positively
correlated.\footnote{In this paragraph we assume that the marginal
  distribution of the base value is regular, i.e., Myerson's virtual
  value is monotone; and positive correlation is defined by
  first-order stochastic dominance.  Generalizations are given later
  in the paper.}  Notice that allowing arbitrary correlations between
base value and discount factor is completely general as a
multi-dimensional screening problem for a high- and low-quality
product.  Further, the example of \citet{Tha04}, which we review in
detail subsequently, shows that the single-dimensional projection,
i.e., selling only the high-quality product is not generally optimal
with correlated base value and discount factor.  Consider the special
case where base value and discount factor are perfectly correlated,
i.e., the values for the differentiated products lie on a curve from
the origin.  In this case, the agent's type is actually
single-dimensional but her tastes are multi-dimensional.  We prove
that if the curve only crosses lines from the origin from below, i.e.,
the discount factor is monotonically non-decreasing in the base value,
then selling only the high-quality product is optimal.  On the other
hand, if the discount factor is not monotone in the base value then we
show that there exists a distribution for the base value for which it
is not optimal to sell only the high-quality product.  Perfect
correlation with a monotone discount factor is a special case of
positive correlation which we show remains a sufficient condition for
optimality of selling only the high-quality product.

From the analysis of the perfectly correlated case, we see that the
analyses of \citet{A96} where the discount factor is
independent of the base value, and \citet{Sto79} and \citet{RZ-83} where the discount factor is known, are at the boundary between optimality
and non-optimality of selling only the high-quality product.  Thus,
these results are brittle with respect to perturbations in the model.
Our result shows that pricing only the high-quality product remains
optimal for any positive correlation; the more positively correlated
the model is the more robust the result is to perturbations of the
model.

Our characterization of positive correlation of the base value and
discount factor as sufficient for the optimality of selling only the
high-quality product is intuitive.  Price discrimination can be
effective when high-valued consumers are more sensitive to quality
than low-valued consumers.  These high-valued consumers would then
prefer to pay a higher price for the high-quality product than to
obtain the low-quality product at a lower price.  Positive correlation
between the base value and discount factor eliminated this
possibility.  It implies that high-valued agents are less sensitive to
quality than low-valued agents.  


As a qualitative conclusion from this work, optimal second-degree
price discrimination, which is complex in general, cannot improve a
monopolists revenue over a non-differentiated product unless
higher-valued types are more sensitive (with respect to the ratio of
their values for high- and low-quality products) to product
differentiation than lower-valued types.  This simplification, for
consumers that exhibit positive correlation, generalizes from monopoly
pricing to general mechanism design.  For example, a (monopolist)
auctioneer on eBay has no advantage of discriminating based on
expedited or standard delivery method if high-valued bidders discount
delayed delivery less than low valued bidders. 

The above characterizations show that the multi-dimensional pricing
problem reduces to a single-dimensional projection where the agent's
type is, with respect to the examples above, her base value.  Our
proof method instantiated for this problem is the following.  We need
to show the existence of a virtual value function for which (a)
point-wise optimization of virtual surplus gives a mechanism that
posts a price for the high-quality product and (b) expected virtual
surplus equals expected revenue when the agent's type is drawn from
the distribution.  If the single-dimensional projection is optimal and
(b) holds then it must be that the virtual value of the high-quality
product is equal to the single-dimensional virtual value according to
the marginal distribution of agent's value for the high-quality
product.  This pins down a degree of freedom in problem of identifying
a virtual value function (which is generally given by integration by
parts on the paths in type space, e.g., \citealp{RC98}); the virtual
value for the low-quality product can then be solved for from the
high-quality virtual value and a differential equation that relates
them.  It then suffices to check that (a) holds, which in this case
requires that, at any pair of values for the high and low qualities,
(a.1) if the virtual value for the high-quality product is positive
then it is at least the virtual value of the low-quality product and
(a.2) if it is negative then they are both negative.  Analysis of the
constraints imposed by (a.1) and (a.2) then gives sufficient
conditions on the distribution on types for optimality of the
single-dimensional projection.

Our result above applies generally to a risk-neutral agent with quasi-linear utility over multiple outcomes, and identifies conditions for optimality of a mechanism that simply posts a uniform price for all outcomes (i.e., the only non-trivial outcome assigned to each type is its favorite outcome).  Applied to setting where the consumer can buy multiple items and outcomes correspond to bundles of items, this result indirectly gives conditions for optimality of posting a price for the grand bundle of items.  If a uniform price is posted for all bundles, the consumer will only buy the grand bundle, or nothing (assuming free disposal). 

The special case of this bundle pricing problem where the consumer's
values are additive across the items has received considerable
attention in the literature \citep{AdY76, HaN12, DDT14-2} and our
framework for proving optimality of single-dimensional projections can
be applied to it directly.  For this application, we employ a more
powerful method of virtual values which is analogous to the ironing
approach of \citet{M81}.  We show that, for selling two items to a
consumer with additive value, grand-bundle pricing is optimal when
higher value for the grand bundle is negatively correlated with the
ratio of values for the two items, i.e., when higher valued consumers
have more heterogeneity in their tastes.  This result formalizes a
connection that goes back to \citet{AdY76}.  This second
application of our framework for proving the optimality of simple
mechanisms further demonstrates its general applicability.

\subsection{Related Work}

The starting point of work in multi-dimensional optimal mechanism
design is the observation that an agent's utility must be a convex
function of his private type, and that its gradient is equal to the allocation \citep[e.g.,][cf.\@ the envelope
  theorem]{Roc85}.  The second step is in writing revenue as the
difference between the surplus of the mechanism and the agent's
utility \citep[e.g.,][]{MM88,A96}.  The surplus can be expressed in terms
of the gradient of the utility.  The third step is in rewriting the
objective in terms of either the utility
\citep[e.g.,][]{MM88,MV06,HaN12,DDT13,WT-13,GK14} or in terms of the
gradient of the utility (e.g., \citealp{A96}; \citealp{AFHH13}; and
this paper).  This manipulation follows from an integration by parts.
The first category of papers (rewriting objective in terms of utility)
performs the integration by parts independently in each dimension, and
the second category (rewriting objective in terms of gradient of
utility, except for ours) does the integration along rays from the
origin.  In our approach, in contrast, integration by
parts is performed in general and is dependent on the distribution and
the form of the mechanism we wish to show is optimal.

Closest to our work are \cite{Wil93}, \citet{A96}, and \citet{AFHH13}
which use integration by parts along paths that connect types with
straight lines to the zero type (which has value zero for any outcome)
to define virtual values. \cite{Wil93} and \citet{A96} gave closed
form solutions for multi-dimensional screening problems.  Their
results are for nonlinear problems that are different from our model.
\citet{AFHH13} used integration by parts to get closed form solutions
with independent and uniformly distributed values; our results
generalize this one. Importantly, the paths for integration by parts
in all these works is fixed a priori.  In contrast, the choice of
paths in our setting varies based on the distribution. \citet{RC98}
showed that the general application of integration by parts (with
parameterized choice of paths) characterizes the solutions of the
relaxed problem where all but local incentive constraints are removed.
However, the characterization is implicit and includes the choice of
paths as parameters.  They use the characterization to show that since
bunching can happen, the solution to the relaxed problem is
generically not incentive compatible.\footnote{Bunching refers to the
  case where different types are assigned the same allocation.}
Importantly, the observation is based on the placement of the outside
option, in the form of a price for a certain allocation, that is the
zero allocation in our setting.  Compared to the above papers, our
work is the first to use the variability of paths to derive explicit
conditions of optimality (see \citealp{RoS03}, for an accessible
survey).

There has been work looking at properties of single-agent mechanism
design problems that are sufficient for optimal mechanisms to make
only limited use of randomization.  For context, the optimal
single-item mechanism is always deterministic
\citep[e.g.,][]{M81,RZ-83}, while the optimal multi-item mechanism is
sometimes randomized \citep[e.g.,][]{Tha04,Pyc06}.  For agents with
additive preferences across multiple items, \citet{MM88},
\citet{MV06}, and \citet{GK14} find sufficient conditions under which
deterministic mechanisms, i.e., bundle pricings, are optimal.
\citet{Pav11} considers more general preferences and a more general
condition; for unit-demand preferences, this condition implies that in
the optimal mechanism an agent deterministically receives an item or
not, though the item received may be randomized. Our approach is
different from these works on multi-dimensional mechanism design in
that it uses properties of a pre-specified family of mechanisms to pin
down multi-dimensional virtual values that prove that mechanisms from
the family are optimal.

A number of papers consider the question of finding closed forms for
the optimal mechanism for an agent with additive preferences and
independent values across the items.  One such closed form is
grand-bundle pricing.  For the two item case, \citet{HaN12} give
sufficient conditions for the optimality of grand-bundle pricing;
these conditions are further generalized by \citet{WT-13}.  Their
results are not directly comparable to ours as our results apply to
correlated distributions.  \citet{DDT14-2} and \citet{GK14} give
frameworks for proving optimality of multi-dimensional mechanisms, and
find the optimal mechanism when values are i.i.d.\@ from the uniform
distribution (with up to six items).  \citet{DDT14-2} establish a
strong duality theorem between the optimal mechanism design problem
with additive preferences and an optimal transportation problem
between measures (similar to the characterization of
\citealp{RC98}. Using this duality they show that every optimal
mechanism has a certificate of optimality in the form of
transformation maps between measures. They use this result to show
that when values for $m\geq 2$ items are independently and uniformly
distributed on $[c,c+1]$ for sufficiently large $c$, the grand
bundling mechanism is optimal, extending a result of \citet{Pav11} for
$m=2$ items. In comparison, a simple corollary of our theorem states
that grand bundling is optimal for uniform draws from $[a,b]$
truncated such that the sum of the values is at most $a+b$, for
\emph{any} $a\leq b$.

\section{Preliminaries}
\label{s:prelim}

We consider a single-agent mechanism design problem with allocation space $X\subseteq [0,1]^m$, and a bounded connected type space $T \subset \reals^m$ with Lipschitz continuous boundary, for a finite $m$. The utility of the agent with type
 $\typeprelim \in \typespace$ for allocation $\alloc \in \allocspace$ and payment $p\in
 \reals$ is $\typeprelim \cdot \alloc - \price$.\footnote{Throughout the paper we maintain the convention
   of denoting a vector $\bold{v}$ by a bold symbol and each of its
   components $v_i$ by a non-bold symbol.} Our main results are for the following outcome spaces $X$.
 \begin{itemize}
 \item \emph{The multi-outcome setting} (\autoref{s:uniform}): We
   assume $\allocspace = \{\alloc \in [0,1]^m \mid \sum_i \alloci \leq
   1\}$. Here $m$ is the number of outcomes, and an allocation is a
   distribution over outcomes ($1-\sum_i \alloci$ is the probability of
   selecting a null outcome for which the agent has zero value).  For
   example, $m$ may be the number of possible configurations, e.g.,
   quality or delivery method, of a single item to be sold.  As
   another example, $m=2^k$ may be the number of possible bundles of
   $k$ items to be allocated.
 \item \emph{The multi-product setting with additive preferences}
   (\autoref{a:add bundle}): We assume $\allocspace = [0,1]^m$.  Here
   $m$ is the number of items, and an allocation specifies the
   probability $\alloci$ of receiving each item.\footnote{This setting is
     a special case of the multi-outcome setting with $2^m$ outcomes.
     The additivity structure allows us to focus on the lower
     dimensional space of items instead of outcomes.}
 \end{itemize}
The {\em cost} to the seller for producing outcome $\alloc$ is denoted
$\cost(\alloc)$ and the sellers {\em profit} for $(\alloc,\price)$ is
$\price - \cost(\alloc)$.

We use the revelation principle and focus on direct mechanisms. A single-agent mechanism is a pair of functions, the allocation
function $\sagentmech: T \rightarrow \allocspace$ and the payment
function $p: T \rightarrow \mathbb R$. A mechanism is \emph{incentive compatible} (IC) if no type of the agent increases his utility by misreporting, 
\begin{align*}
\typeprelim \cdot \sagentmech(\typeprelim) - p(\typeprelim) &\geq \typeprelim \cdot \sagentmech(\hat{\typeprelim}) - p(\hat{\typeprelim}), &\qquad \forall \typeprelim, \hat{\typeprelim} \in T.
\intertext{A mechanism is
\emph{individually rational} (IR) if the utility of every type of the agent
is at least zero,}
\typeprelim \cdot \sagentmech(\typeprelim) - p(\typeprelim) &\geq 0, &\qquad \forall \typeprelim \in T.
\end{align*}

A single agent mechanism $(\sagentmech,p)$ defines a utility function $\sagentutil(\typeprelim) = \typeprelim \cdot \sagentmech(\typeprelim) - p(\typeprelim)$. The following lemma connects the utility function of an IC mechanism with its allocation function.

\begin{lemma}[\citealp{Roc85}]
Function $u$ is the utility function of an agent in an individually-rational incentive-compatible mechanism if and only if $u$ is convex, non-negative, and non-decreasing. The allocation is \em{$\sagentmech(\typeprelim) = \nabla \sagentutil(\typeprelim)$}, wherever the gradient $\nabla \sagentutil(\typeprelim)$ is defined.\footnote{If $u$ is convex, $\nabla \sagentutil(\typeprelim)$ is defined almost everywhere, and the mechanism corresponding to $u$ is essentially unique.}
\label{manelivincent}
\label{l:gradutil=alloc}
\end{lemma}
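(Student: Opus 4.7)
The plan is to prove both directions of the equivalence using the standard convex-analytic characterization due to Rockafellar/Rochet, adapted to the mechanism design setting.

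For the forward direction, I would start with an IR/IC mechanism $(\sagentmech, p)$ and define $\sagentutil(\typeprelim) = \typeprelim \cdot \sagentmech(\typeprelim) - p(\typeprelim)$. The IC inequality rewrites as $\sagentutil(\typeprelim) \geq \typeprelim \cdot \sagentmech(\hat{\typeprelim}) - p(\hat{\typeprelim})$ for every $\hat{\typeprelim}$, so that $\sagentutil(\typeprelim) = \sup_{\hat{\typeprelim}\in T}\{\typeprelim \cdot \sagentmech(\hat{\typeprelim}) - p(\hat{\typeprelim})\}$. Being a pointwise supremum of affine functions of $\typeprelim$, $\sagentutil$ is convex. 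The IR condition gives $\sagentutil \geq 0$ directly. Monotonicity follows because $\sagentmech(\hat{\typeprelim}) \in \allocspace \subseteq [0,1]^m$ has non-negative components, so each affine piece $\typeprelim \cdot \sagentmech(\hat{\typeprelim}) - p(\hat{\typeprelim})$ is non-decreasing in $\typeprelim$, and the supremum preserves this.

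For the reverse direction, given $\sagentutil$ convex, non-negative, and non-decreasing, I would define $\sagentmech(\typeprelim) = \nabla \sagentutil(\typeprelim)$ at points of differentiability (which, by a classical result, form a set of full Lebesgue measure in the interior of $\typespace$ since $\sagentutil$ is convex) and $p(\typeprelim) = \typeprelim \cdot \sagentmech(\typeprelim) - \sagentutil(\typeprelim)$. The IC inequality then becomes the standard subgradient inequality for convex functions: $\sagentutil(\typeprelim) \geq \sagentutil(\hat{\typeprelim}) + (\typeprelim - \hat{\typeprelim}) \cdot \nabla \sagentutil(\hat{\typeprelim})$, which rearranges to $\typeprelim \cdot \sagentmech(\typeprelim) - p(\typeprelim) \geq \typeprelim \cdot \sagentmech(\hat{\typeprelim}) - p(\hat{\typeprelim})$. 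IR reduces to $\sagentutil(\typeprelim) \geq 0$. At non-differentiable points one can select any subgradient of $\sagentutil$; monotonicity of $\sagentutil$ ensures any such subgradient has non-negative components, so the allocation lies in the appropriate cone, and essential uniqueness holds because two convex functions with the same subdifferentials a.e.\ differ by a constant, pinned down here by the IR tightness at the minimizing type.

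The main subtlety, and the step I would handle most carefully, is the bridge between almost-everywhere differentiability and the pointwise IC condition: IC must hold for every pair $(\typeprelim,\hat{\typeprelim})$, not just a.e.\ pairs. I would resolve this by using the subdifferential $\partial \sagentutil(\hat{\typeprelim})$ in place of $\nabla\sagentutil(\hat{\typeprelim})$ at non-smooth points and invoking that any measurable selection from $\partial \sagentutil$ yields a valid allocation rule with identical expected revenue, so the mechanism is well-defined up to a measure-zero set. The boundary behavior (Lipschitz continuous boundary of $\typespace$) ensures the relevant integration-by-parts machinery used elsewhere in the paper goes through, but is not needed for this lemma itself. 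No other step requires substantive new ideas.
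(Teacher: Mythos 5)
The paper states this lemma as a citation to Rochet (1985) and gives no proof of its own, so there is nothing to compare against; your proposal is a correct reconstruction of the standard argument. The forward direction (utility is a pointwise supremum of affine functions, hence convex; IR gives non-negativity; $\sagentmech(\hat{\typeprelim})\in[0,1]^m$ gives monotonicity) and the reverse direction (subgradient inequality gives IC; non-negativity gives IR; non-decreasingness forces subgradients to have non-negative components) are both correct, and you correctly identify the a.e.\ differentiability vs.\ pointwise-IC bridge as the delicate spot and handle it via the subdifferential. Two small remarks: first, as stated the ``if'' direction implicitly also needs the gradient constraint $\nabla\sagentutil\in\allocspace$ (the paper imposes $\nabla\sagentutil=\sagentmech\in X$ explicitly in its program~\eqref{singleagentformulation}, so this is an imprecision in the lemma's prose rather than in your proof); second, the footnote's ``essentially unique'' refers to uniqueness of the mechanism given $\sagentutil$ (i.e., $\nabla\sagentutil$ is determined a.e.\ and the payment is then pinned by $\price=\typeprelim\cdot\sagentmech-\sagentutil$), which is a slightly different statement than your remark about two convex functions with the same subdifferential a.e.\ differing by a constant, though that claim is also true.
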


Notice that the payment function can be defined using the utility
function and the allocation function as $p(\typeprelim) = \typeprelim
\cdot \sagentmech(\typeprelim) - \sagentutil(\typeprelim)$.  Applying
the above lemma, we can write payment to be $p(\typeprelim) =
\typeprelim \cdot \nabla \sagentutil(\typeprelim) -
\sagentutil(\typeprelim)$.    In the profit maximization problem the objective is to maximize the expected revenue minus cost, when the types are drawn at random from a distribution over $\typespace$ with density $\dens>0$.  Using \autoref{l:gradutil=alloc}, the problem can
be written as the following mathematical program.
\begin{eqnarray}
\max_{\sagentmech, \sagentutil} && \int_\typeprelim \big[\typeprelim \cdot \sagentmech(\typeprelim) - \sagentutil(\typeprelim) - \cost(\sagentmech(\typeprelim))\big] f(\typeprelim)\; \dd \typeprelim \label{singleagentformulation} \\
&& \sagentutil \text{ is convex}; \sagentutil \geq 0 \nonumber \\
&& \nabla \sagentutil = \sagentmech \in X. \nonumber
\end{eqnarray}

The primary task of this paper is to identify condition that imply the
optimality of \emph{single-dimensional projection} mechanisms.  In a
single-dimensional projection mechanism the preferences can be
summarized by a mapping of the multi-dimensional type $\type$ into a
single-dimensional value.  In particular, in the multi-outcome setting
(\autoref{s:uniform}) we will study the optimality of the class of
\emph{favorite-outcome} projection mechanisms where $\alloci(\type)>0$
only if $i$ is the favorite outcome, $i = \arg \max_j t_j$.  For such
a mechanism, the only relevant information a type contains is the
value for the favorite outcome. In the multi-product setting with
additive preferences (\autoref{a:add bundle}) we study
\emph{sum-of-values} projection mechanisms where $\alloci(\type) =
x_j(\type)$ for all $i$ and $j$, and the value for the grand bundle
$\sum_i t_i$ summarizes the preferences.  The optimization over these
classes can be done using standard methods from single-dimensional
analysis \citep{M81,RZ-83}, where we know the optimal mechanism is
non-stochastic.  The optimal favorite-outcome projection mechanism is
a {\em uniform pricing}, i.e., the same price is posted on all
non-trivial outcomes; the optimal multi-product sum-of-values
projection mechanism is a {\em grand bundle pricing}, i.e., a price is
posted for the grand bundle only.  This paper develops a theory for
proving that these single-dimensional projections are optimal among
all multi-dimensional mechanisms.

The seller's cost $\cost(\alloc)$ for producting outcome $\alloc$ can
generally be internalized into the consumer's utility and thus
ignored.  In our analysis we will expose only a uniform {\em service
  cost}, i.e., $\cost$ when the agent is served any non-trivial
outcome (as discussed further in \autoref{sec:reverseengineering}).
For the multi-outcome setting this service cost can be written as
$\cost \sum_i \alloci$ and for the multi-product setting it can be
written as $\cost \max_i \alloci$.

\section{Amortizations and Virtual Values}
\label{s:amortized}


This section codifies the approach of incentive compatible mechanism
design via virtual values and extends it to agents with
multi-dimensional type spaces.  A standard approach to understanding
optimal mechanisms via multi-dimensional virtual values is to require
that virtual surplus equate to revenue for the optimal mechanism (see
survey by \citealp{RoS03}).  For this approach to be as directly
useful as it has been in single-dimensional settings, we depart from
this literature and impose two additional conditions.  We require
virtual surplus to relate to (in particular, as an upper
bound)\footnote{Relaxing from equality to an upper bound enables our
  analysis to (a) generalize to mechanisms without binding
  participation constraints and to (b) allow for a generalization of
  the ``ironing'' procedure of \citet{M81}.}  revenue for all
incentive compatible mechanisms, we call this condition {\em
  amortization};\footnote{This terminology comes from the design and
  analysis of algorithms in which an {\em amortized analysis} is one
  where the contributions of local decisions to a global objective are
  indirectly accounted for \citep[see the textbook of][]{BoE98}.  The
  correctness of such an indirect accounting is often proven via a
  {\em charging argument}.  Myerson's construction of virtual values
  for single-dimensional agents can be seen as making such a charging
  argument where a low type, if served, is charged for the loss in
  revenue from all higher types.} and that pointwise optimization of
virtual surplus without the incentive compatibility constraint gives
incentive compatibility for free.  The amortization conditions are
relatively easy to satisfy, essentially from integration by parts on paths
that cover type space; while the incentive compatibility condition is
not generally satisfied by an amortization.  An exception is the
single dimensional special case, with $m=1$ non-trivial outcome,
where the integration by parts is unique and often incentive
compatible.






\begin{definition}
\label{d:amortization} 
A vector field $\virt : \typespace \rightarrow \reals^m$ is 
an {\em amortization of revenue} if expected virtual surplus (without
costs)\footnote{Equivalently with costs, the
  same holds for expected profit, i.e., $\forall
  (\hat{\alloc},\hat{\price}),\ \expect{\virt(\type) \cdot
    \hat{\alloc}(\type) - \cost(\hat{\alloc}(\type))} \geq
  \expect{\hat{\price}(\type) - \cost(\hat{\alloc}(\type))}$.} is an upper bound on the expected revenue of all
individually-rational incentive-compatible mechanisms, i.e., $\forall
(\hat{\alloc},\hat{\price}),\ \expect{\virt(\type) \cdot
  \hat{\alloc}(\type)} \geq
\expect{\hat{\price}(\type)}$;  it is
%
{\em tight} for
  incentive-compatible mechanism $(\alloc,\price)$ if the inequality
  above is tight, i.e., $\expect{\virt(\type) \cdot \alloc(\type)} =
  \expect{\price(\type)}$.
\end{definition}

\begin{definition}
\label{d:md-vvf}
An amortization of revenue $\virt : \typespace \rightarrow \reals^m$ is a
{\em virtual value function} if a {\em pointwise virtual surplus
  maximizer} $\alloc$, i.e., $\alloc(\type) \in
\argmax_{\hat{\alloc}\in \allocspace} \hat{\alloc} \cdot \virt(\type)
- \cost(\hat{\alloc}), \forall \type
\in \typespace$,\footnote{Often this virtual surplus maximizer
  is unique up to measure zero events, when it is not then these
  conditions must hold for one of the virtual surplus maximizers and
  we refer to this one as {\em the} virtual surplus maximizer.} is incentive
compatible and tight for $\virt$, i.e., there exists a payment rule $\price$ such
that the mechanism $(\alloc,\price)$ is incentive compatible, individually rational,
and tight for $\virt$.
\end{definition}

\begin{proposition}
\label{prop:vvf=>IC+opt}
\label{unironedmeta}
\label{ironedmeta}
For any mechanism design problem that admits a virtual
value function, the virtual surplus maximizer is the optimal
mechanism.
\end{proposition}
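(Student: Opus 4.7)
The plan is to chain together the three defining properties of a virtual value function—amortization, pointwise maximization, and tightness—into a single comparison between the expected profit of the virtual surplus maximizer $(\alloc,\price)$ and that of an arbitrary competing individually-rational, incentive-compatible mechanism $(\hat{\alloc},\hat{\price})$. Feasibility of $(\alloc,\price)$ is already part of the hypothesis: Definition~\ref{d:md-vvf} guarantees existence of a payment rule $\price$ making $(\alloc,\price)$ IC and IR, so the only thing to verify is weak dominance in expected profit.

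First, I would apply the amortization property in its with-costs form (the equivalent statement noted in the footnote of Definition~\ref{d:amortization}) to the competing mechanism, obtaining
\[
\expect{\hat{\price}(\type) - \cost(\hat{\alloc}(\type))} \;\leq\; \expect{\virt(\type) \cdot \hat{\alloc}(\type) - \cost(\hat{\alloc}(\type))}.
\]
Next, the defining pointwise-maximization property of $\alloc$ gives, for every $\type \in \typespace$,
\[
\virt(\type)\cdot\alloc(\type) - \cost(\alloc(\type)) \;\geq\; \virt(\type)\cdot\hat{\alloc}(\type) - \cost(\hat{\alloc}(\type)),
\]
and hence the same inequality holds after taking expectations over $\type$. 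Finally, tightness of $\virt$ for $(\alloc,\price)$ yields $\expect{\virt(\type)\cdot\alloc(\type)} = \expect{\price(\type)}$, from which
\[
\expect{\virt(\type)\cdot\alloc(\type) - \cost(\alloc(\type))} \;=\; \expect{\price(\type) - \cost(\alloc(\type))}.
\]

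Chaining these three steps produces
\(
\expect{\hat{\price}(\type) - \cost(\hat{\alloc}(\type))} \leq \expect{\price(\type) - \cost(\alloc(\type))},
\)
which is exactly the claim that $(\alloc,\price)$ weakly dominates $(\hat{\alloc},\hat{\price})$ in expected profit. There is no genuine obstacle here—the proposition is essentially a bookkeeping consequence of the definitions, and the intellectual content sits upstream in having identified amortization, tightness, and pointwise-IC as exactly the right three axioms to package together. The one place to be a bit careful is the cost accounting; I would note explicitly that the with-costs amortization in the footnote follows from the without-costs version by subtracting the mechanism-dependent term $\expect{\cost(\hat{\alloc}(\type))}$ from both sides, so no extra regularity on $\cost$ is required.
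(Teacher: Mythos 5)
Your proposal is correct and follows the same three-step chain (tightness, pointwise optimality, amortization) as the paper's own proof, merely written in the reverse order of inequalities and with an explicit remark on how the with-costs amortization follows from the costless version. No substantive differences.
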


\begin{proof}
Denote the virtual surplus maximizer of \autoref{d:md-vvf} by $(\alloc,\price)$ and any
alternative IC and IR mechanism by $(\hat{\alloc},\hat{\price})$; then,
\begin{align*}
\expect{\price(\type) - c(\alloc(\type))} 
  & =
\expect{\virt(\type) \cdot \alloc(\type) - c(\alloc(\type))} 
  \\& \geq 
\expect{\virt(\type) \cdot \hat{\alloc}(\type) - c(\hat{\alloc}(\type))} 
  \geq
\expect{\hat{\price}(\type) - c(\hat{\alloc}(\type))}.
\end{align*}
The expected profit of the mechanism is equal to
its expected virtual surplus (by tightness).  This expected virtual surplus is at least
the virtual surplus of any alternate mechanism (by pointwise
optimality).  The expected virtual surplus of the
alternative mechanism is an upper bound on its expected profit
(an amortization gives an upper bound on expected profit).
\end{proof}

\subsection{Canonical Amortizations}
There is a canonical family of amortizations given by writing expected
utility as an integral and integrating it by parts on paths that cover
type space.  Intuitively, this integration by parts attributes to each type
on a path the loss in revenue from all higher types on the path when
this type is served by the mechanism.  For single-dimensional agents the path is unique and,
thus, so is the canonical amortization \citep{M81}; for
multi-dimensional agents neither paths nor canonical amortizations are
unique.  The latter integration by parts on
paths can be expressed as a multi-dimensional integration by parts
with respect to a vector field $\amortil$ that satisfies two
properties (see \citealp{RC98}):
\begin{itemize}
\item \emph{divergence density equality}: $\nabla \cdot \amortil = -\dens$ for all types $\type \in \typespace$, and
\item \emph{\boundary}: $(\amortil \cdot \normal)(\type)\leq 0$ for
  all types $\type \in \typeboundary$ where $\typeboundary$ denotes the boundary of type space $\typespace$.
\end{itemize}
The divergence density equality condition requires $\amortil$ to
correspond to distributing the required density $\dens$ on paths.  In
the integration by parts on paths, intuitively, each path begins with
an inflow of probability mass, and it distributes this along the path
according to the density function to the end of the path.  Thus, the
direction of $\amortil(\type)$ is the direction of the path at type
$\type$ and its magnitude is the remaining probability mass to be
distributed on the path.  The initial inflow of probability mass at
the origin of the path should be set so that none is left when the
path terminates.  With this interpretation the \boundary condition is
satisfied: on boundary types that originate paths there is an inflow,
on boundary types $\type$ parallel to paths the dot product $(\amortil
\cdot \normal)(\type)$ is zero, and on boundary types that terminate
paths the magnitude is
zero.\footnote{\label{footnote:canonical-exceptions}The \boundary
  condition also allows inflow at the terminal types, the amortization
  from such an $\amortil$ will not generally be tight for non-trivial
  mechanisms.  Without loss we do not consider amortizations
  constructed from such $\amortil$ to be canonical here, or below in
  \autoref{def:amortiltoamorev}.}  See
\autoref{fig:intbypartsintuition}.  The following lemma recasts a
result of \citet{RC98} into our framework.
\begin{figure}
\centering
\begin{tikzpicture}[scale = 3.1, align=center]
    \filldraw[black!40] plot [smooth] coordinates {(0.4,.08) (0.9,.4)} -- (0.9,.7) -- plot [smooth] coordinates {(.2,.47) (.3,.4) (.38,.2) (.4,.08)};
    \draw[line width=1.5] plot [smooth] coordinates {(0,.1) (0.3,.025) (0.9,.4)};
    \draw[line width=1.5] (0.9,.4)--(0.9,.7) -- (0,.4) -- (0,.1);
    \draw[line width=2.5]  plot [smooth] coordinates {(0,.52) (.1, .5) (.2,.47) (.3,.4) (.38,.2) (.4,.08) (.4,0)};
    \draw (0,.52) node[right,xshift=15pt,yshift=10pt]{$\util >0$};
    \draw (0,.52) node[left,yshift=-8pt,xshift=1pt]{$\util=0$};

    \draw[line width=1]  plot [smooth] coordinates {(0,.3) (.2,.2) (.5,.35) (.9,.6)};
    \draw[->,line width=2] (0,.3) -- (.15,.21);
    \filldraw (0,.3) circle(.5pt);
    
    \draw[->,line width=2] (.5,.35) -- (.6,.41);
    \filldraw  (.5,.35) circle(.5pt) node[above,yshift=3pt]{$\amortil$};
    
     \filldraw  (.9,.6) circle(.5pt) node[right]{$\sdamortil_1=0$};

\end{tikzpicture}
\caption{\footnotesize $\amortil/\dens$ is the loss in revenue from all higher types on the path.}
\label{fig:intbypartsintuition}
\end{figure}
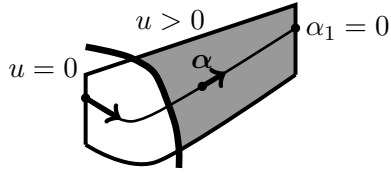

\begin{lemma} 
\label{lem:framework}
For a vector field $\amortil : \typespace \rightarrow \reals^m$
satisfying the divergence density equality and \boundary, the vector
field $\amorev(\type) = \type -
\amortil(\type)/\dens(\type)$ is an amortization of revenue; moreover,
it is tight for any incentive compatible mechanism for which the
participation constraint is binding for all boundary types with strict
inflow, i.e., $\util(\type) = 0$ for $\type \in \typeboundary$ with
$(\amortil \cdot \normal)(\type) < 0$.\footnote{\citet{RC98} prove
  this lemma by taking the first order conditions of
  program~\eqref{singleagentformulation} relaxing the constraint that
  utility is convex.  A result of such analysis is that $\amortil$ can
  be alternatively viewed as the Lagrangians of the local incentive
  compatibility constraints.  We will mainly focus on the
  interpretation of $\amortil$ as the direction of paths for
  integration by parts.}
\end{lemma}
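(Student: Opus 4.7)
The plan is to prove the amortization inequality by expressing the expected revenue of an arbitrary IR/IC mechanism using Rochet's characterization and then applying multi-dimensional integration by parts with the field $\amortil$ as the second argument. Specifically, I would start from \autoref{l:gradutil=alloc}: for any IC/IR mechanism $(\alloc,\price)$ with utility $\util$, we have $\alloc = \nabla \util$ and $\price(\type) = \type\cdot\nabla\util(\type) - \util(\type)$. Hence
\begin{align*}
\expect{\price(\type)} \;=\; \int_T \bigl(\type\cdot\nabla\util(\type)-\util(\type)\bigr)\dens(\type)\,\dd\type.
\end{align*}
Separately, the definition $\amorev(\type) = \type - \amortil(\type)/\dens(\type)$ gives
\begin{align*}
\expect{\amorev(\type)\cdot\alloc(\type)} \;=\; \int_T \type\cdot\alloc(\type)\dens(\type)\,\dd\type \;-\; \int_T \amortil(\type)\cdot\alloc(\type)\,\dd\type.
\end{align*}
So showing $\expect{\amorev(\type)\cdot\alloc(\type)} \geq \expect{\price(\type)}$ is equivalent to showing $\int_T \util(\type)\dens(\type)\,\dd\type \;\geq\; \int_T \amortil(\type)\cdot\nabla\util(\type)\,\dd\type$.

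The heart of the argument is to transform the right-hand side via integration by parts using the two properties of $\amortil$. Using the divergence density equality $\nabla\cdot\amortil=-\dens$ on the left integral and then the product rule $\nabla\cdot(\util\amortil) = \nabla\util\cdot\amortil + \util(\nabla\cdot\amortil)$ together with the divergence theorem, I obtain
\begin{align*}
\int_T \util\dens\,\dd\type \;-\; \int_T \amortil\cdot\nabla\util\,\dd\type \;=\; -\int_T \nabla\cdot(\util\amortil)\,\dd\type \;=\; -\int_{\typeboundary}\util(\type)\,(\amortil\cdot\normal)(\type)\,\dd\type.
\end{align*}
By \boundary, $(\amortil\cdot\normal)(\type)\leq 0$ on $\typeboundary$, and by IR, $\util(\type)\geq 0$ everywhere. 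Hence the boundary integrand $\util\,(\amortil\cdot\normal)$ is nonpositive, so the right-hand side is nonnegative, which yields the amortization inequality.

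For tightness, observe that the chain of identities above is all equality except for the final sign estimate on the boundary integral. Equality in $\expect{\amorev(\type)\cdot\alloc(\type)}=\expect{\price(\type)}$ therefore holds exactly when $\util(\type)(\amortil\cdot\normal)(\type)=0$ almost everywhere on $\typeboundary$, which is guaranteed whenever the participation constraint binds (i.e., $\util(\type)=0$) on the strict-inflow portion $\{\type\in\typeboundary : (\amortil\cdot\normal)(\type)<0\}$, as claimed. The only technical subtlety I anticipate is justifying the divergence theorem here: $\util$ is only convex (hence only a.e.\ differentiable) and $\typespace$ has Lipschitz boundary, so one should either appeal to the version of the divergence theorem valid for $W^{1,1}$ functions on Lipschitz domains, or first approximate $\util$ by smooth convex functions and pass to the limit using monotone convergence; this is the step I would write most carefully, but it is standard given the regularity assumptions imposed in \autoref{s:prelim}.
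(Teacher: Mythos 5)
Your proof is correct and follows essentially the same route as the paper: both reduce the amortization claim to showing $\int_T \util\dens \geq \int_T \amortil\cdot\nabla\util$, apply the divergence density equality together with multi-dimensional integration by parts (the divergence theorem), and then invoke boundary inflow plus IR to sign the resulting boundary integral, with tightness following from the observation that every step is an equality except that sign bound. The only difference is cosmetic presentation (you front-load the algebraic reduction and flag the $W^{1,1}$/Lipschitz regularity subtlety explicitly, which the paper leaves implicit).
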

\begin{proof} 
The following holds for any incentive compatible mechanism.
Integration by parts allows expected utility $\expect{\util(\type)}$
to be rewritten in terms of gradient $\nabla \util$ and vector
field $\amortil$ satisfying the divergence density
equality.\footnote{Integration by parts for functions $h: \reals^k
  \rightarrow \reals$ and $\vectorfield: \reals^k \rightarrow
  \reals^k$ over a set $\typespace$ with Lipschitz continuous boundary is as follows
\begin{align*}
 \int_{\typeprelim \in T} (\nabla h \cdot \vectorfield)(\typeprelim) \; \dd \typeprelim =- \int_{\typeprelim\in T} h(\typeprelim) (\nabla \cdot \vectorfield(\typeprelim)) \; \dd \typeprelim+  \int_{\typeprelim \in \partial T} h(\typeprelim)(\vectorfield \cdot \boldsymbol{\eta})(\typeprelim) \; \dd \typeprelim,
\end{align*}
\noindent where $\nabla \cdot \vectorfield(\type)$ is the divergence
of $\vectorfield$ and is defined as $\nabla \cdot \vectorfield =
\partiald{1}{\vectorfieldcomponent_1} + \ldots +
\partiald{k}{\vectorfieldcomponent_k}$, and $\normal(\type)$ is the normal to the boundary at $\type$.}
\begin{align}
\int_{\type \in \typespace} \nabla \util(\type)\cdot \amortil(\type) \, \dd\type
   &=  - \int_{\type\in \typespace} \util(\type)\, (\nabla \cdot \amortil(\type)) \, \dd \typeprelim
   +  \int_{\type \in \typeboundary} \util(\type) (\amortil \cdot \normal)(\type) \, \dd\type \nonumber \\
      & = \int_{\type\in \typespace} \util(\type)\, \dens(\type) \, \dd \typeprelim + \int_{\type \in \typeboundary} \util(\type) (\amortil \cdot \normal)(\type) \, \dd\type. \nonumber 
   \intertext{By \autoref{l:gradutil=alloc}, which implies that the allocation rule of the mechanism is the gradient of the utility, i.e., $\alloc(\type) = \nabla\util(\type)$, and the definition of expectation:}
   \expect{\tfrac{\amortil(\type)}{\dens(\type)}\cdot\alloc(\type)} &= \expect{\util(\type)} + \int_{\type \in \typeboundary} \util(\type) (\amortil \cdot \normal)(\type) \, \dd\type.
 \label{eq:amortil}
   \intertext{Individual rationality implies that $\util(\type) \geq 0$ for all $\type \in \typespace$; combined with the assumed \boundary condition, the last term on the right-hand side is non-positive.  Thus,}
   \expect{\tfrac{\amortil(\type)}{\dens(\type)}\cdot\alloc(\type)} &\leq \expect{\util(\type)}\!. 
 \nonumber
\\
\intertext{Revenue is surplus less utility; thus, $\amorev(\type) = \type - \amortil(\type)/\dens(\type)$ is an amortization of revenue, i.e.,}
      \expect{\amorev(\type) \cdot \alloc(\type)}
  &\geq \expect{\price(\type)}\!.
\nonumber
\end{align}   
Finally, notice that if the last term of the right-hand side of
equation~\eqref{eq:amortil} is zero, which holds for all mechanisms
for which the individual rationality constraint is binding for types $\type$ on the boundary at which the paths
specified by $\amortil$ originate, then the inequalities above are
equalities and the amortization is tight.
\end{proof}

\begin{definition} 
\label{def:amortiltoamorev}
A {\em canonical amortization of revenue} is $\amorev(\type) = \type - \amortil(\type)/\dens(\type)$ with $\amortil$ satisfying the divergence density inequality and \boundary.$^{\ref{footnote:canonical-exceptions}}$
\end{definition}



For a single-dimensional agent with value $\val$ in type space
$\typespace = [\underline{\val},\bar{\val}]$, the canonical
amortization of revenue that is tight for any non-trivial mechanism is unique and given by
$\sdamorev(\val) = \val
- \frac{1-\dist(\val)}{\dens(\val)}$.\footnote{Divergence density
  equality implies that $\sdamortil(\val) =
  \sdamortil(\underline{\val}) - \dist(\val)$. Tightness requires that
  $\sdamortil(\bar{\val})\util(\bar{\val}) =
  (\sdamortil(\underline{\val}) -1) \util(\bar{\val})= 0$. Since
  $\util(\bar{\val})>0$ for any non-trivial mechanism, we must have
  $\sdamortil(\underline{\val}) =1$ and thus $\sdamortil(\val) = 1 -
  \dist(\val)$.  Tightness also requires that
  $\sdamortil(\underline{\val})\util(\underline{\val})=0$, which is
  satisfied for any mechanism with binding participation constraint
  $\util(\underline{\val})=0$.}  When it is monotone, pointwise virtual surplus maximization is incentive compatible, and thus the canonical amortization $\amorev$ is a virtual value function.

\subsection{Reverse Engineering Virtual Value Functions}
\label{sec:reverseengineering}
Multi-dimensional amortizations of revenue, themselves, do not greatly
simplify the problem of identifying the optimal mechanism as they are
not unique and in general virtual surplus maximization for such an
amortization is not incentive compatible.  The main approach of this
paper is to consider a family of mechanisms and to add constraints
imposed by tightness and virtual surplus maximization of this family of mechanisms
to obtain a unique amortization.  First,
we will search for a single amortization that is tight for all
mechanisms in the family.  Second, we will consider virtual surplus
maximization with a class of cost functions and require that a mechanism in
the family be a virtual surplus maximizer for each cost (see \autoref{s:prelim}).  These two
constraints pin down a degree of freedom in an amortization of
revenue and allow us to solve for the amortization uniquely. The remaining task is to identify the sufficient
conditions on the distribution such that a
mechanism in the family is a virtual surplus maximizer.  
Subsequently in \autoref{s:uniform}, we will identify sufficient
conditions on the distribution of types for the family of uniform pricing
mechanisms to be optimal.


Our framework also allows for proving the optimality of mechanisms
when no canonical amortization of revenue is a virtual value function.
In the single-dimensional case, the ironing method of \citet{MuR78}
and \citet{M81}, can be employed to construct, from the canonical
amortization $\amorev$, another (non-canonical) amortization $\virt$
that is a virtual value function.  The multi-dimensional
generalization of ironing, termed sweeping by \citet{RC98}, can
similarly be applied to multi-dimensional amortizations of revenue.
The goal of sweeping is to reshuffle the amortized values in $\amorev$
to obtain $\virt$ that remains an amortization, but additionally its
virtual surplus maximizer is incentive compatible and tight.  Our
approach in this paper will be to prove a family of mechanisms is
optimal for any uniform service costs by invoking the following
proposition, which follows directly from the definition of
amortization (\autoref{d:md-vvf}).
\begin{proposition}
\label{prop:amorev-to-virt}
A vector field $\virt$ is an amortization of revenue if, for all
incentive compatible mechanisms $(\hat{\alloc},\hat{\price})$ and some
other amortization of revenue $\amorev$, it satisfies
$\expect{\virt(\type)\cdot\hat{\alloc}(\type)} \geq
\expect{\amorev(\type)\cdot\hat{\alloc}(\type)}$.
\end{proposition}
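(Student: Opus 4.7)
The plan is to chain two inequalities. By the hypothesis that $\amorev$ is an amortization of revenue, \autoref{d:amortization} gives that for every individually-rational incentive-compatible mechanism $(\hat{\alloc},\hat{\price})$,
\[
\expect{\amorev(\type)\cdot\hat{\alloc}(\type)} \;\geq\; \expect{\hat{\price}(\type)}.
\]
The additional hypothesis of the proposition supplies
\[
\expect{\virt(\type)\cdot\hat{\alloc}(\type)} \;\geq\; \expect{\amorev(\type)\cdot\hat{\alloc}(\type)},
\]
for the same class of IC mechanisms. Composing the two inequalities shows that $\virt$ satisfies the defining inequality of \autoref{d:amortization}, hence is itself an amortization of revenue.

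There is essentially no obstacle: the content is bookkeeping against \autoref{d:amortization}. The only thing worth flagging is that \autoref{d:amortization} is stated for individually-rational incentive-compatible mechanisms, so the quantifier in the hypothesis and in the conclusion range over the same class; I would either state this implicitly (matching the excerpt's usage where ``incentive compatible'' is shorthand for IC and IR) or add a single sentence noting the quantifier match. No properties of $\virt$ or $\amorev$ beyond the given pointwise-expectation comparison are used, so in particular $\virt$ need not be canonical, need not arise from a vector field satisfying divergence-density equality, and need not be tight for any particular mechanism.
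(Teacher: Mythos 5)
Your proof is correct and matches the paper's intent: the paper explicitly notes that this proposition ``follows directly from the definition of amortization,'' and your two-inequality chain through \autoref{d:amortization} is precisely that argument. Incidentally, the paper's text cites \autoref{d:md-vvf} when it clearly means \autoref{d:amortization}, so you have also correctly identified the relevant definition.
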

We adopt the sweeping approach in \autoref{a:add bundle} (and
\autoref{thm:unitdemandironing} which extends the main result of
\autoref{s:uniform}).  Just as there are many paths in
multi-dimensional settings, there are many possibilities for the
multi-dimensional sweeping of \citet{RC98}.  Our positive results
using this approach will be based on very simple single-dimensional
sweeping arguments.


\section{Optimality of Favorite-outcome Projection}
\label{s:uniform}
In this section we study conditions that imply a favorite-outcome projection mechanism is optimal in the multi-outcome setting (see \autoref{s:prelim}).  In that case, the problem collapses to a monopoly problem with a single parameter (the value for the favorite outcome), where we know from \citet{RZ-83} that the optimum mechanism is \emph{uniform pricing}: all nontrivial outcomes are deterministically and uniformly priced.  
 
As discussed in \autoref{sec:reverseengineering}, we use a class of cost functions to restrict the admissible amortizations. Throughout this section we assume uniform constant marginal costs, that is, $\cost(\alloc) = c \sum_i x_i$ for some constant service cost $c\geq0$.\footnote{Any instance with non-uniform marginal costs can be converted to an instance with zero cost by redefining value as value minus cost.}  For simplicity we focus on the case of two outcomes (extension in \autoref{s:extensions}).  To warm up, we will start with a simple class of problems where values for outcomes are perfectly correlated and derive necessary conditions for optimality of uniform pricing.  Our main theorem later identifies complementary sufficient conditions for general distributions.

\subsection{Perfect Correlations and Necessary Conditions}
Consider a simple class of \emph{\completecorr} instances where the value $t_1$ for outcome~1 pins down the value for outcome~2, $t_2 = \corrcurve(t_1)$.  Assume $\corrcurve(t_1)\leq t_1$, that is, outcome~1 is favored to outcome~2 for all types.   We say that a curve $\corrcurve$ is \emph{\thetamonotone} if $\corrcurve(t_1)/t_1$ is monotone increasing in $t_1$.   Let $\sdpdist$ be the distribution of value for outcome~1.  A distribution $\sdpdist$ is \emph{regular} if its (canonical) amortization of revenue $\sdpamorev(t_1) = t_1 -
\frac{1-\dist_{\max}(t_1)}{\sdpdens(t_1)}$ is monotone non-decreasing in $t_1$ (see the discussion of amortizations of revenue for single-dimensional agents in \autoref{s:amortized}). We investigate optimality of uniform pricing for this class by comparing the profit from a uniform price with the profit from other mechanisms (discounted prices for the less favored outcome or distributions over outcomes).\footnote{With a uniform price when outcome~1 is favored to outcome~2 for all types,  the offer for outcome~2 will not be taken.}  

\begin{theorem}
For any value mapping function $\corrcurve, \corrcurve(t_1)\leq t_1$ that is not \thetamonotone, there exists a regular distribution $\sdpdist$ such that uniform pricing is not optimal for the \completecorr instance jointly defined by $\sdpdist$ and $\corrcurve$.
\label{thm:onlyif}\end{theorem}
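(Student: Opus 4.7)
The plan is to prove the contrapositive constructively: given a witness pair where $C$ violates \thetamonotonicity, exhibit a regular $\sdpdist$ together with an explicit two-option menu that strictly beats every uniform price. By the failure of \thetamonotonicity, choose $t_0 < t_1$ with $\theta_0 := C(t_0)/t_0 > C(t_1)/t_1 =: \theta_1$. The qualitative idea is that the low type values outcome~2 \emph{more} (relative to outcome~1) than the high type does, so the seller can discriminate by offering the low type a discounted outcome~2 without tempting the high type.

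The alternative mechanism I would propose is a two-item menu: option~A sells outcome~1 alone at a high price $p_A$, and option~B sells outcome~2 alone at a discounted price $p_B$, intended to separate $t_1 \mapsto A$, $t_0 \mapsto B$. The IR and IC constraints reduce to
\[
p_B \le C(t_0),\qquad p_A \le t_1,\qquad p_B + t_0(1-\theta_0) \le p_A \le p_B + t_1(1-\theta_1),
\]
and the sandwich is consistent precisely because $\theta_0 > \theta_1$. Setting $p_B = C(t_0)$ and $p_A = \min\{t_1,\, C(t_0)+t_1-C(t_1)\}$, I would first analyze the hypothetical two-point distribution on $\{t_0,t_1\}$ with weights $\{q_0,q_1\}$: the best uniform price yields at most $\max\{t_0,\, q_1 t_1\}$, while the menu yields $q_0 C(t_0) + q_1 p_A$. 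A short algebraic check shows strict improvement whenever
\[
\frac{t_0-C(t_0)}{t_1-C(t_1)} \;<\; q_1 \;<\; \frac{C(t_0)}{C(t_1)},
\]
and this interval is nonempty iff $\theta_0 > \theta_1$, which is exactly our hypothesis.

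The remaining step, which I expect to be the main obstacle, is lifting the argument to a \emph{regular} (continuous, everywhere-positive density) distribution on a connected interval. The plan is to define $\sdpdens$ as a smoothing of the two-point distribution: two narrow peaks of mass $\approx q_0$ near $t_0$ and $\approx q_1$ near $t_1$, connected by a low-density stretch on $[t_0-\delta,t_1+\delta]$, calibrated so that $\sdpamorev(v)=v-(1-\sdpdist(v))/\sdpdens(v)$ is monotone non-decreasing (for instance by choosing the density so the hazard rate is non-decreasing across the gap, which is easy to arrange since regularity is preserved under multiplying by a monotone factor). The uniform-pricing revenue and the menu revenue are continuous functions of the smoothing parameters, so for $\delta$ small enough the strict inequality from the two-point analysis persists, and the resulting $\sdpdist$ is the required regular witness distribution.
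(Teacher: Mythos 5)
Your two-point menu analysis is correct and cleanly done: the IC/IR sandwich for the separating menu is set up right, the comparison against the best uniform price on the two-point distribution is correct, and the nonemptiness of the interval $\bigl(\frac{t_0-\corrcurve(t_0)}{t_1-\corrcurve(t_1)},\ \frac{\corrcurve(t_0)}{\corrcurve(t_1)}\bigr)$ is indeed equivalent to $\theta_0>\theta_1$. This is a genuinely different decomposition than the paper's: the paper fixes a single point $p$ where $\corrcurve(t)/t$ is strictly decreasing, takes any regular $\sdpdist$ for which $p$ maximizes $p(1-\sdpdist(p))$, and argues by a first-order local perturbation, adding an offer of outcome~2 at price $\corrcurve(p)-\epsilon$ and comparing the masses of types that switch in and out.

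The smoothing step, however, has a genuine gap. You claim the two-point distribution on $\{t_0,t_1\}$ can be approximated by a \emph{regular} density with two narrow peaks and a low-density valley, ``calibrated so the hazard rate is non-decreasing.'' This cannot be done: a bimodal density with a dip after the first mode is the canonical example of an \emph{ir}regular distribution, because after the first peak $f$ drops while $1-F$ stays bounded away from zero (by roughly $q_1$), so $\sdpamorev(v)=v-(1-F(v))/f(v)$ drops sharply. Quantitatively, $\sdpamorev'\geq 0$ is equivalent to $f'(v)\geq -2f(v)^2/(1-F(v))$; integrating $\frac{\dd}{\dd v}\bigl[\frac{1}{f}\bigr]\leq \frac{2}{1-F}\leq \frac{2}{q_1}$ from the right edge of the first peak (density $B$) across the valley gives $f(v)\geq \frac{q_1 B}{q_1+2B(v-t_0)}$, whose integral over the valley grows like $\frac{q_1}{2}\log B$. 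So for any fixed valley length the peak density $B$ is bounded by a constant, the smoothed density cannot concentrate near $\{t_0,t_1\}$, and the continuity argument that the strict two-point revenue gap persists does not apply. (The monotone-hazard-rate condition you propose is even more restrictive, requiring $f'\geq -f^2/(1-F)$.) Letting $t_0\to t_1$ to shrink the valley would essentially collapse your argument to the paper's local perturbation around a single point.
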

\begin{proof}
Let the cost $c=0$. Consider $p$ where ${\corrcurve(t_1)}/{t_1}$ is decreasing at $t_1=p$, and any regular distribution $\sdpdist$ such that $p$ maximizes $p (1-\sdpdist(p))$.  We will show that the revenue of the optimum uniform price $p$ can be improved by another mechanism.  

Consider the change in revenue as a result of supplementing a price $p$ for the outcome~1 with a price $\corrcurve(p) - \epsilon$ for outcome~2.  The results of this change are twofold  (\autoref{fig:completecorr}):  On one hand, a set of types with value slightly less than $p$ for outcome~1 will pay $\corrcurve(p)-\epsilon$ for this new discounted offer. Non-monotonicity at $p$ implies that this set lies above the ray connecting $(0,0)$ to $\corrcurve(p)/p$.  Therefore, for small $\epsilon$ the positive effect is at least
\begin{align*}
(\corrcurve(p)- \epsilon) \times (\sdpdens(p)\cdot  \frac{\epsilon p}{\corrcurve(p)}) = \sdpdens(p) \epsilon p.
\end{align*}

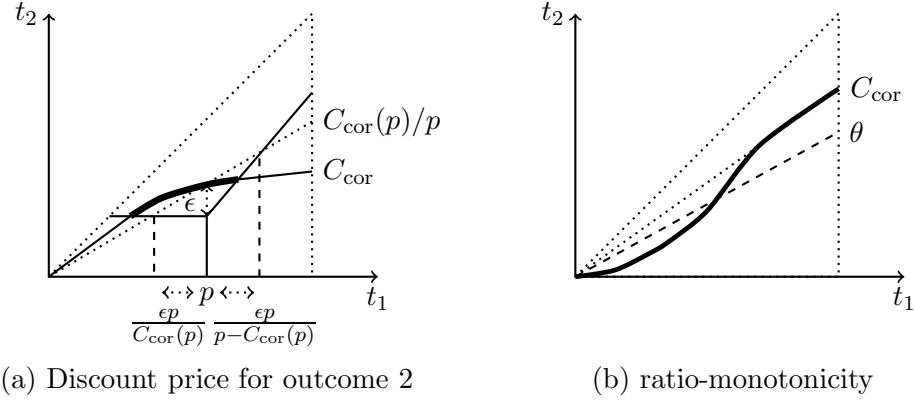
\begin{figure}
\begin{center}

    \begin{tikzpicture}[domain=0:3, scale=3.5, thick]    
    \draw[dotted] (0,0) -- (1,1) -- (1,0);  
    \draw[<->] (0,1) node[left]{$t_{2}$}-- (0,0) -- (1.25,0) node[below] {$t_{1}$};
    \draw[line width=2.5] plot [smooth] coordinates {(.31,.23) (.43,.3) (.6,.35) (.72,.37)};
    \draw (0,0) -- (.31,.23);
    \draw (.72,.37) -- (1,.4) node[right]{$\corrcurve$};
    
    \draw[dotted] (0,0) -- (1,.59) node[right]{$\corrcurve(p)/p$};
    
    \draw[] (.6,0) node[below]{$p$} -- (.6,.23);
    \draw[]  (.6,.23) -- (.23,.23);
    \draw[]  (.6,.23) -- (1,.7);
    \draw[<->,dotted] (.6,.23)  -- (.6,.35);
    \draw (0.6,.275) node[left]{$\epsilon$};
    
    \draw[dashed] (.4,.23) -- (.4,0);
    \draw[dotted,<->] (.43,-0.06) -- (.55,-.06); 
    \draw (.455,-.08) node[below]{$\frac{\epsilon p}{\corrcurve(p)}$};
           
    \draw[dashed] (.8,.45) -- (.8,0);
    \draw[dotted,<->] (.65,-0.06) -- (.78,-.06); 
    \draw (.82,-.08) node[below]{$ \frac{\epsilon p}{p - \corrcurve(p)}$};       
    \draw (0.6,-0.3) node[below]{(a) Discount price for outcome~2};
    
        \draw[<->] (2,1) node[left]{$t_{2}$}-- (2,0) -- (3.25,0) node[below] {$t_{1}$};

       \draw[dotted] (2,0) --(3,1) -- (3,0) -- cycle;
   \draw[dashed] (2,0) -- (3,0.55) node[right]{$\theta$};
       \draw[line width=1.8] plot [smooth] coordinates {(2,0) (2.15,.025) (2.28,.088) (2.35,.13) (2.5,.25) (2.7,.5) (3,5/7)} node[right]{$\corrcurve$};
       \draw[dotted] (2,0) -- (3,5/7);
               \draw (2.6,-0.3) node[below]{(b) \thetamonotonicity};

\end{tikzpicture}
\caption{\footnotesize (a) As a result of adding an offer with price $\corrcurve(p)-\epsilon$ for outcome~2 to the existing offer of price $p$ for outcome~1, the types in darker shaded part of curve $\corrcurve$ will change decisions and contribute to a change in revenue. The lengths of the projected intervals on the $t_1$ axis of the types contributing to loss and gain in revenue are lower- and upper-bounded by $\frac{\epsilon p}{\corrcurve(p)}$ and $ \frac{\epsilon p}{p - \corrcurve(p)}$, respectively. (b) For any $\theta$, $t_1$, $\dist(\theta|t_1,1) = 1$ if $\corrcurve(t_1)/t_1\leq \theta$, and $\dist(\theta|t_1,1) = 0$ otherwise.  Therefore, \thetamonotonicity is equivalent to monotonicity of $\dist(\theta|t_1,1)$ in $t_1$.}
\label{fig:completecorr}
\end{center}
\end{figure}

\noindent On the other hand, a set of types with value slightly higher than $p$ for outcome~1 will change their decision from selecting outcome~1 to outcome~2.  Non-monotonicity at $p$ implies that the negative effect is at most
\begin{align*}
(p - \corrcurve(p) + \epsilon) \times (\sdpdens(p) \cdot  \frac{\epsilon p}{p - \corrcurve(p)}) = \sdpdens(p) \epsilon p.
\end{align*}

\noindent It follows that offering a discount for the less favored outcome strictly improves revenue for small enough $\epsilon$.
\end{proof}

As discussed in \autoref{s:amortized}, a challenge of multi-dimensional mechanism design is that the paths for integration by parts are unknown.  The above theorem highlights another challenge:  even if the paths are known (along the curve for the perfectly correlated class), the incentive compatibility of the mechanism that pointwise optimizes the resulting canonical amortization must be carefully analyzed.  The analysis is a main part of our main theorem in the next section. In contrast to the above theorem, a corollary of our main theorem shows that \thetamonotonicity of $\corrcurve$ and regularity of $\sdpdist$ imply the optimality of uniform pricing.

\subsection{General Distributions and Sufficient Conditions}
We will now state the main theorem of this section which identifies sufficient conditions for optimality of uniform pricing for general distributions.  We say that a distribution over $\typespace \subset \reals^2$ is \emph{\quasisymmetric} if the distribution of maximum value $\val = \max(t_1,t_2)$, conditioned on either $t_1\geq t_2$ or $t_2 \geq t_1$, is identical.\footnote{As examples, any distribution with a domain $\type \in \reals^2, t_1\geq t_2$ is \quasisymmetric (since the distribution conditioned on $t_2\geq t_1$ is arbitrary), as is any symmetric distribution over $\reals^2$.}   Let $\sdpdist(\val)$ and $\sdpdens(\val)$ be the cumulative distribution and the density function of the value for favorite outcome. 
As described in \autoref{s:amortized}, the amortization of revenue for a single-dimensional agent is $\sdpamorev(\val) = \val -
\frac{1-\dist_{\max}(\val)}{\sdpdens(\val)}$.  Let $\dist(\theta|\val,i)$ be the conditional distribution of the \emph{value ratio} $\theta(\type) := \min(t_1,t_2)/\max(t_1,t_2)$ on $\val=t_i\geq t_{-i}$, that is,  $\dist(\theta|\val,i) = \Prx[\type]{\theta(\type)\leq \theta|\val=t_i\geq t_{-i}}$.

\begin{theorem}
\label{t:uniform-pricing-strong-amortization-general} 
Uniform pricing is optimal with $m=2$ outcomes and any service cost $\cost \geq 0$ for 
any \quasisymmetric distribution where (a) the favorite-outcome projection has monotone
non-decreasing amortization of revenue $\phi_{\max}(\val) = \val -
\frac{1-\sdpdist(\val)}{\sdpdens(\val)}$ and (b) the conditional distribution $\dist(\theta|\val,i)$ is monotone non-increasing in $\val$ for all $\theta$ and $i$.
\end{theorem}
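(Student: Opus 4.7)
The plan is to apply the reverse-engineering framework of \autoref{sec:reverseengineering}: construct a canonical amortization on $T$ whose pointwise virtual surplus maximizer is uniform pricing, then invoke \autoref{prop:vvf=>IC+opt}. By max-symmetry, it suffices to construct the vector field $\alpha$ on $T_1 = \{t \in T : t_1 \geq t_2\}$ and extend to $T_2$ by mirror image; I use coordinates $v = t_1$ (the favorite value) and $\theta = t_2/t_1 \in [0,1]$ (the value ratio).

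First, I pin down $\alpha$ on $T_1$. The single-dimensional projection requirement $\phi_1(t) = \phi_{\max}(t_1)$ forces $\alpha_1(t) = f(t)\,h(t_1)$, where $h(v) := (1 - F_{\max}(v))/f_{\max}(v) = v - \phi_{\max}(v)$. The divergence-density equality together with the boundary condition $\alpha_2(t_1,0) = 0$ then uniquely determines $\alpha_2$. One checks that $\alpha_1 \equiv 0$ on $\{t_1 = \bar v\}$ (since $F_{\max}(\bar v) = 1$) and $\alpha_2 \equiv 0$ on $\{t_2 = 0\}$, so no strict inflow occurs on $\partial T$ and by \autoref{lem:framework} the amortization is tight for every IC mechanism. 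Differentiating $1 - F_{\max} = h\,f_{\max}$ yields the identity $(1+h')\,f_{\max} + h\,f_{\max}' = 0$; using it, a direct computation gives $\alpha_1(v,v) = \alpha_2(v,v) = h(v)\,f(v,v)$, so the construction glues continuously with its mirror image across the diagonal.

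The key step is to express $\phi_2 = t_2 - \alpha_2/f$ in the $(v,\theta)$ coordinates, writing $G(\theta,v) := F(\theta \mid v, 1)$. By max-symmetry, $\int_0^{t_2} f(t_1,s)\,ds = P_1\,f_{\max}(t_1)\,G(\theta,v)$, where $P_1 := \Pr[t_1 \geq t_2]$, and the joint density on $T_1$ satisfies $G_\theta(\theta,v) = f(v,\theta v)\,v / (P_1\,f_{\max}(v))$. Substituting these identities into the integration-by-parts formula and using $(1+h')\,f_{\max} + h\,f_{\max}' = 0$ once more, the $\partial f/\partial t_1$ contributions cancel and $\phi_2$ collapses to
\[
  \phi_2(t_1,t_2) \;=\; \theta\,\phi_{\max}(t_1) \;+\; \frac{h(t_1)\,P_1\,f_{\max}(t_1)}{f(t_1,t_2)}\,G_v(\theta,t_1).
\]
Condition~(b) says $G_v \leq 0$; with $h \geq 0$ the remainder is non-positive, so $\phi_2 \leq \theta\,\phi_1$. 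Since $\theta \in [0,1]$, this immediately gives both conditions needed for every service cost $c \geq 0$: if $\phi_1 \geq 0$ then $\phi_2 \leq \theta\phi_1 \leq \phi_1$, so outcome~$1$ maximizes virtual surplus whenever $\phi_1 \geq c$; if $\phi_1 \leq 0$ then $\phi_2 \leq \theta\phi_1 \leq 0 \leq c$, so the null outcome wins. Condition~(a) then implies $\{t : \phi_1(t) \geq c\} = \{t : t_1 \geq p(c)\}$ for the threshold $p(c)$ satisfying $\phi_{\max}(p(c)) = c$, so the pointwise virtual surplus maximizer is exactly uniform pricing at price $p(c)$, and \autoref{prop:vvf=>IC+opt} closes the proof.

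The main obstacle is the collapse of $\phi_2$ to the form $\theta\,\phi_1$ plus a manifestly non-positive term. The raw integration-by-parts expression contains a $\partial f/\partial t_1$ term with no evident sign structure, and only after changing to $(v,\theta)$ coordinates and invoking the identities for $G_\theta$ and $\int f$ in terms of $G$ does condition~(b) on the ratio distribution translate into the needed inequalities. Once the closed form for $\phi_2$ is in hand, the pointwise verifications and the tightness/consistency checks of the earlier steps are routine.
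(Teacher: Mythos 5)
Your proposal follows the same reverse-engineering framework as the paper — same vector field $\amortil$ pinned down by $\sdamorev_1 = \sdpamorev(t_1)$, divergence-density equality, and bottom-boundary orthogonality, and the same invocation of \autoref{prop:vvf=>IC+opt} — but you reach the key inequality $\sdamorev_2 \leq (t_2/t_1)\,\sdamorev_1$ by a genuinely different route. The paper passes through an intermediate geometric lemma (\autoref{lem:strongamort}): applying the divergence theorem to the region $T(t_1,q)$, it shows $\amortil$ is everywhere tangent to the equi-quantile curves, and then the \thetamonotonicity of those curves (equivalent to condition (b)) delivers the angle inequality via \autoref{l:2d-ext=>uniform-pricing}. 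You instead integrate the PDE $\partial_2\sdamortil_2 = -\dens - \partial_1\sdamortil_1$ directly, change to $(v,\theta)$ coordinates, and exploit the identity $(1+h')\sdpdens + h\sdpdens' = 0$ to cancel the $\partial_1 f$ terms and arrive at the explicit closed form $\sdamorev_2 = \theta\,\sdpamorev(t_1) + h(t_1) P_1 \sdpdens(t_1)\,G_v / f(\type)$, from which (b) ($G_v \leq 0$) and $h \geq 0$ give the inequality immediately. The two are morally the same computation — your $G_v$ term is precisely the deviation of $\amortil$ from tangency to a ray through the origin — but your closed form is more explicit and makes the role of condition (b) transparent, whereas the paper isolates the tangency property as a reusable geometric statement that it invokes again in the additive setting (\autoref{lem:gamma}).

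One slip in the tightness step: you assert that no strict inflow occurs on $\partial T$, hence the amortization is tight for \emph{every} IC mechanism. That cannot hold: the divergence theorem forces $\int_{\typeboundary}(\amortil\cdot\normal) = -\int_T \dens = -1$, so there must be net inflow. It occurs on the left boundary $t_1 = \underline{t}_1$ (where $\sdamortil_1 = \dens/\sdpdens > 0$ and the outward normal is $(-1,0)$), or as a Dirac mass at the origin if the left boundary degenerates to a point (cf. the footnote in the paper's proof of the theorem). What is actually true — and all that you need — is that this inflow is confined to boundary points with $t_1 = \underline{t}_1$, where every uniform-pricing mechanism with price $p \geq \underline{t}_1$ has $\util = 0$, so tightness holds for uniform pricings. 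The downstream argument is unaffected.
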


Monotonicity of $\dist(\theta|\val,i)$ in $\val$ is correlation of $\theta$ and $\val$ in first order stochastic dominance sense.\footnote{Stronger correlation conditions, such as \emph{Inverse Hazard Rate Monotonicity},  \emph{affiliation}, and independence of favorite value $v$ and the non-favorite to favorite ratio are also sufficient  \citep{MiW82,cas07}.}  It states that as $\val$ increases, more mass should be packed between a ray parameterized by $\theta$, and the 45 degree line connecting $(0,0)$ and $(1,1)$ (\autoref{fig:uniformabove}).  In other words, a higher favorite value makes relative indifference between outcomes, measured by $\theta$, more likely. 

\begin{figure}
\begin{center}

    \begin{tikzpicture}[domain=0:3, scale=3, thick]

        \draw[dotted] (-1.5,0) -- (-.5,0)-- (-0.5,1) -- (-1.5,1);  
    \draw[<->] (-1.5,1.05) node[left]{$t_{2}$}-- (-1.5,0) -- (-0.35,0) node[below] {$t_{1}$};
          \shadedraw [shading = axis, top color = black, bottom color = white] (-1.5,0) -- (-.5,0) -- (-.5,1) -- (-1.5,0);
           \shadedraw [shading = axis, right color = black, left color = white] (-1.5,0) -- (-1.5,1) -- (-.5,1) -- (-1.5,0);

          \draw[dotted] (-1,0) node[below]{$\val \rightarrow$} -- (-1,.5) -- (-1.5,.5);
          \draw (-1,-.1) node[below]{(a)};
          \draw[dashed] (-1.5,0) -- (-.5,.4) node[right]{$\theta$};
          \draw[ ->] (-.8,.1) -- (-.8,.5);
          \draw[->] (-1.4,.7) -- (-1,.7);

       \draw[<->] (0,1.05) node[left]{$t_{2}$}-- (0,0) -- (1.15,0) node[below] {$t_{1}$};

       \draw[dotted] (0,0) -- (1,0) -- (1,1) -- (0,1) -- cycle;
       \draw (1,0) node[below]{$\bar{\val}$};

      \filldraw [line width = 2pt, fill=black!40] plot[id=f3,domain=0.5:1]  (\x,{\x*\x*\x*0.8}) node[right]{$C$}-- (1,1) -- plot[id=f4,domain=1:.5]  ({\x*\x*\x*0.8},\x) -- (.5,.5) -- cycle;
      \draw[dotted] (0,.5) -- (.5,.5)-- (.5,0) node[below]{$\underline{\val}$};
       \draw [line width = 1pt, dotted] plot[id=f3,domain=0:1]  (\x,{\x*\x*\x*0.8});
       \draw [line width = 1pt, dotted] plot[id=f3,domain=0:1]  ({\x*\x*\x*0.8},\x);
       \draw[dotted] (0,0) -- (1,1);
       \draw[dashed] (0,0) -- (1,.5) node[right]{$\theta$};
                     \draw (.5,-0.1) node[below]{(b)};

              \draw (2,-0.1) node[below]{(c)};
        \draw[dotted] (1.5,0) -- (2.5,0) -- (2.5,1) -- (1.5,1);  
\draw (2.5,0) node[below]{$\bar{\val}$};
        \draw[->] (1.5,0) -- (1.5,1.05) node[left]{$t_2$};
        \draw[->] (1.5,0) -- (2.65,0) node[below] {$t_1$};
        \draw[dotted] (1.9,0)  node[below]{$\underline{\val}$} -- (1.9,1);

        \filldraw [line width=2pt, fill=black!40] (1.9,.4) -- (2.5,.4) -- (2.5,1) -- (1.9,1) -- cycle;
        \draw[dashed] (1.5,0) -- (2.5,.7) node[right]{$\theta$};
        \draw[dotted] (1.5+4/7,.4) -- (1.5+4/7,0) node[below]{$v_1$};
                \draw[dotted] (2.3,.56) -- (2.3,0) node[below]{$v_2$};



\end{tikzpicture}
\caption{\footnotesize (a) As $\val$ increases, relatively more mass is packed towards the 45 degree line. (b) A class of distribution satisfying the correlation condition of \autoref{t:uniform-pricing-strong-amortization-general}.  When $t_1\geq t_2$, mass is distributed uniformly above a \thetamonotone curve $C$.  (c) A class of distributions not satisfying the correlation condition of \autoref{t:uniform-pricing-strong-amortization-general}, since $0=\dist(\theta|v_1,1)<\dist(\theta|v_2,1)$.}
\label{fig:uniformabove}
\end{center}
\end{figure}
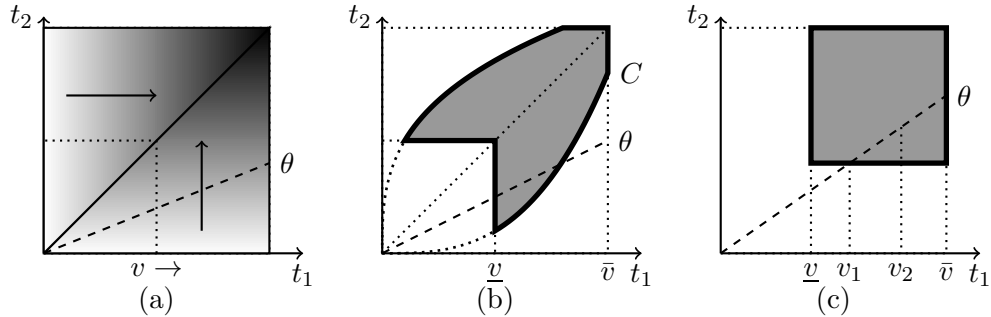

Note the contrast with \autoref{thm:onlyif}.  For a \completecorr instance, $\dist(\theta|\val,1)=1$ if $\corrcurve(\val)/\val \leq \theta$, and $\dist(\theta|\val,1)=0$ otherwise (\autoref{fig:completecorr}).  Monotonicity of $\dist(\theta|\val,1)$ in $\val$ is therefore equivalent to \thetamonotonicity of $\corrcurve$.  \autoref{t:uniform-pricing-strong-amortization-general}  states that for any \thetamonotone $\corrcurve$, uniform pricing is optimal for the perfectly correlated instance jointly defined by $\corrcurve$ and \emph{any} regular distribution $\sdpdist$.  As another class of distributions satisfying the conditions of \autoref{t:uniform-pricing-strong-amortization-general}, one can draw the maximum value $v$ from a regular distribution $\sdpdist$, and the minimum value uniformly from $[\curve(\val),\val]$, for a \thetamonotone function $\curve$ satisfying $\curve(\val)\leq \val$ (\autoref{fig:uniformabove}).  On the other hand, a distribution where values for outcomes are uniformly and independently drawn from $[\underline{\val},\bar{\val}]$, with $\underline{\val}>0$, does not satisfy the conditions (when $\underline{\val} = 5, \bar{\val}=6$, \citealp{Tha04}, showed that uniform pricing is not optimal).  As another example, if $t_1$ and $t_2$ are drawn independently from a distribution with density proportional to $e^{h(\log(x))}$ for any monotone non-decreasing convex function $h$, then the distribution satisfies the conditions of the theorem (see \aref{app:independent}).

Notice that the conditional distributions $\dist(\theta|\val,i)$ jointly with $\sdpdist$ are alternative representations of any \quasisymmetric distribution as follows: with probability $\Prx{t_1\geq t_2}$, draw $t_1$ from $\sdpdist$, $\theta$ from $\dist(\cdot|t_1,1)$, and set $t_2 = t_1\theta$ (otherwise assign favorite value to $t_2$ and draw $\theta$ from $\dist(\cdot|t_2,2)$). As a result, the requirements of \autoref{t:uniform-pricing-strong-amortization-general} on $\sdpdist$ and $\dist(\theta|\val,i)$ are orthogonal.  This view is particularly useful since it is natural to define several instances of the problem in terms of distributions over parameters $\val$ and $\theta$.  For example, in the pricing with delay model discussed in the introduction, $\theta$ has a natural interpretation as the discount factor for receiving an item with delay.  We will revisit the conditions of \autoref{t:uniform-pricing-strong-amortization-general} in \autoref{s:extensions}.

The rest of this section proves the above theorem by constructing the appropriate virtual value functions.     Notice that \quasisymmetry allows us to focus on only the conditional distribution when the favorite outcome is outcome~1.  If a single mechanism, namely uniform pricing, is optimal for each case (of outcome~1 or outcome~2 being the favorite outcome), the mechanism is optimal for any probability distribution over the two cases. Therefore for the rest of this section we work with the distribution conditioned on $t_1\geq t_2$. In particular, $T$ is a bounded subset of $\reals^2$ specified by an interval $[\underline{t}_1,\bar{t}_1]$ of values $t_1$ and bottom and top boundaries $\underline{t}_2(t_1)$ and $\bar{t}_2(t_1)$ satisfying $\underline{t}_2(t_1) \leq \bar{t}_2(t_1)\leq t_1$.  The proof follows the framework of \autoref{s:amortized}.    In \autoref{d:2d-extension} we define $\amorev$ and $\amortil$ from the properties they must satisfy to prove optimality of uniform pricing.  \autoref{lem:strongamort} shows that $\amorev$ is a canonical amortization and is tight for any uniform pricing.  \autoref{l:2d-ext=>uniform-pricing} shows that given the conditions of \autoref{t:uniform-pricing-strong-amortization-general} on the distribution, the allocation of uniform pricing maximizes virtual surplus pointwise with respect to $\amorev$. The theorem follows from \autoref{prop:vvf=>IC+opt}.

A uniform pricing $p \in[\underline{t}_1,\bar{t}_1]$ implies $\alloc(\type)=0, \util(\type)=0$ if $t_1\leq p$, and $\alloc(\type)=(1,0), \util(\type)>0$ otherwise (recall the assumption that $t_1\geq t_2$).  Therefore, in order to satisfy the requirement of \autoref{lem:framework} that $\util(\type)(\amortil \cdot \normal)(\type)  =0$ everywhere on the boundary and for all uniform pricings $p \in[\underline{t}_1,\bar{t}_1]$, $\amortil$ must be \emph{boundary orthogonal}, $(\amortil \cdot \normal)(\type)= 0$, except possibly at the left boundary, where $\util(\type) = 0$ (\autoref{fig:partialorthogonality}).  With this refinement of \autoref{lem:framework} of the boundary conditions of $\amortil$ we now define $\amortil$ and $\amorev$.

\begin{figure}
\centering
\begin{tikzpicture}[scale = 3.5, align=center]
    \draw[<->] (0,1) node[left]{$t_{2}$}-- (0,0) -- (1.15,0) node[below] {$t_{1}$};

\draw[dotted] (0,0) -- (1,1);
\filldraw[line width=1.8, black!40] plot [smooth] coordinates {(.4,.1) (.6,.05) (1,.3)} -- (1,.5) -- (.4,.3) -- cycle;
\draw[line width=1.8] plot [smooth] coordinates {(.4,.1) (.6,.05) (1,.3)} -- (1,.5) -- (.4,.3) -- cycle;
\filldraw (1,.4) circle(.5pt) node[right]{$\sdamortil_1 = 0$};
\filldraw (.7,.4) circle(.5pt) node[above,yshift=5pt]{$\amortil \cdot \normal = 0$};
\filldraw (.4,.2) circle(.5pt) node[left]{$\sdamortil_1 \geq 0$};
\filldraw (.7,.1) circle(.5pt) node[below,xshift=20pt,yshift=4pt]{$\amortil \cdot \normal = 0$};
    \draw (0.5,-0.1) node[below]{(a) Boundary conditions of $\amortil$};
    
    \filldraw[black!40] plot [smooth] coordinates {(2,.1) (2.3,.025) (2.9,.4)} -- (2.9,.7) -- (2,.4) -- cycle;
    \draw[line width=2.5] plot [smooth] coordinates {(2,.1) (2.3,.025) (2.9,.4)};
    \draw[line width=1] (2.9,.4)--(2.9,.7) -- (2,.4) -- (2,.1);
\filldraw (2.3,.03) circle(.7pt) node[below,xshift=40pt]{$\amortil \cdot \normal = 0 \Rightarrow \sdamortil_2 = -\sdamortil_1\sdnormal_1/\sdnormal_2$};
\draw[line width = 1.5] (2.3,.03) -- (2.3,.3);
\filldraw (2.3,.3) circle(.5pt) node[above,xshift=17pt]{$\Delta \sdamortil_2 = \int \partial_2 \sdamortil_2$};
    \draw (2.5,-0.1) node[below]{(b) Definition of $\sdamortil_2$};

\end{tikzpicture}
\caption{\footnotesize (a) In addition to the divergence density equality $\nabla \cdot \amortil = -f$, $\amortil$ must be boundary orthogonal at all boundaries except possibly the left boundary, and an inflow at the left boundary. (b) Given $\sdamortil_1$, we solve for $\sdamortil_2$ to satisfy boundary orthogonality at the bottom and divergence density equality.  Boundary orthogonality uniquely defines $\sdamortil$ on the bottom boundary.  Integrating the divergence density equality $\partial_2 \sdamortil_2 = -\dens - \partial_1\sdamortil_1$ defines $\sdamortil_2$ everywhere.}
\label{fig:partialorthogonality}\end{figure}
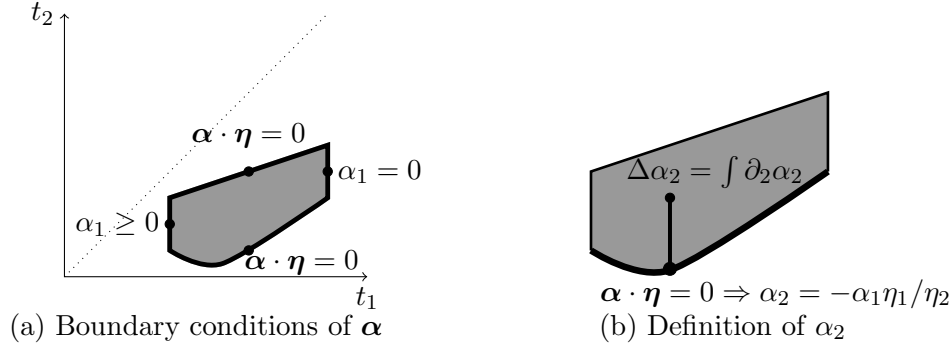

\begin{definition} 
\label{d:2d-extension}
The \emph{two-dimensional extension} $\amorev$ of the amortization for the favorite-outcome
projection $\sdpamorev(\val) = \val -
\frac{1-\sdpdist(\val)}{\sdpdens(\val)}$ is constructed as follows:
\begin{enumerate}[(a)]
\item Set $\sdamorev_1(\type) = \sdpamorev(t_1)$ for all $\type \in \typespace$.
\item Let $\sdamortil_1(\type) = (t_1 -
\sdamorev_1(\type))\,\dens(\type) = \frac{1-\sdpdist(t_1)}{\sdpdens(t_1)} \, \dens(\type)$.\footnote{Our assumption that $\dens>0$ and the regularity assumptions on $\typespace$ imply that $\sdpdens>0$ everywhere except potentially at the left boundary if the left boundary is a singleton.  We treat this case separately in the upcoming proof of the theorem.}
\item Define $\sdamortil_2(\type)$ uniquely to satisfy divergence density equality $\partial_2 \sdamortil_2 = -f - \partial_1\sdamortil_1$ and boundary orthogonality of the bottom boundary.
\item Set $\sdamorev_2(\type) = t_2 - \sdamortil_2(\type) / \dens(\type)$.
\end{enumerate}
\end{definition}

An informal justification of the steps of the construction is as
follows:
\begin{enumerate}[(a)]
\item \label{phi1tight} First, $\sdamorev_1(\type)$ may only be a function of $t_1$; otherwise, if $\sdamorev_1(\type)>\sdamorev_1(\type')$ with $t_1=t'_1$, maximizing virtual surplus pointwise with cost $\cost$ satisfying $\sdamorev_1(\type)>c>\sdamorev_1(\type')$ implies $x_1(\type') = 0$, and either $x_1(\type)>0$ or $x_2(\type)>0$ (if $\sdamorev_2(\type)>\sdamorev_1(\type)>c$).  Such an allocation $\alloc$ is not the allocation of uniform pricing.\footnote{This argument applies only if $\sdamorev_1(\type)>0$.  Nevertheless, we impose the requirement that $\sdamorev_1(\type) = \sdamorev_1(t_1)$ everywhere as it allows us to uniquely solve for $\amorev$.}  Second, given the first point, the expected virtual surplus of uniform pricing $p$ is $\int_{t_1\geq p} [\sdamorev_1(t_1) \sdpdens(t_1) -c] \; \dd t_1$, which by tightness we need to be equal to $(p-c)(1-\sdpdist(p))$.  Solving this equation for all $p$ gives $\sdamorev_1(\type) = \sdpamorev(t_1)$.
\item 
\label{step:amorev-definition-alpha1}
We obtain $\sdamortil_1$ from $\sdamorev_1$ by \autoref{def:amortiltoamorev}.  
\item \label{step:amorev-definition-alpha2} Given $\sdamortil_1$, $\sdamortil_2$ is defined to satisfy divergence density equality, $\partial_2 \sdamortil(\type) = -\dens(\type) - \partial_1 \sdamortil(\type)$, and boundary orthogonality at the bottom boundary (i.e., $t_2=\underline{t}_2(t_1)$).  Integrating and employing boundary orthogonality
  on the bottom boundary of the type space, which requires that $\amortil \cdot \normal =0$,  gives the
  formula (\autoref{fig:partialorthogonality}).  For example, if $\underline{t}_2(t_1)=0$, boundary orthogonality requires that $\sdamortil_2(t_1,0) = 0$, and thus $\sdamortil_2(\type) = - \int_{y=0}^{t_2} \left(\dens(t_1,y) + \partial_1 \sdamortil_1(t_1,y)\right)   \dd y$.
\item We obtain $\sdamorev_2$ from $\sdamortil_2$ by
  \autoref{def:amortiltoamorev}.
\end{enumerate}

For $\amorev$ to prove optimality of uniform pricing, we need the allocation of uniform pricing to optimize virtual surplus pointwise with respect to $\amorev$.  This additional requirement demands
that $\sdamorev_1(\type) \geq \sdamorev_2(\type)$ for any type $\type
\in \typespace$ for which either $\sdamorev_1(\type)$ or $\sdamorev_2(\type)$ is positive.  A little algebra shows that this condition is
implied by the angle of $\amortil(\type)$ being at most the angle
of $\type$ with respect to the horizontal $t_1$ axis, that is, $t_2\sdamortil_1(\type)\leq t_1\sdamortil_2(\type)$ (\autoref{l:2d-ext=>uniform-pricing}, below).  The direction of
$\amortil$ corresponds to the paths on which incentive compatibility
constraints are considered.  Importantly, our approach does not fix
the direction and allows any direction that satisfies the above
constraint on angles.  The following lemma is proved by the divergence
theorem, and specifies the direction of $\amortil$.

\begin{definition}
For any $q\in [0,1]$, define the \emph{equi-quantile} function $\curve_q(t_1)$ such that conditioned on $t_1$, the probability that $t_2\leq  \curve_q(t_1)$ is equal to $q$ (see \autoref{fig:trapezoidalquantile}).  More formally, $\curve_q$ is the upper boundary of $T_q$, where
\begin{align*}
T_q = \{\type | \Prx[\type']{t'_2\leq t_2|t'_1=t_1, t'_1 \geq t'_2}:= \frac{\int_{t'_2\leq t_2} \dens(t_1,t'_2)\dd t'_2}{\int_{t'_2\leq t_1} \dens(t_1,t'_2)\dd t'_2} \leq q\}.
\end{align*}
\end{definition}

For example, notice that for the \completecorr class, the equi-quantile curves $C_q$ are identical to $\corrcurve$.

\begin{lemma}
The vector field $\amorev$ of \autoref{d:2d-extension} is a tight canonical amortization for any uniform pricing. At any $\type$, $\amortil(\type)$ is tangent to the equi-quantile curve crossing $\type$.
\label{lem:strongamort}\end{lemma}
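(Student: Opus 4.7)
The plan proceeds in three movements: first establish the tangency claim by an explicit calculation, then use it to verify the \boundary condition, and finally show tightness for any uniform price.

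I will first prove tangency. By construction (c), $\sdamortil_2$ is the unique solution of the ODE $\partial_2 \sdamortil_2(t_1,t_2) = -\dens(\type) - \partial_1 \sdamortil_1(\type)$ with initial condition determined by bottom-boundary orthogonality, namely $\sdamortil_2(t_1,\underline{t}_2(t_1)) = \sdamortil_1(t_1,\underline{t}_2(t_1))\, \underline{t}_2'(t_1)$. Setting $F(t_1,t_2) := \int_{\underline{t}_2(t_1)}^{t_2} \dens(t_1,y)\,\dd y$ (so $F(t_1,\bar{t}_2(t_1)) = \sdpdens(t_1)$), integrating the ODE in $t_2$ and applying Leibniz's rule to move $\partial_1$ outside the integral, the initial-condition contribution at the bottom boundary cancels against the Leibniz boundary term, leaving the closed form
\[
    \sdamortil_2(t_1,t_2) = -F(t_1,t_2) - \partial_1\!\left[\tfrac{1-\sdpdist(t_1)}{\sdpdens(t_1)}\, F(t_1,t_2)\right].
\]
Separately, the equi-quantile curve $\curve_q$ through $\type$ satisfies $F(t_1,\curve_q(t_1)) = q\, \sdpdens(t_1)$ with $q = F(t_1,t_2)/\sdpdens(t_1)$, so implicit differentiation gives $\curve_q'(t_1) = \bigl(q\, \sdpdens'(t_1) - \partial_1 F\bigr)/\dens(\type)$. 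Expanding the product rule in the closed form above and using $\sdamortil_1 = \tfrac{1-\sdpdist}{\sdpdens}\dens$, the ratio $\sdamortil_2/\sdamortil_1$ will simplify to $\curve_q'(t_1)$, establishing tangency.

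Next, I will verify the amortization property. Divergence density equality holds by construction (c), so it remains to check \boundary, i.e., $\amortil \cdot \normal \leq 0$ on $\typeboundary$. On the bottom boundary, orthogonality is enforced by construction. The top boundary coincides with $\curve_1$ (since $T_1 = \typespace$), so by the tangency just established, $\amortil \cdot \normal = 0$ there. On the right boundary $t_1 = \bar{t}_1$, $\sdpdist(\bar{t}_1) = 1$ forces $\sdamortil_1 \equiv 0$, and $\normal = (1,0)$, giving $\amortil \cdot \normal = 0$. On the left boundary $\normal = (-1,0)$ and $\sdamortil_1 > 0$, so $\amortil \cdot \normal \leq 0$, as required.

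Finally, tightness for uniform pricing $p \in [\underline{t}_1, \bar{t}_1]$ will follow from \autoref{lem:framework}: it suffices to show $\util(\type) = 0$ at every boundary type with strict inflow $\amortil \cdot \normal < 0$. Such strict inflow occurs only on the left boundary, where $t_1 = \underline{t}_1 \leq p$, and uniform pricing at $p$ gives $\util(\type) = 0$ for all types with $t_1 \leq p$. The main technical obstacle will be the algebraic step in the tangency paragraph: verifying that the product-rule expansion of $\partial_1[(1-\sdpdist)F/\sdpdens]$ produces exactly the combination $-F + (1-\sdpdist)\sdpdens' F/\sdpdens^2 - (1-\sdpdist)\partial_1 F/\sdpdens$ needed to match $\sdamortil_1 \cdot \curve_q'$. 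This is a bookkeeping-heavy but routine calculation that hinges on the cancellation of the $-F$ term with the unit piece of $\partial_1[(1-\sdpdist)/\sdpdens] = -1 - (1-\sdpdist)\sdpdens'/\sdpdens^2$.
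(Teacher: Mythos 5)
Your proof is correct, and the algebra you flag as the main obstacle does in fact close: writing $G = \tfrac{1-\sdpdist}{\sdpdens}F$, the Leibniz boundary term from $\partial_1$ of the lower limit $\underline{t}_2(t_1)$ produces exactly $\sdamortil_1(t_1,\underline{t}_2(t_1))\,\underline{t}_2'(t_1)$, which cancels the orthogonality initial condition and yields $\sdamortil_2 = -F - \partial_1 G$; expanding then gives $\sdamortil_2 = \tfrac{(1-\sdpdist)\sdpdens' F}{\sdpdens^2} - \tfrac{(1-\sdpdist)\partial_1 F}{\sdpdens} = \sdamortil_1 \curve_q'$ after the $-F$ term cancels against the unit piece of $\partial_1[(1-\sdpdist)/\sdpdens]$.

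Your route to tangency, however, is genuinely different from the paper's. The paper does not integrate the ODE at all; instead, it fixes $(t_1,q)$, applies the divergence theorem to the subregion $T(t_1,q)$ of types with first coordinate at least $t_1$ and lying below $\curve_q$, and shows that the inflow through the TOP boundary vanishes by accounting for the mass inside against the inflow at the LEFT boundary. Since this holds for all $t_1$ and $q$, the flux through every equi-quantile curve is zero, forcing $\amortil$ to be tangent. That argument is coordinate-free and is reused almost verbatim for the additive-preferences analogue (\autoref{l:amortil-angle-additive}); your explicit integration is more concrete, makes the closed form of $\sdamortil_2$ visible, and requires no appeal to a global flux identity, but it is specific to this boundary parametrization and would need to be redone from scratch in the additive setting. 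The remaining parts — the right/left boundary checks and the appeal to \autoref{lem:framework} for tightness — match the paper's treatment in substance.
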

\begin{proof}

Tightness follows directly from the definition of $\sdamorev_1$ (see the justification for Step~\eqref{phi1tight} of the construction).  The divergence density equality and bottom
boundary orthogonality of $\amortil$ are automatically satisfied by
Step~\eqref{step:amorev-definition-alpha2} of the construction.  Orthogonality of the right
  boundary ($t_1=\bar{t}_1$) requires that $\amortil(\bar{t}_1,t_2) \cdot (1,0) = 0$, which is $\amortilelement_1(\bar{t}_1,t_2) = 0$. This property follows directly from the definitions since $\sdamorev_1(\bar{t}_1,t_2) = \sdpamorev(\bar{t}_1) = \bar{t}_1$, and therefore $\sdamortil_1(\bar{t}_1,t_2) = (\bar{t}_1 -
\sdamorev_1(\bar{t}_1,t_2))\,\dens(\bar{t}_1,t_2) = 0$.  At the left boundary, $\amortil \cdot \normal \leq 0$ since $\sdamortil_1 \geq 0$ from definition and the normal vector is $(-1,0)$. The only remaining condition, the top boundary orthogonality, is implied by the tangency property of the lemma as follows.  The top boundary is $C_1$. Tangency of $\amortil$ to $C_1$ implies that $\amortil$ is orthogonal to the normal, which is the top boundary orthogonality requirement.  It only remains to prove the tangency property.

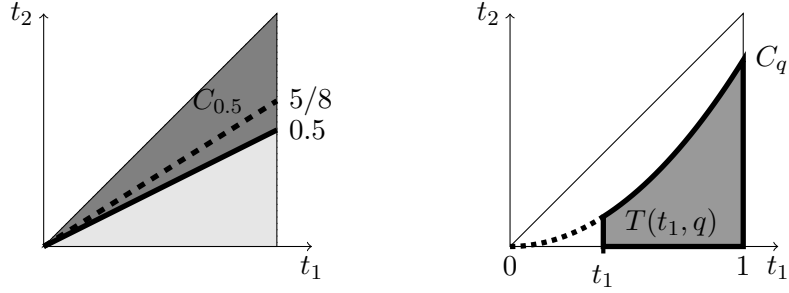
\begin{figure}
\centering
\begin{tikzpicture}[scale = 3.1, align=center]

    \draw[dotted] (0,0) -- (1,1) -- (1,0);  
    \draw[<->] (0,1) node[left]{$t_{2}$}-- (0,0) -- (1.15,0) node[below] {$t_{1}$};

\filldraw [fill=black!10] (0,0) -- (1,.5) -- (1,0) -- cycle;
\filldraw [fill=black!50] (0,0) -- (1,.5) -- (1,1) -- cycle;
\draw[line width=2] (0,0) -- (1,.5) node[right]{$0.5$};

\draw[line width = 2, dashed] (0,0) -- (1,5/8) node[right]{$5/8$};
\draw (.9,5/8) node[left]{$C_{0.5}$};

\draw (2,0) -- (3,0) -- (3,1) -- (2,0);
    \draw[<->] (2,1) node[left]{$t_{2}$}-- (2,0) -- (3.15,0) node[below] {$t_{1}$};

\draw (2,0) node[below]{$0$};
\draw (3,0) node[below]{$1$};
\draw[ultra thick] (2.4,0) -- (2.4,-0.03) node[below]{$t_1$};
\filldraw [line width = 2pt, fill=black!40] (2.4,0.096)  -- plot[id=f3,domain=0.4:1]  (2+\x,{\x*\x*0.8}) node[right]{$C_q$}-- (3,0) -- (2.4,0) -- cycle;
\draw[dotted, line width = 2pt] plot[id=f3,domain=0:.4]  (2+\x,{\x*\x*0.8});
\draw (2.45,0.1) node[right]{$T(t_1,q)$};
\end{tikzpicture}
\caption{\footnotesize (a) The density in the darker region is twice the density in lighter region.   For example, $C_{0.5}(t_1) = 5t_1/8$, meaning given $t_1$, the probability that $t_2\leq 5t_1/8$ is $1/2$. (b) $T(t_1,q)$ is the set of types below $C_q$ and to the right of $t_1$. The four curves that define the boundary of $T(t_1,q)$ are $\{\text{TOP, RIGHT, BOTTOM, LEFT}\}(t_1,q)$.  For simplicity the picture assumes $T$ is the triangle defined on (0,0), (1,0), and (1,1).}
\label{fig:trapezoidalquantile}\end{figure}

The strategy for the proof of the tangency property is as follows.  We fix
$t_1$ and $q$ and apply the divergence theorem to $\amortil$ on
the subspace of type space to the right of $t_1$ and below $C_q$.\footnote{The divergence theorem for vector field $\vectorfield$ is $\int_{\typeprelim\in T} (\nabla \cdot \vectorfield)(\typeprelim) \; \dd \typeprelim =  \int_{\typeprelim \in \partial T} (\vectorfield \cdot \boldsymbol{\eta})(\typeprelim) \; \dd \typeprelim.$} More formally, divergence theorem is applied to the set of types $T(t_1,q) = \{\type'\in \typespace|t'_1 \geq t_1; \dist(t_2|t_1) \leq q\}$ (see \autoref{fig:trapezoidalquantile}). The divergence theorem equates the integral of the
orthogonal magnitude of vector field $\amortil$ on the boundary of the subspace to the
integral of its divergence within the subspace.  As the upper boundary
of this subspace is $C_q$, one term in this
equality is the integral of $\amortil(\type')$ with the upward
orthogonal vector to $C_q$ at $\type'$.  Differentiating this integral with respect to $t_1$ gives the desired quantity.
\begin{align}
&\hspace{-6.8ex}\int_{\type' \in \text{TOP}(t_1,q)}\normal(\type') \cdot \amortil(\type') \, \dd \type' \nonumber\\
&= \int_{\type' \in T(t_1,q)} \nabla \cdot \amortil(\type') \, \dd \type'
-\int_{\type' \in \{\text{RIGHT,BOTTOM,LEFT}\}(t_1,q)}  \normal(\type') \cdot \amortil(\type') \, \dd \type'.\label{eq:div}\\
\intertext{Using divergence density equality and boundary orthogonality the right hand side becomes}
&= -\int_{\type' \in T(t_1,q)} \dens(\type') \, \dd \type'
-\int_{\type' \in \{\text{LEFT}\}(t_1,q)}  \normal(\type') \cdot \amortil(\type') \, \dd \type' \nonumber\\
&= -q(1-\sdpdist(t_1))
-\int_{\type' \in \{\text{LEFT}\}(\quantile)}  \normal(\type') \cdot \amortil(\type') \, \dd \type' \nonumber
\end{align}
\noindent where the last equality followed directly from definition of $T(t_1,q)$. By definition of $\amortil$, and since normal $\normal$ at the left boundary is $(-1,0)$,
\begin{align*}
\int_{\type' \in \{\text{LEFT}\}(t_1,q)}  \normal(\type') \cdot \amortil(\type') \, \dd \type'&=-\frac{1-\sdpdist(t_1)}{\sdpdens(t_1)}\int_{t'_2\leq C_q(t_1)} \dens(t_1,t'_2)\; \dd t'_2\\
&= -\frac{1-\sdpdist(t_1)}{\sdpdens(t_1)} q \sdpdens(t_1) \\
&= - (1-\sdpdist(t_1)) q.
\end{align*}
As a result, the right-hand side of equation~\eqref{eq:div} sums to zero, and we have
\begin{align*}
\int_{\type' \in \text{TOP}(t_1,q)}\normal(\type') \cdot \amortil(\type') \, \dd \type' = 0.
\end{align*}
\noindent Since the above equation must hold for all $t_1$ and $q$, we conclude that $\amortil$ is tangent to the equi-quantile curve at any type.
\end{proof}

The following lemma gives sufficient conditions for uniform pricing to be the pointwise maximizer of virtual surplus given any cost $c$. These conditions imply that whenever $\sdamorev_1(\type)\geq c$ then $\sdamorev_1(\type)\geq\sdamorev_2(\type)$, and that $\sdamorev_1(\type)\geq c$ if and only if $t_1$ is greater than a certain threshold (implied by monotonicity of $\sdamorev_1(\type)\geq c$).

\begin{lemma}
\label{l:2d-ext=>uniform-pricing}
The allocation of a uniform pricing mechanism optimizes virtual surplus pointwise with respect to $\amorev = \type - \amortil/\dens$ of \autoref{d:2d-extension} and any
non-negative service cost $\cost$ if the equi-quantile curves are \thetamonotone and $\sdamorev_1(\type)$ is monotone non-decreasing in $\type$.
\end{lemma}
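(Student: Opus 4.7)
The plan is to verify the two pointwise-optimality conditions using a single inequality between $\sdamorev_1$ and $\sdamorev_2$. Since $\sdamorev_1(\type) = \sdpamorev(t_1)$ depends only on $t_1$ and is monotone non-decreasing by hypothesis, the set $\{\type : \sdamorev_1(\type) \geq c\}$ has the form $\{\type : t_1 \geq p\}$ for the threshold $p := \inf\{t_1 : \sdpamorev(t_1) \geq c\}$ (taken to be $+\infty$ if the set is empty). The uniform-pricing allocation at price $p$ is $(1,0)$ when $t_1 \geq p$ and $\bfzero$ otherwise (recall we condition on $t_1 \geq t_2$); comparing with the pointwise maximizer of $x_1(\sdamorev_1(\type) - c) + x_2(\sdamorev_2(\type) - c)$ over $X$, it suffices to show (i) $\sdamorev_1(\type) \geq c$ implies $\sdamorev_1(\type) \geq \sdamorev_2(\type)$, and (ii) $\sdamorev_1(\type) < c$ implies $\sdamorev_2(\type) \leq c$.

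The key step, which I expect to be the main obstacle, is deriving the bound $\sdamorev_2(\type) \leq (t_2/t_1)\,\sdamorev_1(\type)$. By \autoref{lem:strongamort}, $\amortil(\type)$ is tangent to the equi-quantile curve $\curve_q$ through $\type$. Since $\sdamortil_1(\type) = \frac{1-\sdpdist(t_1)}{\sdpdens(t_1)}\,\dens(\type) \geq 0$ by definition, writing the tangent direction as $(1, \curve_q'(t_1))$ gives $\amortil(\type) = \sdamortil_1(\type)\,(1, \curve_q'(t_1))$ and hence $\sdamortil_2(\type)/\dens(\type) = R(t_1)\,\curve_q'(t_1)$ for $R(t_1) := (1-\sdpdist(t_1))/\sdpdens(t_1)$. \Thetamonotonicity of $\curve_q$ gives $\curve_q'(t_1) \geq \curve_q(t_1)/t_1 = t_2/t_1 \geq 0$ at any $\type$ on $\curve_q$, so
\begin{align*}
\sdamorev_2(\type) \;=\; t_2 - R(t_1)\,\curve_q'(t_1) \;\leq\; t_2 - R(t_1)\,\tfrac{t_2}{t_1} \;=\; \tfrac{t_2}{t_1}\bigl(t_1 - R(t_1)\bigr) \;=\; \tfrac{t_2}{t_1}\,\sdamorev_1(\type).
\end{align*}

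Setting $\theta_\type := t_2/t_1 \in [0,1]$, the remaining case analysis is immediate. For (i), $\sdamorev_1(\type) \geq c \geq 0$ gives $\sdamorev_2(\type) \leq \theta_\type \sdamorev_1(\type) \leq \sdamorev_1(\type)$. For (ii), if $0 \leq \sdamorev_1(\type) < c$ then $\sdamorev_2(\type) \leq \sdamorev_1(\type) < c$, while if $\sdamorev_1(\type) < 0$ then $\sdamorev_2(\type) \leq \theta_\type \sdamorev_1(\type) \leq 0 \leq c$. The degenerate case $\sdamortil_1(\type) = 0$ (e.g., at $t_1 = \bar{t}_1$, where $R(t_1) = 0$) is handled by the same tangent parameterization, which forces $\amortil(\type) = \bfzero$ and collapses the inequality to $\sdamorev_2(\type) = t_2 \leq t_1 = \sdamorev_1(\type)$. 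Hence uniform pricing at $p$ is a pointwise virtual-surplus maximizer, as required.
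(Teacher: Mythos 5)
Your proof is correct and follows essentially the same approach as the paper's: establish the tangency-and-\thetamonotonicity inequality $\sdamorev_2(\type) \leq (t_2/t_1)\,\sdamorev_1(\type)$, then run the same case analysis on the sign of $\sdamorev_1(\type)-c$ using monotonicity of $\sdamorev_1$ in $t_1$. You simply make the tangent-direction parameterization $(1,\curve_q'(t_1))$ and the $t_1=\bar{t}_1$ degenerate case explicit, where the paper's write-up is more terse.
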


\begin{proof}
Tangency of $\alpha$ to the equi-quantile curves (\autoref{lem:strongamort}) implies that $\frac{t_2}{t_1} \,
\amortilelement_1(t_1,t_2) - \amortilelement_2(t_1,t_2) \leq 0$ if all equi-quantile curves are \thetamonotone. From the assumption $\frac{t_2}{t_1} \, \amortilelement_1(t_1,t_2) -
\amortilelement_2(t_1,t_2) \leq 0$ and \autoref{d:2d-extension} we
have
\begin{align*}
\frac{t_2}{t_1} \, \sdamorev_1(\type) 
  &= \frac{t_2}{t_1}\,\big(t_1 - \frac{\sdamortil_1(\type)}{\dens(\type)}\big) 
  = t_2 - \frac{t_2}{t_1}\cdot \frac{\sdamortil_1(\type)}{\dens(\type)}
  \geq t_2 - \frac{\sdamortil_2(\type)}{\dens(\type)} 
  = \sdamorev_2(\type).
\end{align*}
Thus, for $\type$ with $\sdamorev_1(\type) \geq \cost$,
$\sdamorev_1(\type) \geq \sdamorev_2(\type)$ and pointwise virtual surplus
maximization serves the agent outcome~1.  Since $\sdamorev_1(\type)$ is
a function only of $t_1$ (\autoref{d:2d-extension}), its monotonicity implies that there is a smallest $t_1$ such that all greater types are served. Also, if $\sdamorev_1(\type) \leq c$, again the above calculation implies that $\sdamorev_2(\type) \leq c$ and therefore the type is not served.
This allocation is the allocation of a uniform pricing.
\end{proof}

\begin{proof}[Proof of \autoref{t:uniform-pricing-strong-amortization-general}]  We show that $\amorev = \type - \amortil/\dens$ of \autoref{d:2d-extension} is a virtual value function for a uniform pricing and invoke \autoref{prop:vvf=>IC+opt}. \autoref{lem:strongamort} showed that $\amorev$ is a tight amortization for any uniform pricing.\footnote{Special attention is needed in case that the left boundary is a singleton, since in that case $\sdpdens(\underline{t}_1)) = 0$ and $\sdamortil_1$ is unbounded.  In this case our analysis showed that $\amortil \cdot \normal = 0$ everywhere except possibly at $(\underline{t}_1,t_2(\underline{t}_1))$.  Divergence theorem states that \begin{align} 
\label{eq:boundary-integral}
\int_{\type \in \typeboundary}(\amortil \cdot \normal)(\type)\,
\dd\type &= - \int_{\type \in \typespace} \dens(\type)\,\dd \type = -1,
\end{align}
which implies that $\amortil \cdot \normal$ is a negative Dirac delta at $(\underline{t}_1,t_2(\underline{t}_1))$.  The integral of  $\util(\amortil \cdot \normal)$ over the boundary is thus $-\util(\underline{t}_1,t_2(\underline{t}_1))=0$.} \autoref{l:2d-ext=>uniform-pricing} showed that the allocation of a uniform pricing pointwise maximizes virtual surplus with respect to $\amorev$.
\end{proof}

\subsection{Extensions}\label{s:extensions}
This section contains extensions of \autoref{t:uniform-pricing-strong-amortization-general} to $m\geq 2$ outcomes, $n\geq 1$ agents, and distributions where the favorite-outcome projection may not be regular.

First, \autoref{t:uniform-pricing-strong-amortization-general} can be extended to the case of more than two outcomes and more than one agent.  The positive correlation property becomes a sequential positive correlation where the ratio of the value of any outcome to the favorite outcome is positively correlated with the value of favorite outcome, conditioned on the draws of the lower indexed outcomes. A distribution over types $[0,1]^m$ is \quasisymmetric if the distribution of $\val = \max_i t_i$ stays the same conditioned on any outcome having the highest value.   For $j \neq i$, define $q^i_j(\type)$ to be the quantile of the distribution of $t_j$ conditioned on $i$ being the favorite outcome, and conditioned on the values $\type_{<j} = (t_1,\ldots,t_{j-1})$ of the lower indexed outcomes.  Formally, $q^i_j(\type) = \Prx[\type']{t'_j\leq t_j|\type'_{< j}=\type_{< j}, t'_i=t_i=\max_k t'_k}$.  Define $\dist(\theta_j|t_i,i, \boldsymbol{q}_{<j}) = \Prx[\type']{t'_j/t'_i\leq \theta_j | \boldsymbol{q}_{<j} = \boldsymbol{q}^i_{<j}(\type'), t'_i=t_i=\max_k t'_k}$ to be the distribution of the value ratio of $j$th to favorite outcome, conditioned on $i$ being the favorite outcome and given vector $\boldsymbol{q}_{<j}$ of the quantiles of the lower indexed outcomes.  In the multi-agent problem with a configurable item, a single item with $m$ configurations is to be allocated to at most one of the agents.\footnote{We assume that the item has the same possible configurations for each agent.  This can be achieved by defining the set of configurations to be the union over the configurations of all agents.}

\begin{theorem}
\label{t:uniform-pricing-strong-amortization-many-agents-items} 
A favorite-outcome projection mechanism is optimal for an item with $m\geq1$ configurations, multiple independent agents, and any service cost $\cost \geq 0$, if the distribution of each agent is \quasisymmetric and (a) the favorite-outcome projection has monotone
non-decreasing amortization $\phi_{\max}(v) = v -
\frac{1-\sdpdist(v)}{\sdpdens(v)}$ and (b) $\dist(\theta_j|\val,i, \boldsymbol{q}_{<j})$ is monotone non-increasing in $\val$ for all $i$, $j$, $\theta_j$, and $\boldsymbol{q}_{<j}$.
\end{theorem}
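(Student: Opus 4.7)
The plan is to extend the virtual-value construction of \autoref{d:2d-extension} from two outcomes to $m$ outcomes via an iterative scheme over coordinates, then to handle multiple agents by a standard Myersonian reduction once each agent's single-agent problem has been shown optimal. As in \autoref{t:uniform-pricing-strong-amortization-general}, by \quasisymmetry it suffices to condition on a fixed favorite outcome (say outcome~1) and construct, on that region of type space, a vector field $\amortil$ satisfying divergence density equality, boundary orthogonality everywhere except at the left boundary $t_1 = \underline{t}_1$, and whose associated amortization $\amorev = \type - \amortil/\dens$ has its pointwise virtual surplus maximized by a uniform pricing for every service cost $\cost \geq 0$.

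The iterative construction proceeds as follows. Step~1 is unchanged: tightness for every uniform pricing and the single-dimensional-projection requirement force $\phi_1(\type) = \sdpamorev(t_1)$, and hence $\sdamortil_1(\type) = \tfrac{1-\sdpdist(t_1)}{\sdpdens(t_1)} \dens(\type)$. For $j = 2, \ldots, m$, I would define $\sdamortil_j$ inductively by choosing the path direction at each point to lie tangent to the \emph{equi-quantile surface} obtained by fixing the first $j-1$ conditional quantiles $\boldsymbol{q}_{<j}$. Concretely, on each slice with $(t_1,\ldots,t_{j-1})$ fixed, I would use boundary orthogonality on $t_j = \underline{t}_j(\type_{<j})$ and integrate the partial divergence relation
\begin{equation*}
\partial_j \sdamortil_j(\type) \;=\; -\dens(\type) \;-\; \sum_{k<j}\partial_k \sdamortil_k(\type)
\end{equation*}
in $t_j$, so that $\sdamortil_j$ mimics the role of $\sdamortil_2$ in the two-dimensional construction but on the ``reduced'' conditional distribution. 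An application of the divergence theorem analogous to \autoref{lem:strongamort}, applied to the region of type space that lies above $t_1$ and is nested inside the equi-quantile surfaces indexed by $\boldsymbol{q}_{<j}$, would then establish the generalized tangency property: $\amortil(\type)$ lies in the tangent subspace of the family of equi-quantile surfaces passing through $\type$.

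Given tangency, the comparison $\phi_1(\type) \geq \phi_j(\type)$ whenever $\phi_1(\type) \geq \cost$ would be derived exactly as in \autoref{l:2d-ext=>uniform-pricing}: the monotonicity of $\dist(\theta_j \mid \val, i, \boldsymbol{q}_{<j})$ in $\val$ for each $i$, $j$, $\boldsymbol{q}_{<j}$ is the $m$-dimensional analogue of \thetamonotonicity of the equi-quantile curves, and it forces $t_j \sdamortil_1(\type) \leq t_1 \sdamortil_j(\type)$, from which $\phi_j(\type) \leq \tfrac{t_j}{t_1}\phi_1(\type) \leq \phi_1(\type)$ follows. Monotonicity of $\phi_{\max}$ then guarantees that pointwise virtual surplus maximization is given by a threshold on $t_1$, i.e., a uniform pricing. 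Together with tightness for every uniform pricing (which is built into Step~1), \autoref{prop:vvf=>IC+opt} yields optimality in the single-agent problem.

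The multi-agent extension to a configurable item reduces to the single-agent construction by the usual Myersonian argument. Independence across agents implies that the expected revenue of any IC/IR mechanism is at most the expectation of the sum of the agents' amortizations applied to their allocations. The feasibility constraint (at most one agent served, in one configuration) means that the pointwise maximizer picks, for each profile, the agent $\agent$ maximizing $\phi_{\max}^\agent(\val^\agent) - \cost$ when this quantity is positive and allocates to him outcome~$\arg\max_i t^\agent_i$; this per-agent allocation coincides with the favorite-outcome uniform pricing allocation for each agent. The main obstacle I anticipate is verifying that the inductive choice of $\sdamortil_j$ is globally well-defined on $\typespace$---in particular, that boundary orthogonality on all boundaries other than the left one is preserved by the iteration, which requires checking that the top boundary corresponds to the extreme equi-quantile surface and inherits tangency from the construction, exactly as in the two-dimensional case.
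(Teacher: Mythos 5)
Your high-level plan---condition on the favorite outcome by \quasisymmetry, extend the two-dimensional construction so that $\amortil$ follows the equi-quantile structure, use the correlation hypothesis to obtain $\amorevelement_j(\type)\leq\tfrac{t_j}{t_1}\amorevelement_1(\type)\leq\amorevelement_1(\type)$, and then reduce the multi-agent configurable-item problem to the single-agent one---matches the paper's approach. But the concrete iterative construction you propose is not well-posed. If $\partial_j\sdamortil_j=-\dens-\sum_{k<j}\partial_k\sdamortil_k$ is imposed for every $j=2,\ldots,m$, subtracting the relation at $j$ from the one at $j+1$ forces $\partial_{j+1}\sdamortil_{j+1}=0$, so with your bottom-boundary orthogonality condition $\sdamortil_j\equiv0$ for all $j\geq3$. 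That would give $\amorevelement_j(\type)=t_j$ for $j\geq3$, which exceeds $\amorevelement_1(\type)<t_1$ whenever $t_j$ is near $t_1$, so pointwise virtual surplus maximization would \emph{not} yield a favorite-outcome projection; moreover the resulting field is confined to the $(t_1,t_2)$ plane, contradicting the tangency property you later claim to derive from the divergence theorem. The two halves of your construction (the chain of partial-divergence ODEs and the tangency property) are mutually inconsistent, and only the latter is correct.

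The paper avoids this by pinning down $\amortil$ directly from tangency to the \emph{one-dimensional} equi-quantile curves $\curve^{\boldsymbol{q}}(t_1)$, obtained by fixing all $m-1$ conditional quantiles $\boldsymbol{q}$ at once rather than via a flag of partially-fixed surfaces: $\sdamortil_1(\type)=\tfrac{1-\sdpdist(t_1)}{\sdpdens(t_1)}\dens(\type)$ and $\sdamortil_i(\type)=\sdamortil_1(\type)\,\partial_{t_1}\curve_i^{\boldsymbol{q}}(t_1)$ for $i\geq2$, i.e., $\amortil$ is the curve's tangent vector scaled so its first coordinate matches the single-dimensional projection. Note that in the two-dimensional case the single divergence equation happens to pin down $\sdamortil_2$, so the ODE and tangency descriptions coincide; for $m\geq3$ there are $m-2$ extra degrees of freedom and only the tangency description closes the system. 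Condition~(b) of the theorem is then precisely \thetamonotonicity of each $\curve_i^{\boldsymbol{q}}$, giving $\partial_{t_1}\curve_i^{\boldsymbol{q}}(t_1)\geq t_i/t_1$ and hence $t_1\sdamortil_i\geq t_i\sdamortil_1$, which yields the needed inequality $\amorevelement_i\leq\tfrac{t_i}{t_1}\amorevelement_1$; divergence density equality and boundary orthogonality of this field follow by the change of variables to $(t_1,\boldsymbol{q})$ coordinates, analogous to \autoref{lem:strongamort}. The rest of your argument, including the multi-agent step, is sound.
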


The proof of the above theorem is in \aref{a:proofmostgeneral}.  From \citet{M81} we know that if a favorite-outcome projection mechanism is optimal, the optimum mechanism is to allocate the item to the agent with highest $\phi_{\max}(v)$ (no ironing is required as we are assuming regularity), and let the agent choose its favorite configuration.  With a single agent, the configurable item setting is identical to the original model with multiple outcomes.  The above theorem implies it is optimal to offer a single agent a price for its choice of outcome, generalizing \autoref{t:uniform-pricing-strong-amortization-general} to $m\geq 2$ outcomes.  A special case of the correlation above is when the ratios are independent of each other conditioned on the value of the favorite outcome, that is, each $\theta_j = t_j/\val$ for $j\neq i$ is drawn independently of others from a conditional distribution $\dist(\theta|\val,i)$ that is monotone in $\val$.

The second extension removes the regularity assumption of \autoref{t:uniform-pricing-strong-amortization-many-agents-items} by assuming a slightly stronger correlation assumption, and designs a virtual value function with a simple sweeping procedure  in a single dimension (proof in \aref{a:ironing}).  In particular, we only iron the canonical amortization $\amorev$ along the equi-quantile curves.

\begin{theorem}
\label{thm:unitdemandironing}
A favorite-outcome projection mechanism is optimal for an item with $m=2$ configurations, multiple independent agents, and any service cost $\cost \geq 0$, if the distribution of each agent is \quasisymmetric with convex equi-quantile curves.
\end{theorem}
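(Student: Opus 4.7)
The plan is to construct a virtual value function $\virt$ by combining the canonical amortization of Definition~\ref{d:2d-extension} with a one-dimensional Myerson-style ironing of the favorite-outcome projection, and then invoke Proposition~\ref{prop:vvf=>IC+opt}. I focus on the single-agent case; the multi-agent case follows from the same reduction used in Theorem~\ref{t:uniform-pricing-strong-amortization-many-agents-items}, allocating the item to the agent with the highest positive ironed favorite-outcome virtual value and letting that agent pick his favorite configuration.

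First, take the canonical amortization $\amorev$ of Definition~\ref{d:2d-extension}, which is well-defined regardless of regularity but whose pointwise virtual surplus maximizer need not be a uniform pricing when $\sdpamorev$ is non-monotone. Apply Myerson's ironing to the single-dimensional amortization $\sdpamorev$ to obtain a monotone non-decreasing $\bar{\phi}_{\max}$, and then re-run the construction of Definition~\ref{d:2d-extension} with $\bar{\phi}_{\max}$ in place of $\sdpamorev$: set $\bar{\phi}_1(\type)=\bar{\phi}_{\max}(t_1)$, $\bar{\alpha}_1=(t_1-\bar{\phi}_1)\,\dens$, derive $\bar{\alpha}_2$ from divergence density equality together with bottom boundary orthogonality, and set $\bar{\phi}_2=t_2-\bar{\alpha}_2/\dens$. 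The candidate virtual value function is $\virt=(\bar{\phi}_1,\bar{\phi}_2)$.

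Two facts must then be checked. First, that uniform pricing pointwise maximizes virtual surplus against $\virt$ for any service cost $\cost\geq0$: convexity of the equi-quantile curves implies the chord slope $\curve_q(t_1)/t_1$ is non-decreasing in $t_1$ (ratio-monotonicity), and $\bar{\phi}_1$ is monotone in $t_1$ by the ironing construction, so the hypotheses of Lemma~\ref{l:2d-ext=>uniform-pricing} are satisfied and a uniform pricing results. Second, that $\virt$ is an amortization: by Proposition~\ref{prop:amorev-to-virt} it suffices to verify $\expect{\virt(\type)\cdot\hat{\alloc}(\type)}\geq\expect{\amorev(\type)\cdot\hat{\alloc}(\type)}$ for every incentive compatible mechanism $\hat{\alloc}$.

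The main obstacle is this amortization check, since Myerson's ironing is most naturally justified through a one-dimensional charging argument but we are operating in two dimensions. The approach I would take is to decompose the expectation along the equi-quantile curves, which by Lemma~\ref{lem:strongamort} are exactly the integral curves of $\amortil$. This reduces the global inequality to a family of one-dimensional Myerson ironing inequalities, one per curve $\curve_q$, each of which requires that the directional derivative of the utility $\util$ along $\curve_q$—the ``allocation projected onto the curve''—is monotone in the parameter of the curve. Convexity of the equi-quantile curves is exactly what delivers this: parameterizing $\curve_q$ by $t_1$, the tangent turns monotonically as $t_1$ grows, and combined with convexity of $\util$ (which is guaranteed by IC via Lemma~\ref{l:gradutil=alloc}) this produces the needed monotonicity of the projected allocation. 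With this one-dimensional monotonicity in hand, the standard Myerson ironing inequality applies on each curve and integrates over $q$ to the desired two-dimensional amortization inequality, completing the proof.
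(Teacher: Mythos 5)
There is a genuine gap in your construction of $\virtelement_2$. You propose to re-run Definition~\ref{d:2d-extension} with the ironed $\bar{\phi}_{\max}$ in place of $\sdpamorev$, deriving $\bar{\alpha}_2$ from divergence density equality. This does not work, and it is not what the paper does. If $\bar{\phi}_{\max}\neq\sdpamorev$ on a set of positive measure (the only case where ironing matters), the divergence-theorem argument of Lemma~\ref{lem:strongamort} gives
$\int_{\text{TOP}(t_1,q)}\normal\cdot\bar{\boldsymbol{\alpha}}\,\dd\type' = q\,\sdpdens(t_1)\bigl[\sdpamorev(t_1)-\bar{\phi}_{\max}(t_1)\bigr]\neq 0$,
so your $\bar{\boldsymbol{\alpha}}$ is not tangent to the equi-quantile curves and does not satisfy top boundary orthogonality. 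Because Lemma~\ref{l:2d-ext=>uniform-pricing} is proved \emph{via} that tangency, you cannot invoke it for your $\virt$; you would first need to re-establish $\theta\,\virtelement_1\geq\virtelement_2$, and it is not available by your construction. Worse, your amortization check breaks down at the same place: the vector $\virt-\amorev$ you produce is not aligned with the tangent directions of the equi-quantile curves, so the dot product $\hat{\alloc}\cdot(\virt-\amorev)$ mixes the along-curve and across-curve components of the allocation, and the ``one Myerson inequality per curve'' reduction you sketch does not close.

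The step you are missing is the paper's non-canonical choice of the second component. In quantile coordinates $(q_1,q_2)$, the paper sets $\virtelement^Q_1$ by ironing $\amorevelement^Q_1$ in $q_1$ as you do, but then defines $\virtelement^Q_2$ \emph{not} by divergence density equality but by the requirement that the running integral $\int_{q'_1\geq q_1}(\virt^Q-\amorev^Q)(q'_1,q_2)\,\dd q'_1$ be tangent to the equi-quantile curve at $(q_1,q_2)$. This is precisely what makes the integration-by-parts-in-$q_1$ argument work: after integrating by parts, each integrand is a dot product of $\frac{\dd}{\dd q_1}\sagentmech^Q$ with that running integral, and IC (convexity of utility via Lemma~\ref{l:gradutil=alloc}) makes this non-negative \emph{because} the vector is tangent to the curve. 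Your high-level intuition---decompose along equi-quantile curves, do a one-dimensional ironing on each---is the right picture, and it is essentially what the quantile-space transformation implements; but the design of $\virtelement^Q_2$ via the integral-tangency condition (rather than any divergence constraint) is the crux, and without it neither the pointwise-optimality check nor the amortization inequality goes through. The subsequent step of verifying $\theta\,\virtelement^Q_1\geq\virtelement^Q_2$ (Lemma~\ref{ironingcomparevirt}), which is where convexity of the equi-quantile curves is actually used in the paper together with the sign condition $\int_{q'_1\geq q_1}(\virtelement^Q_1-\amorevelement^Q_1)\,\dd q'_1\geq0$, is also absent from your sketch.
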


From \citet{M81}, optimality of a favorite-outcome projection mechanism implies optimality of allocating to the agent with highest ironed virtual value. \autoref{fig:convexitymonotonicity} depicts how convexity of equi-quantile curves is stronger than the stochastic dominance requirement of \autoref{t:uniform-pricing-strong-amortization-general}. Convexity states that the line connecting any two points, namely $(0,0)$ and $(t_1,t_1\theta)$, lies above the curve for all $t'_1\leq t_1$, and below the curve for all $t'_1\geq t_1$. As a result, for any $t'_1\geq t_1$, $\dist(\theta|t'_1)\leq \dist(\theta|t_1)$, and the other direction holds for $t'_1\leq t_1$ (see \autoref{fig:convexitymonotonicity}).

\begin{figure}
\begin{center}
    \begin{tikzpicture}[domain=0:3, scale=2.7, thick]    
    \draw[dotted] (0,0) -- (1,1) -- (1,0);  
    \draw[<->] (0,1.2) node[left]{$t_{2}$}-- (0,0) -- (1.25,0) node[below] {$t_{1}$};
    \draw[line width=2] plot [smooth] coordinates {(0,0) (.28,.08) (.37,.11) (.54,.2) (.7,.35) (1,.7)} node[right]{$q$};
    \draw[dashed] (0,0) -- (1,0.5) node[right]{$\theta$};
    \draw[dotted] (.7,.35) -- (0.7,0);
    \draw[thick] (0.7,.0) -- (0.7,-0.03) node[below]{$t_1$};
                            \draw (.5,-.2) node[below]{(a) convex equi-quantile curve};

    \draw[dotted] (2.5,0) -- (3.5,1) -- (3.5,0);  
    \draw[<->] (2.5,1.2) node[left]{$t_{2}$}-- (2.5,0) -- (3.75,0) node[below] {$t_{1}$};
    \draw[dashed] (2.5,0) -- (3.5,.75);
    
    \draw[line width=2] (2.5,0) -- (2.85,.1) -- (3.1,.45) -- (3.5,.75);
                                \draw (3,-.2) node[below]{(a) non-convex monotone equi-quantile curves};
\end{tikzpicture}
\caption{\footnotesize The connection between convexity and \thetamonotonicity of equi-quantile curves. (a) Convexity implies \thetamonotonicity. (b) \thetamonotonicity does not imply convexity.}
\label{fig:convexitymonotonicity}
\end{center}
\end{figure}

\section{Grand Bundle Pricing for Additive Preferences}\label{a:add bundle}
In single-agent multi-product settings with free disposal (i.e., value for a set of items does not decrease as more items are added), optimality of a favorite-outcome projection mechanism is equivalent to optimality of posting a single price for the grand bundle of items.  Thus, \autoref{t:uniform-pricing-strong-amortization-many-agents-items} can be used to obtain conditions for optimality of grand bundle pricing.  For example, in the case of two items, when the value for the bundle is $\val$ and value for individual items are $\val \delta_1$ and $\val \delta_2$,  \autoref{t:uniform-pricing-strong-amortization-many-agents-items} identifies a sufficient positive correlation condition.  Note that the theorem does not require any structure on values, such as additivity  (value for a bundle is the sum of the values of items in it) or super- or sub-additivity, other than free disposal.  If the preference is indeed additive, we have $\delta_2 = 1-\delta_1$, and \autoref{t:uniform-pricing-strong-amortization-many-agents-items} requires that $\delta_1$ be both positively and negatively correlated with $\val$.  The only admissible case is independence.\footnote{Let $t_1=\val$ be the value for the bundle, and $t_2=\delta \val$ and $t_3 = (1-\delta)\val$ the values for the two items.  Let $\delta(q,\val)$ be the inverse of the quantile mapping, i.e., $\Prx{\delta\leq \delta(q,\val)|\val} = q$.  \autoref{t:uniform-pricing-strong-amortization-many-agents-items} demands that $\delta(q,\val)$ be monotone non-decreasing and $\dist(1-\delta\leq \theta_2|\val, \delta = \delta(q,\val))$ be monotone non-increasing in $\val$ for all $q, \theta_2$.  The only possible case is independence of $\val$ and $\delta$, that is, $\delta(q,\val)$ is a constant.}  In this section we apply the framework of \autoref{s:amortized} to prove optimality of grand bundle pricing for additive preferences, and obtain conditions of optimality that are more permissive than  independence by constructing a virtual value function $\virt$ from a canonical amortization $\amorev$ that is tight for any grand bundle pricing and is constructed to satisfy conditions of \autoref{lem:framework}.  In this section we consider a single agent, $m=2$ items.  As discussed in \autoref{sec:reverseengineering} and similar to \autoref{s:uniform}, we use a class of cost functions to restrict the admissible amortizations.  In particular,  we assume that the cost of an allocation $\alloc\in [0,1]^2$ is $\cost(\alloc) = \cost \max(x_1,x_2)$ for a $\cost \geq 0$.

Similar to \autoref{s:uniform}, we first study a family of instances with perfect correlation to obtain necessary conditions of optimality.  In particular,  let $\sumdist$ be a distribution over value $s$ for the bundle (in the case of two items we refer to the grand bundle simply as the bundle), and $\theta(s)$ be the ratio of the value of item~2 to item~1 when value for the bundle is $s$, that is, value for item~1 is $t_1 = s/(1+\theta)$, and value for item~2 is $t_2 = \theta s/(1+\theta)$.\footnote{Because of the additivity structure imposed on preferences, two parameters are sufficient to define values for three outcomes.  For example, $t_1$ and $t_2$ define the value for the bundle $s=t_1+t_2$.  Alternatively, $s$ and $\theta$ define the value for individual items.} The following theorem shows that if $\theta(s)$ is not monotone non-increasing in $s$, then bundling is not optimal for some distribution $\sumdist$.  The proof is similar to \autoref{thm:onlyif} and is omitted.

\begin{theorem}\label{thm:onlyifadditive}
If $\theta(s)$ is not monotone non-increasing in $s$, then there exists a regular distribution $\sumdist$ over $s$ such that grand bundle pricing is not optimal for the \completecorr instance jointly defined by $\sumdist$ and $\theta(\cdot)$ and with zero costs.
\end{theorem}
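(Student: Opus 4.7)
The plan is to mirror the construction in the proof of \autoref{thm:onlyif} for the unit-demand case. Since $\theta(s)$ is not monotone non-increasing and the curve is parameterized by $t_1(s) = s/(1+\theta(s))$, $t_2(s) = \theta(s) s/(1+\theta(s))$, pick a point $p$ in the interior of the bundle-value domain at which $\theta'(p) > 0$; because $t_1'(s) + t_2'(s) = 1$, we may additionally arrange that $t_1'(p) > 0$ and $t_2'(p) > 0$. Choose any regular distribution $\sumdist$ for which $p$ maximizes $s(1-\sumdist(s))$ (e.g., $\sumdist$ uniform on $[0,2p]$, whose virtual value is $2s-2p$). Then the best grand bundle pricing posts price $p$ and collects revenue $p(1-\sumdist(p))$.

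The main step is to construct a strict improvement by supplementing the bundle offer at price $p$ with an offer to buy item~1 alone at the discounted price $q_1 = t_1(p) - \epsilon$ for small $\epsilon > 0$. A type $(t_1,t_2)$ prefers item~1 iff $t_1 \geq q_1$ and $t_2 \leq p - q_1 = t_2(p) + \epsilon$; prefers the bundle iff $s \geq p$ and $t_2 > t_2(p) + \epsilon$; and otherwise takes nothing. Along the curve, the only types whose decisions change have $s$ near $p$: those in $[p - \epsilon/t_1'(p) + o(\epsilon),\; p]$ switch from nothing to buying item~1 (gain $\approx q_1 \approx t_1(p)$ per type), while those in $[p,\; p + \epsilon/t_2'(p) + o(\epsilon)]$ switch from the bundle to item~1 (loss $\approx p - q_1 \approx t_2(p)$ per type).

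A first-order calculation gives net revenue change
\begin{align*}
\sumdens(p)\,\epsilon\,\Bigl[\frac{t_1(p)}{t_1'(p)} - \frac{t_2(p)}{t_2'(p)}\Bigr] \;=\; \sumdens(p)\,\epsilon\,\frac{t_1(p)\,t_2'(p) - t_2(p)\,t_1'(p)}{t_1'(p)\,t_2'(p)} \;=\; \sumdens(p)\,\epsilon\,\frac{t_1(p)^{2}\,\theta'(p)}{t_1'(p)\,t_2'(p)},
\end{align*}
which is strictly positive since $\theta'(p)>0$. Hence the augmented mechanism strictly dominates grand bundle pricing for sufficiently small $\epsilon$, contradicting optimality. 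The main technical obstacle is selecting $p$ with both $t_1'(p),t_2'(p)>0$ and verifying that higher-order corrections do not swamp the first-order gain; both are handled by choosing a generic $p$ within an open subinterval on which $\theta'>0$ (guaranteed by non-monotonicity of $\theta$) together with a smooth test distribution such as the uniform above.
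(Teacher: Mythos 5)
Your approach mirrors the paper's proof of \autoref{thm:onlyif} for the unit-demand case, and since the paper states that the proof of \autoref{thm:onlyifadditive} is similar and omitted, this is exactly the intended route. Supplementing the bundle price $p$ with an offer of item~1 alone at $q_1=t_1(p)-\epsilon$ is the right analogue of the paper's discounted offer for the less-favored outcome, and the first-order Taylor calculation of the gain and loss intervals is a clean, explicit version of the geometric (ray-based) comparison the paper uses; the identity $t_1(p)\,t_2'(p)-t_2(p)\,t_1'(p)=t_1(p)^2\theta'(p)$ isolates precisely the role of the non-monotonicity hypothesis.

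There is, however, one genuine gap. You assert that ``because $t_1'(s)+t_2'(s)=1$, we may additionally arrange that $t_1'(p)>0$ and $t_2'(p)>0$.'' The identity $t_1'+t_2'=1$ only rules out both derivatives being non-positive at once; it does not let you choose $p$ with $t_1'(p)>0$. Explicitly,
\begin{align*}
t_1'(s) \;=\; \frac{(1+\theta(s))-s\,\theta'(s)}{(1+\theta(s))^2},\qquad
t_2'(s) \;=\; \frac{\theta(s)(1+\theta(s))+s\,\theta'(s)}{(1+\theta(s))^2},
\end{align*}
so $t_2'(p)>0$ is automatic once $\theta'(p)>0$, but $t_1'(p)<0$ exactly when $\theta'(p)>\bigl(1+\theta(p)\bigr)/p$, and this can hold on the \emph{entire} interval where $\theta'>0$; for instance $\theta(s)=2(s-1)$ on $(1,1.5)$ has $t_1'(s)<0$ throughout. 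With $t_1'(p)\le 0$ your formula for the first-order net change is no longer valid (its derivation assumed a gain interval of length $\epsilon/t_1'(p)$). The fix is to treat this case separately, and it is in fact the \emph{easier} case: when $t_1'(p)\le 0$, for $s<p$ near $p$ one has $t_1(s)\ge t_1(p)>t_1(p)-\epsilon$ on a neighborhood of $p$ whose length does not shrink linearly with $\epsilon$, so the gain from the supplemental offer is $\Omega(\sqrt{\epsilon})$ (when $t_1'(p)=0$) or $\Omega(1)$ (when $t_1'(p)<0$), while the loss remains $O(\epsilon)$ because $t_2'(p)>0$. Adding this case, and pinning down more carefully that the loss is in fact local (i.e., that $t_2(s)\ge t_2(p)+\epsilon$ for $s$ in the support of $\sumdist$ beyond the first-order interval, which can be ensured by shrinking the support of $\sumdist$ if necessary), completes the argument.
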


The main theorem of this section states sufficient conditions for optimality of pricing the bundle.  A symmetric distribution is identified by a marginal distribution $\sumdist$ of value for the bundle $s$ as well as a conditional distribution $F(\theta|s)$ of the ratio $\theta(\type) = \max(t_1,t_2)/\min(t_1,t_2)$ conditioned on value for the bundle $s$.  The main theorem of this section states that regularity of $\sumdist$ and negative correlation of $s$ and $\theta$ in the first order stochastic dominance sense is sufficient for optimality of bundling.

\begin{theorem}\label{thm:additiveif}
For a single agent with additive preferences over two items, bundle pricing is optimal for any costs $\cost \max(x_1,x_2)$, $c \geq 0$, and any symmetric distribution where (a) $\sumdist$ has monotone amortization $\sumamorev$ and (b) the conditional distribution $F(\theta|s)$ is monotone non-decreasing in $s$.
\end{theorem}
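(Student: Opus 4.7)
The plan is to follow the framework of \autoref{s:amortized}: exhibit a virtual value function $\virt$ whose pointwise virtual surplus maximizer is a grand bundle pricing, and then invoke \autoref{unironedmeta}. Because grand bundle pricing is the optimal single-dimensional mechanism for the sum-of-values projection $s=t_1+t_2$, the combination of tightness for every bundle price and the projection property forces the two components of $\virt$ to sum to the regular single-dimensional amortization $\sumamorev(s)$ of the marginal $\sumdist$. The symmetry of the distribution makes the natural target a balanced split $\virt_1(\type)=\virt_2(\type)=\sumamorev(s)/2$, obtained by sweeping an appropriate canonical amortization.

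\textbf{Construction.} Proceeding in two steps by analogy with \autoref{s:uniform} and the sweeping step in the proof of \autoref{thm:unitdemandironing}: first, build a canonical amortization $\amorev=\type-\amortil/\dens$ that is tight for every grand bundle pricing, by choosing $\amortil$ tangent to the equi-quantile curves of $F(\theta\mid s)$ in the $(s,\theta)$ parametrization and oriented in the direction of increasing $s$. A divergence-theorem argument that parallels the proof of \autoref{lem:strongamort}, applied to regions of types with sum above a threshold and $\theta$-quantile at most $q$, will verify the divergence-density equality, boundary orthogonality on every boundary except the low-$s$ boundary (where utility vanishes under bundling), and the sum-of-values projection $\amorev_1+\amorev_2=\sumamorev(s)$. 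Second, apply a symmetric single-dimensional sweeping along each constant-$s$ slice to obtain $\virt_1=\virt_2=\sumamorev(s)/2$, and use \autoref{prop:amorev-to-virt} to certify that $\virt$ remains an amortization, which reduces to checking $\expect{(x_1-x_2)(\amorev_2-\amorev_1)}\ge 0$ for every incentive-compatible $\alloc$.

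\textbf{Pointwise optimality and main obstacle.} Once $\virt$ is a valid amortization, the regularity of $\sumamorev$ gives a threshold $p^*$ with $\sumamorev(s)\ge c$ iff $s\ge p^*$. For $s\ge p^*$ one has $\virt_1=\virt_2=\sumamorev(s)/2\ge c/2\ge 0$, and the elementary inequality $(x_1+x_2)/2\le\max(x_1,x_2)$ implies $x_1\virt_1+x_2\virt_2-c\max(x_1,x_2)\le\sumamorev(s)-c$, with equality achieved by the bundle allocation $\alloc=(1,1)$. For $s<p^*$ the same inequality forces every nontrivial allocation to have nonpositive virtual surplus, so $\alloc=(0,0)$ is optimal; together these identify the virtual surplus maximizer as bundle pricing at $p^*$, and \autoref{unironedmeta} concludes. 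The main obstacle is the sweeping inequality in the second step: this is precisely where the correlation condition (b) enters. Using the symmetry of the distribution to WLOG restrict to symmetric IC mechanisms (which necessarily satisfy $\operatorname{sign}(x_1-x_2)=\operatorname{sign}(t_1-t_2)$ by convexity of $u$ along constant-$s$ lines), monotonicity of $F(\theta\mid s)$ in $s$ forces the equi-quantile paths to bend toward the diagonal $t_1=t_2$ as $s$ increases, and this bending is exactly what makes $\amorev_1-\amorev_2$ work against $x_1-x_2$ in the integrand so that the required inequality holds.
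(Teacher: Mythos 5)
Your high-level plan matches the paper's (virtual value function via sum-of-values projection, divergence-theorem argument with $\amortil$ tangent to equi-quantile curves, then a single-dimensional sweep along constant-$s$ slices invoking \autoref{prop:amorev-to-virt} and the IC constraint from convexity of $u$), but the specific target $\virt_1=\virt_2=\sumamorev(s)/2$ is the wrong choice: it is not an amortization of revenue, so no valid sweeping can reach it, and the argument collapses at the key step.

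To see the failure concretely, consider a symmetric distribution concentrated near the axes: with probability $1/2$ the type is $(t,\varepsilon t)$ and with probability $1/2$ it is $(\varepsilon t,t)$, $t\sim U[0,1]$, $\varepsilon\to 0$. Then $s\approx t$, $\sumdist$ is approximately $U[0,1]$, $\sumamorev(s)\approx 2s-1$, $F(\theta\mid s)$ is a point mass at $\theta\approx 0$ (trivially monotone), so the hypotheses hold. Now take the IC mechanism that posts a price $q$ for item one only. Its expected revenue is approximately $\tfrac12 q(1-q)$. Its expected virtual surplus under the balanced split is approximately $\tfrac12\int_q^1 (s-\tfrac12)\,\dd s = \tfrac14 q(1-q)$, which is strictly \emph{less} than revenue. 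So $\expect{\virt\cdot\hat{\alloc}}\not\geq\expect{\hat\price}$, violating \autoref{d:amortization}. The problem is that the balanced split charges only $\sumamorev(s)/2$ to each coordinate regardless of where the mass of the type actually sits, so mechanisms that sell a single item are undercounted. Equivalently, at the level of your sweeping check, you need $\int_0^\theta(\virt_1-\amorev_1)\,\denscor{S\!R}\,\dd\theta'\geq 0$ for all $\theta$; with $\virt_1=\sumamorev(s)/2$ this integrand equals $(\amorev_2-\amorev_1)/2$, which is negative near $\theta=0$ (where $\amorev_1\approx t_1$ and $\amorev_2\approx-\amortilelement_2/\dens$), so the cumulative starts negative and the inequality fails.

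The paper instead uses the \emph{proportional} split $\virtelement_i(\type)=\tfrac{t_i}{t_1+t_2}\sumamorev(t_1+t_2)$ (\autoref{def:sumextension}). This still makes $\virtelement_1+\virtelement_2=\sumamorev(s)$ a function of $s$ alone with $\virtelement_1,\virtelement_2$ of the same sign, so pointwise optimization gives bundle pricing (\autoref{l:vsm-consistency-add}), and it is tight for bundle pricing (\autoref{l:sd-proj-add}); but crucially, the tangency of $\amortil$ to the equi-quantile curves combined with \thetamonotonicity (which is what condition (b) gives) yields the \emph{pointwise} inequality $\amorevelement_1\tfrac{t_2}{t_1}\leq\amorevelement_2$, i.e.\@ $\amorevelement_1\leq\virtelement_1$, so the cumulative difference $\cumdiffelement_1(s,\theta)=\int_0^\theta(\virtelement_1-\amorevelement_1)\denscor{S\!R}\,\dd\theta'$ is non-negative trivially and the integration-by-parts step (\autoref{phi-sufficient-additive}) goes through. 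In the axis-concentrated example above the proportional split correctly assigns essentially all of $\sumamorev(s)$ to the coordinate carrying the value and recovers revenue exactly. To repair your proof, replace the balanced split by the proportional split and re-derive the sweeping inequality as $\amorevelement_1\tfrac{t_2}{t_1}\leq\amorevelement_2$ rather than $\amorevelement_1\leq\amorevelement_2$; the rest of your plan (pointwise optimality with the $\max$-cost, divergence theorem for tangency, symmetry and \autoref{lem:addIC} for the IC direction) then matches the paper's argument.
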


The following is an example class of distributions satisfying the conditions of \autoref{thm:additiveif}. Draw the value for the bundle $s$ from a regular distribution $\sumdist$, and value for the items $t_1$ and $t_2$ uniformly such that $t_1+t_2=s$, $\max(t_1,t_2)/\min(t_1,t_2)\geq \theta(s)$, for any monotone non-increasing function $\theta(s)$ (see \autoref{fig:exampleadditive}).

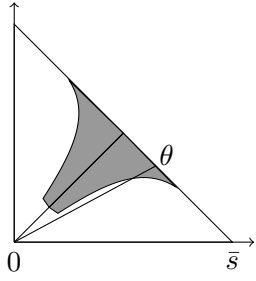
\begin{figure}
\centering
\begin{tikzpicture}[scale = 2.9, align=center]
\draw[->] (0,0) -- (1.1,0);
\draw[->] (0,0) -- (0,1.1);
\draw (0,0) -- (1,0) -- (0,1) -- (0,0);
\draw (0,0) node[below]{$0$};
\draw (1,0) node[below]{$\bar{s}$};


\filldraw [line width = 0, fill=black!40] plot[id=f3,domain=0.2:.74]  (\x,{(\x-\x^4*1.2)/1.5}) -- (.5,.5) -- (.16,.16) -- cycle;
\filldraw [fill=black!40] plot[id=f3,domain=0.2:.74]  ({(\x-\x^4*1.2)/1.5},\x) -- (.5,.5) -- (.16,.16) -- cycle;
\draw (0.5,0.5) -- (0,0);

\draw (0,0) -- (.65,.35);
\draw (.62,.4) node[right]{$\theta$};

\end{tikzpicture}
\caption{\footnotesize The conditional distribution $F(\theta|s)$ is monotone for a monotone non-increasing $\theta(s)$ where conditioned on $s$, the values are uniform from the set $\{\type|t_1+t_2= s, \min(t_1,t_2)/\max(t_1,t_2) \geq \theta(s)\}$.  For example, for any $\delta \leq \bar{s}/2$, setting $\theta(s) = \delta(1+s)/s$ defines the set of types to be the triangle $t_1,t_2 \in [\delta,\bar{s}-\delta]$, $t_1+t_2\leq \bar{s}$.}
\label{fig:exampleadditive}
\end{figure}

Similar to \autoref{s:uniform}, it is sufficient to prove the statement assuming $t_1\geq t_2$. As in \autoref{s:uniform} the sum-of-values projection, via the
divergence density equality (of \autoref{lem:framework}), pins down an
amortization $\amorev$ that is tight for any grand bundle pricing.  This tight amortization may fail to be a virtual value function because virtual surplus with respect to $\amorev$ is not pointwise optimized by a grand bundle pricing.  For
this reason, we directly define $\virt$ and then prove that it is a virtual value function for the grand bundle pricing mechanism by comparing the virtual surplus with respect to $\virt$ and $\amorev$.

\begin{definition}\label{def:sumextension} The {\em two-dimensional extension} $\virt$ of the amortization of the sum-of-values projection $\sumamorev(s) = s - \frac{1-\sumdist(s)}{\sumdens(s)}$ is:
\begin{align*}
\virtelement_1(\typeprelim) &= \frac{t_1}{t_1+t_2} \sumamorev(t_1+t_2) = t_1 - \frac{t_1}{t_1+t_2}\frac{1-\sumdist(t_1+t_2)}{\sumdens(t_1+t_2)}, \\
\virtelement_2(\typeprelim) &= \frac{t_2}{t_1+t_2} \sumamorev(t_1+t_2) = t_2 - \frac{t_2}{t_1+t_2}\frac{1-\sumdist(t_1+t_2)}{\sumdens(t_1+t_2)}.
\end{align*}
\end{definition}

The following lemma provides conditions on vector field $\virt$ such that bundle pricing maximizes virtual surplus pointwise with respect to  $\virt$ (proof in \aref{l:vsm-consistency-addproof}).  These conditions are satisfied for $\virt$ of \autoref{def:sumextension}, if $\sumamorev(s)$ is monotone non-decreasing.

\begin{lemma}
\label{l:vsm-consistency-add}
The allocation of a bundle pricing mechanism pointwise optimizes virtual surplus with respect to vector field $\virt$ for all costs $\cost \max(x_1,x_2)$ if and only if: $\virtelement_1(\type)$ and $\virtelement_2(\type)$ have the same sign, $\virtelement_1(\type)+\virtelement_2(\type)$ is only a function of $t_1+t_2$ and is monotone non-decreasing in $t_1+t_2$.
\end{lemma}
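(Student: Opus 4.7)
The plan is to reduce the pointwise optimization to a small calculation and then read off the necessary and sufficient conditions, handling the two directions of the iff separately. First I would characterize the unconstrained pointwise maximizer of $\virtelement_1 x_1 + \virtelement_2 x_2 - c \max(x_1,x_2)$ on $[0,1]^2$. Setting $s = \max(x_1,x_2)$ and optimizing the inner problem over $(x_1,x_2)\in [0,s]^2$, the maximum for fixed $s$ is $s\bigl(\max(\virtelement_1,0)+\max(\virtelement_2,0)-c\bigr)$; then $s\in\{0,1\}$, with $s^*=1$ exactly when $\max(\virtelement_1,0)+\max(\virtelement_2,0)\geq c$. The optimal $x_i$ is $s^*$ if $\virtelement_i>0$, $0$ if $\virtelement_i<0$, and arbitrary in $[0,s^*]$ if $\virtelement_i=0$. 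So, up to ties, the pointwise maximizer is one of $(0,0)$, $(1,0)$, $(0,1)$, $(1,1)$.

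For the \emph{only if} direction, the allocation of any bundle-pricing mechanism only ever takes the values $(0,0)$ or $(1,1)$. If at some type $\type_0$ the signs of $\virtelement_1(\type_0)$ and $\virtelement_2(\type_0)$ differ, say $\virtelement_1(\type_0)>0>\virtelement_2(\type_0)$, then choosing any $c\in(0,\virtelement_1(\type_0))$ makes $(1,0)$ strictly better than both $(0,0)$ and $(1,1)$, contradicting pointwise optimality of bundle pricing. Hence the signs must agree. Writing $g(\type)=\virtelement_1(\type)+\virtelement_2(\type)$, the same-sign condition implies the pointwise maximizer is $(1,1)$ iff $g(\type)\geq c$, and $(0,0)$ otherwise. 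If two types $\type,\type'$ have equal sums $t_1+t_2=t'_1+t'_2$ but $g(\type)>g(\type')$, picking $c$ strictly between them makes the pointwise maximizers differ while any bundle pricing allocates them identically; so $g$ depends only on $s=t_1+t_2$, say $g(\type)=\tilde g(s)$. A similar swap with $s<s'$ and $\tilde g(s)>\tilde g(s')$ forces $\tilde g$ to be non-decreasing.

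For the \emph{if} direction, assume the three conditions hold and fix any $c\geq 0$. Define $p(c)=\inf\{s:\tilde g(s)\geq c\}$ (with $p(c)=\infty$ if the set is empty). If $t_1+t_2\geq p(c)$, monotonicity gives $\tilde g(t_1+t_2)\geq c\geq 0$; combined with the same-sign property, this forces $\virtelement_1,\virtelement_2\geq 0$ and $\virtelement_1+\virtelement_2\geq c$, so the pointwise maximizer in Step 1 is $(1,1)$. If $t_1+t_2<p(c)$, then $\tilde g(t_1+t_2)<c$; by same-sign, either both $\virtelement_i\leq 0$ (so $\max(\virtelement_i,0)=0$) or both $\virtelement_i\geq 0$ with $\virtelement_1+\virtelement_2<c$, and in either case $\max(\virtelement_1,0)+\max(\virtelement_2,0)<c$, making $(0,0)$ optimal. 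This is exactly the bundle-pricing allocation at threshold $p(c)$.

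I do not expect a single hard step; the main care will be in bookkeeping the zero/boundary cases (allowing either $\virtelement_i$ to vanish in the "same sign" clause, and handling ties in the pointwise maximizer so that $(1,1)$ and $(0,0)$ remain optimal among tied choices). The decomposition of the maximization by the auxiliary variable $s=\max(x_1,x_2)$ is the key trick that keeps the cost term linear and makes the case analysis clean.
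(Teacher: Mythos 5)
Your proof is correct and, for the direction the paper actually uses (sufficiency), it matches the paper's extremely terse sketch: the paper simply asserts that pointwise optimization gives $(1,1)$ when $\virtelement_1+\virtelement_2\geq c$ and $(0,0)$ otherwise, leaving implicit that the same-sign condition is what rules out the cross terms $(1,0)$ and $(0,1)$. Your decomposition via $s=\max(x_1,x_2)$ makes this implicit reasoning explicit and is a clean way to see that, up to ties, the unconstrained pointwise maximizer lives in $\{(0,0),(1,0),(0,1),(1,1)\}$.

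You go further than the paper by also proving the ``only if'' direction, which the paper's proof omits entirely (it only addresses sufficiency). Your swap arguments that force the conditions are correct in spirit, but there is a small boundary case you should flag: when you argue that ``$g$ depends only on $t_1+t_2$'' by choosing $c$ strictly between $g(\type')$ and $g(\type)$, this only produces a valid cost $c\geq 0$ when the larger of the two values is positive. If $\virtelement_1+\virtelement_2$ is non-positive on an entire equal-sum slice, the pointwise maximizer is $(0,0)$ for every $c\geq 0$ regardless of how $g$ varies on that slice, so bundle pricing remains pointwise optimal even though $g$ need not be constant there. This means the ``only if'' direction, as literally stated, fails in this degenerate regime; the same imprecision is present in the lemma's statement itself, and it does not affect the paper's downstream use of the lemma (which only invokes sufficiency). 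A similar caveat applies to your proof that $\tilde g$ must be non-decreasing. Aside from this corner case, your proof is more complete than the paper's and takes essentially the same route.
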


Given \autoref{l:vsm-consistency-add}, the remaining steps in proving that $\virt$ is a virtual value function is showing that it is  a tight amortization for grand bundle pricing.  The following lemma proves tightness (proof in \aref{l:sd-proj-addproof}).

\begin{lemma}
\label{l:sd-proj-add}
The expected revenue of a bundle pricing is equal to its expected
virtual surplus with respect to the two-dimensional extension $\virt$ of the sum-of-values projection (\autoref{def:sumextension}).
\end{lemma}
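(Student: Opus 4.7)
The plan is to reduce the identity to the standard one-dimensional equality between expected revenue and expected Myerson virtual surplus for the marginal distribution $\sumdist$ of $s=t_1+t_2$. First I would write out the allocation rule of a bundle pricing at price $p$: $\alloc(\type) = (1,1)$ whenever $t_1+t_2 \geq p$ and $\alloc(\type) = (0,0)$ otherwise. Its expected revenue is simply $p\,\Prx{t_1+t_2 \geq p}$.

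Next I would compute the expected virtual surplus of this allocation with respect to $\virt$ of \autoref{def:sumextension}. The key observation is that the components of $\virt$ are built so that they sum to the one-dimensional amortization of the value-of-bundle distribution:
\begin{align*}
\virtelement_1(\type) + \virtelement_2(\type) = \sumamorev(t_1+t_2) = (t_1+t_2) - \frac{1-\sumdist(t_1+t_2)}{\sumdens(t_1+t_2)}.
\end{align*}
Since $x_1 = x_2$ in the bundle allocation, the virtual surplus integrand depends on $\type$ only through $s = t_1+t_2$. I would then change variables from $\type$ to $s$, pushing the joint distribution forward to the marginal $\sumdist$, which reduces the problem to
\begin{align*}
\expect[\type]{(\virtelement_1 + \virtelement_2)(\type)\cdot \mathbf{1}[t_1+t_2 \geq p]} = \int_{p}^{\bar{s}} \sumamorev(s)\, \sumdens(s)\,\dd s.
\end{align*}

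Finally I would invoke the single-dimensional amortization identity, which itself is a one-line integration by parts: using the boundary condition that $1-\sumdist(\bar s) = 0$,
\begin{align*}
\int_{p}^{\bar{s}} \sumamorev(s)\, \sumdens(s)\,\dd s
 = \int_p^{\bar s} s\,\sumdens(s)\,\dd s - \int_p^{\bar s}(1-\sumdist(s))\,\dd s
 = p(1-\sumdist(p)),
\end{align*}
which equals the expected revenue computed in the first step. There is no real obstacle here; the only thing to check is that the support of $\sumdist$ is bounded (which follows from the boundedness of $\typespace$ assumed in \autoref{s:prelim}) so that the boundary term in the integration by parts vanishes at the top. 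The whole point of the construction in \autoref{def:sumextension} is precisely that the proportional split $\virtelement_i = \tfrac{t_i}{t_1+t_2}\sumamorev(t_1+t_2)$ forces $\virtelement_1 + \virtelement_2 = \sumamorev(t_1+t_2)$, which is the one needed property for tightness on bundle-pricing allocations; everything else is the classical Myerson identity applied to the pushforward distribution of $s$.
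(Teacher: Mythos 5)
Your proof is correct and takes essentially the same approach as the paper's: reduce to the one-dimensional case by noting that $\virtelement_1 + \virtelement_2 = \sumamorev(t_1+t_2)$ and that the bundle-pricing allocation depends only on $s = t_1 + t_2$, then invoke the standard single-dimensional integration-by-parts identity (the paper writes this step compactly as $\sumamorev(s)\sumdens(s) = -\tfrac{\dd}{\dd s}\bigl(s(1-\sumdist(s))\bigr)$, which is the same calculation you carry out explicitly).
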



The rest of this section shows that $\virt$ provides an upper bound on revenue of any mechanism. For that, we study the existence of a tight canonical amortization $\amorev$ such that the virtual surplus of any incentive compatible mechanism with respect to $\virt$ upper bounds its virtual surplus with respect to $\amorev$ (any such $\amorev$ must be tight for any bundle pricing since $\virt$ is) and invoke  \autoref{prop:amorev-to-virt}.    Define the \emph{equi-quantile} function $\curve_q(s)$ such that conditioned on $s$, the probability that $t_2\leq  \curve_q(s)$ is equal to $q$.

\begin{lemma}\label{lem:additiveamortization}
If the conditional distribution $F(\theta|s)$ is monotone non-decreasing in $s$, then there exists a canonical amortization $\amorev(\type) = \type - \amortil(\type)/\dens(\type)$ such that $\expect{\alloc(\type)\cdot (\virt(\type)-\amorev(\type))}\geq 0$ for all incentive compatible mechanisms.  For any $\type$, $\amortil(\type)$ is tangent to the equi-quantile curve crossing $\type$. 
\end{lemma}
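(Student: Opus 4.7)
The plan is to adapt the construction from \autoref{s:uniform} to the present setting, with equi-quantile curves now parameterized by the sum $s = t_1 + t_2$ in place of the favorite value.  In curvilinear coordinates $(s, q)$ with $q = F(t_2 \mid s)$ — in which the joint density of the distribution is $\sumdens(s)$ and the Jacobian is $|J(s,q)| = 1/f(t_2 \mid s)$ — I would define $\amortil$ by requiring that it have no $q$-component (equivalently, be tangent to the equi-quantile curves $C_q$) and that its $s$-component satisfy the divergence density equation, which in these coordinates reduces to the ODE $\partial_s(|J|\,\tilde\amortil_s) = -\sumdens(s)$.  With the natural boundary condition that $\tilde\amortil_s$ vanishes at the terminal end of each path, the solution is $\tilde\amortil_s(s,q) = (1-\sumdist(s))\,\dens(\type(s,q))/\sumdens(s)$, and translating back to Cartesian gives an explicit formula for $\amortil$ as $\tilde\amortil_s \cdot (1-C_q'(s),\,C_q'(s))$.

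Next I would verify that $\amorev := \type - \amortil/\dens$ is indeed a canonical amortization.  Divergence density equality is built in, and boundary inflow holds because $\tilde\amortil_s > 0$ so that $\amortil \cdot \normal$ is nonpositive at the originating end of each path and vanishes at the terminal end, with symmetry of the distribution taking care of the diagonal.  The tangency assertion of the lemma is then immediate from the construction, although it can also be verified by a divergence-theorem argument on regions $T(s_0, q) = \{\type' : s(\type') \geq s_0,\ q(\type') \leq q\}$ analogous to the proof of \autoref{lem:strongamort}.

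The core step is proving $\expect{\alloc(\type) \cdot (\virt(\type) - \amorev(\type))} \geq 0$ for every incentive-compatible mechanism.  Because $\sdamorev_1 + \sdamorev_2 = \sumamorev(s) = \virtelement_1 + \virtelement_2$ by construction, the difference decomposes as
\begin{align*}
\alloc(\type) \cdot (\virt(\type) - \amorev(\type)) = (x_1(\type) - x_2(\type)) \cdot \Delta(\type),
\end{align*}
where $\Delta(\type)$ is proportional to $\partial_s(C_q(s)/s)$, i.e., to the rate of change along an equi-quantile curve of the conditional quantile $C_q(s)/s$ of $t_2/(t_1+t_2)$ given $s$.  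By symmetry of the distribution one may assume without loss of generality that $(\alloc, p)$ is itself symmetric under coordinate swap; the single-deviation IC constraint for such a mechanism gives the comonotonicity $(t_1 - t_2)(x_1 - x_2) \geq 0$.  The correlation assumption that $F(\theta \mid s)$ is monotone in $s$ pins down the sign of $\Delta(\type)$ consistent with $t_1 - t_2$ on each half-space, so the two factors have the same sign and the integrand is pointwise non-negative.

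The main obstacle is executing this cleanly: the construction of $\amortil$ requires careful use of the curvilinear-coordinate divergence formula (with the Jacobian $|J| = 1/f(t_2\mid s)$), and the sign of $\Delta(\type)$ must be tracked through the identity $C_q'(s) - C_q(s)/s = s\,\partial_s(C_q(s)/s)$ so that the correlation assumption is invoked in the right direction.  The algebraic reduction $\alloc \cdot (\virt - \amorev) = (x_1-x_2)\,\Delta$, which collapses the global inequality to a pointwise product of two sign-controlled factors — one from correlation, one from symmetric IC — is the crucial simplification that makes the argument go through.
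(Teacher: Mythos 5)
Your proposal is correct, and for the key inequality $\expect{\alloc\cdot(\virt-\amorev)}\geq 0$ it takes a genuinely simpler route than the paper does. Both you and the paper construct $\amortil$ by making it tangent to the equi-quantile curves and scaling so that $\amorevelement_1 + \amorevelement_2 = \sumamorev(s)$ (your explicit formula $\tilde\amortil_s = (1-\sumdist(s))\dens(\type)/\sumdens(s)$, $\amortil = \tilde\amortil_s\cdot(1-C_q'(s),C_q'(s))$, is exactly the unique solution the paper's \autoref{lem:gamma} gives), and both use the reduction $\alloc\cdot(\virt-\amorev)=(x_1-x_2)(\virtelement_1-\amorevelement_1)$. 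Where you diverge is how the inequality is closed. The paper, in \autoref{phi-sufficient-additive}, integrates by parts in $\theta$ at fixed $s$: it introduces the cumulative shift $\cumdiff(s,\theta)=\int_0^\theta[\virt-\amorev]^{S\!R}\dens^{S\!R}\,\dd\theta'$, kills the boundary terms (at $\theta=0$ trivially; at $\theta=1$ by symmetry of the allocation), and then uses the gradient-monotonicity consequence of convexity $\frac{\dd}{\dd\theta}\sagentmechcor{S\!R}\cdot(-1,1)\geq 0$ (\autoref{lem:addIC}) against $\cumdiffelement_1\geq 0$. You instead observe that, on the half-space $t_1\geq t_2$, both factors in $(x_1-x_2)\Delta$ are individually non-negative pointwise: $x_1\geq x_2$ follows from the single swap-deviation IC constraint for a symmetric mechanism (compare reports $(t_1,t_2)$ and $(t_2,t_1)$, using $p(t_1,t_2)=p(t_2,t_1)$, which gives $(t_1-t_2)(x_1-x_2)\geq 0$), and $\Delta=\virtelement_1-\amorevelement_1\geq 0$ follows from $\amorevelement_1 t_2/t_1\leq\amorevelement_2$ together with $\amorevelement_1+\amorevelement_2=\virtelement_1+\virtelement_2$ and $\virtelement_1 t_2/t_1=\virtelement_2$. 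This bypasses integration by parts and only uses a single pairwise IC constraint rather than the full convexity of $\util$. The shortcut is available here because the additive section already assumes regularity (so no sweeping/ironing is needed); the paper's cumulative argument mirrors the ironing machinery of \autoref{thm:unitdemandironing}, where the cumulative version is genuinely required. Two small things worth tightening in a final write-up: your claim that $\Delta$ is ``proportional to $\partial_s(C_q(s)/s)$'' is not quite right and is unnecessary --- the precise chain is tangency plus $\thetamonotonicity$ of the equi-quantile curves $\Rightarrow\amortilelement_1 t_2/t_1\geq\amortilelement_2\Rightarrow\amorevelement_1 t_2/t_1\leq\amorevelement_2\Rightarrow\Delta\geq 0$; and the reduction to symmetric mechanisms should be flagged explicitly (as the paper does), since the lemma as stated quantifies over all IC mechanisms and both proofs actually only establish it for symmetric ones, which is what the downstream optimality argument needs.
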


We show the following refinement of \autoref{prop:amorev-to-virt}, for any incentive compatible
allocation $\sagentmech$ and sum $s$,
\begin{align}
\label{exp2}
\expect{\sagentmech(\typeprelim) \cdot (\virt(\typeprelim) - \amorev(\typeprelim)) \given t_1 + t_2 = s } &\geq 0.
\end{align}
That is, we use a sweeping process in a single dimension and along lines with constant sum of values $s$ (see \autoref{sec:reverseengineering}). Consider the amortization $\amorev$ that, like $\virt$,
sets $\amorevelement_1(\type) + \amorevelement_2(\type) =
\sumamorev(t_1+t_2)$ but, unlike $\virt$, splits this
total amortized value across the two coordinates to satisfy the
divergence density equality.  Equation~\eqref{exp2} can be expressed in terms of this relative difference $\virtelement_1 - \amorevelement_1$ since $\sagentmech \cdot (\virt - \amorev) = (\sagentmechelement_1- \sagentmechelement_2)(\virtelement_1 - \amorevelement_1)$.
We will first show that to satisfy equation~\eqref{exp2} for all incentive compatible $\sagentmech$ it is sufficient for $\amorev$, relative to
$\virt$, to place less value on the favorite coordinate, i.e., $\amorevelement_1 \leq \virtelement_1$.  Notice that since $\amorevelement_1 + \amorevelement_2 = \virtelement_1 + \virtelement_2$ and $\virtelement_1\frac{t_2}{t_1} = \virtelement_2$, the condition $\amorevelement_1 \leq \virtelement_1$ is equivalent to the condition $\amorevelement_1\frac{t_2}{t_1} \leq \amorevelement_2$.

To calculate the expectation in equation~\eqref{exp2}, it will be
convenient to change to sum-ratio coordinate space.  For a function
$h$ on type space $\typespace$, define $h^{S\!R}$ to be its
transformation to sum-ratio coordinates, that is
\begin{align*}
h(t_1,t_2) = h^{S\!R}(t_1+t_2, \tfrac{t_2}{t_1}).
\end{align*}

Our derivation of sufficient conditions for the two-dimensional
extension of the sum-of-values projection to be an amortization
exploits two properties. First, by convexity of utility (\autoref{l:gradutil=alloc}), the change in allocation probabilities of an
incentive compatible mechanism, for a fixed sum $s$ as the ratio
$\theta$ increases, can not be more for coordinate one than coordinate two, that is, $\sagentmechcorelement{S\!R}_1(s,\theta)-\sagentmechcorelement{S\!R}_2(s,\theta)$ must be non-increasing in $\theta$ (\autoref{lem:addIC}). Second, if $\amorev$ shifts value from coordinate one to coordinate two relative to the vector field $\virt$, then, it also shifts {expected} value from coordinate one to coordinate two, conditioned on sum $t_1 + t_2 = s$ and ratio $t_2/t_1
\leq \theta$.   We then use integration by parts to show that the shift in expected value only
hurts the virtual surplus of $\amorev$ relative to $\virt$ and equation \eqref{exp2} is satisfied (\autoref{phi-sufficient-additive}, proof in \aref{phi-sufficient-additiveproof}).  Later in the section we will describe
sufficient conditions on the distribution to guarantee existence of $\amorev$ where this sufficient
condition that $\amorevelement_1\frac{t_2}{t_1} \leq \amorevelement_2$, is satisfied (\autoref{l:amortil-angle-additive}).

\begin{lemma}\label{lem:addIC}
The allocation of any differentiable incentive compatible mechanism satisfies
\begin{align*}
\frac{d}{d\theta} \sagentmechcor{S\!R}(s,\theta) \cdot (-1,1) \geq 0.
\end{align*}
\end{lemma}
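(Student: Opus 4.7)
The plan is to derive this monotonicity directly from the characterization of incentive compatibility in \autoref{l:gradutil=alloc}: the allocation $\sagentmech(\type)$ equals the gradient $\nabla \util(\type)$ of a convex utility function, so the Jacobian of $\sagentmech$ is the Hessian $\nabla^2 \util$, which is positive semidefinite (using here the assumed differentiability).

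First I would compute the direction of motion in type space when $s$ is held fixed and $\theta$ varies. From $t_1 = s/(1+\theta)$ and $t_2 = s\theta/(1+\theta)$ one gets
\begin{align*}
\frac{d\type}{d\theta}\Big|_{s} \;=\; \frac{s}{(1+\theta)^2}\,(-1,1).
\end{align*}
This is the crux of the calculation: varying $\theta$ with $s$ fixed translates into moving in type space along the direction $(-1,1)$ (up to a positive scalar).

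Next, by the chain rule and \autoref{l:gradutil=alloc},
\begin{align*}
\frac{d}{d\theta}\sagentmechcor{S\!R}(s,\theta)
\;=\; \nabla \sagentmech(\type)\cdot \frac{d\type}{d\theta}
\;=\; \frac{s}{(1+\theta)^2}\,\nabla^2\util(\type)\,(-1,1)^\top.
\end{align*}
Taking the dot product with $(-1,1)$ yields
\begin{align*}
\frac{d}{d\theta}\sagentmechcor{S\!R}(s,\theta)\cdot(-1,1)
\;=\; \frac{s}{(1+\theta)^2}\,(-1,1)\,\nabla^2\util(\type)\,(-1,1)^\top \;\geq\; 0,
\end{align*}
where the inequality uses the positive semidefiniteness of $\nabla^2\util$ (convexity of $\util$) and $s\geq 0$. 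There is no real obstacle here beyond translating the sum-ratio coordinate change into the right direction in type space; the rest is the standard cyclic-monotonicity (infinitesimal form) guaranteed by Rochet's characterization.
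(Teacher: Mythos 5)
Your proof is correct and takes the same approach as the paper: both reduce the claim to positive semidefiniteness of $\nabla^2\util$ via \autoref{l:gradutil=alloc}. You are slightly more careful than the paper in explicitly computing $\frac{d\type}{d\theta}\big|_s = \frac{s}{(1+\theta)^2}(-1,1)$ and carrying the positive scalar $\frac{s}{(1+\theta)^2}$ through the chain rule, whereas the paper identifies $\frac{d}{d\theta}\sagentmechcor{S\!R}(s,\theta)$ with the directional derivative of $\sagentmech$ in direction $(-1,1)$ without noting that they differ by this (harmless) positive factor.
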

\begin{proof}
The proof follows directly from \autoref{l:gradutil=alloc}.  In particular, convexity of the utility function implies that the dot product of any vector, here $(-1,1)$, and the change in gradient of utility $\sagentmech$ in the direction of that vector, here $\frac{d}{d\theta} \sagentmechcor{S\!R}(s,\theta)$, is positive. 
\end{proof}

\begin{lemma}\label{phi-sufficient-additive}
The two-dimensional extension of the sum-of-values projection $\virt$
is an amortization if there exists an amortization $\amorev$
with $\amorevelement_1(\type) + \amorevelement_2(\type) =
\sumamorev(t_1+t_2)$ that satisfies $\amorevelement_1(\type)\frac{t_2}{t_1}
\leq \amorevelement_2(\type)$.
\end{lemma}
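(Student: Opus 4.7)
The goal is to show that $\virt$ is an amortization of revenue. By \autoref{prop:amorev-to-virt}, it suffices to prove $\expect{\sagentmech(\type) \cdot \virt(\type)} \geq \expect{\sagentmech(\type) \cdot \amorev(\type)}$ for every incentive compatible mechanism; following the paragraph preceding the statement, I plan to establish the sharper conditional inequality $\expect{\sagentmech(\type) \cdot (\virt(\type) - \amorev(\type)) \mid t_1 + t_2 = s} \geq 0$ for every $s$.

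The first step is purely algebraic. Because $\virtelement_1 + \virtelement_2 = \sumamorev(s) = \amorevelement_1 + \amorevelement_2$ on every constant-$s$ line, we have $\virtelement_2 - \amorevelement_2 = -(\virtelement_1 - \amorevelement_1)$, so the integrand collapses to $\sagentmech \cdot (\virt - \amorev) = (\sagentmechelement_1 - \sagentmechelement_2)(\virtelement_1 - \amorevelement_1)$. Combining the hypothesis $\amorevelement_1 (t_2/t_1) \leq \amorevelement_2$ with the identity $\virtelement_2 = \virtelement_1 (t_2/t_1)$ then gives $\virtelement_1(1 + t_2/t_1) = \amorevelement_1 + \amorevelement_2 \geq \amorevelement_1(1 + t_2/t_1)$, i.e., $\virtelement_1 \geq \amorevelement_1$ pointwise, so the second factor is always non-negative.

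Fix $s$ and switch to sum-ratio coordinates $(s, \theta)$ with $\theta = t_2/t_1$. Writing $h(\theta) = (\sagentmechelement_1 - \sagentmechelement_2)(s, \theta)$ and $g(\theta) = (\virtelement_1 - \amorevelement_1)(s, \theta)\, \dens^{S\!R}(s, \theta)$, \autoref{lem:addIC} gives that $h$ is non-increasing in $\theta$, while $g \geq 0$ by the pointwise bound. Define $G(\theta) = \int_0^\theta g(\tilde\theta)\, d\tilde\theta$, which is non-negative, non-decreasing, and vanishes at $\theta = 0$. Integration by parts in $\theta$ then yields
\[
I(s) \;=\; \int_0^{\theta_{\max}} h(\theta) g(\theta)\, d\theta \;=\; h(\theta_{\max}) G(\theta_{\max}) \;-\; \int_0^{\theta_{\max}} h'(\theta) G(\theta)\, d\theta,
\]
and the integral term is non-negative since $h' \leq 0$ and $G \geq 0$.

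The main obstacle is the boundary term $h(\theta_{\max}) G(\theta_{\max})$ at $\theta_{\max} = 1$, which corresponds to the diagonal $t_1 = t_2$: for an arbitrary incentive compatible $\sagentmech$ the sign of $(\sagentmechelement_1 - \sagentmechelement_2)(s/2, s/2)$ is not controlled. The planned resolution is a symmetrization: under the symmetric distribution assumed in \autoref{thm:additiveif}, both $\virt$ and (without loss of generality) $\amorev$ are swap-invariant in the sense that their coordinates exchange under $(t_1, t_2) \mapsto (t_2, t_1)$, so replacing $\sagentmech$ by its swap-symmetrization preserves both $\expect{\sagentmech \cdot \virt}$ and $\expect{\sagentmech \cdot \amorev}$ and preserves incentive compatibility. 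The symmetrized mechanism satisfies $\sagentmechelement_1 = \sagentmechelement_2$ on the diagonal, so $h(\theta_{\max}) = 0$ and the boundary term vanishes, giving $I(s) \geq 0$; integrating over $s$ and invoking \autoref{prop:amorev-to-virt} then completes the proof.
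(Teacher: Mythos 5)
Your proof is correct and follows essentially the same route as the paper's: reduce to the conditional inequality at fixed $s=t_1+t_2$, collapse the integrand to $(\sagentmechelement_1-\sagentmechelement_2)(\virtelement_1-\amorevelement_1)$ using the shared sum, integrate by parts in $\theta$, kill the boundary term at $\theta=1$ by passing to a symmetric allocation, and conclude from \autoref{lem:addIC} together with the sign of the cumulative difference. The paper phrases the cumulative quantity as a vector field $\cumdiff$ and takes symmetry of the allocation as given ``without loss,'' while you make the symmetrization step explicit and derive the pointwise bound $\virtelement_1\geq\amorevelement_1$ rather than only its integral, but these are cosmetic differences.
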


To identify sufficient conditions for $\virt$ to be an amortization it now suffices to derive conditions under which there exists
a canonical amortization $\amorev$ satisfying $\amorevelement_1(\type) +
\amorevelement_2(\type) = \sumamorev(t_1+t_2)$ and the
condition of \autoref{phi-sufficient-additive}, i.e.,
$\amorevelement_1(\type)\frac{t_2}{t_1} \leq \amorevelement_2(\type)$.
Notice that $\amortilelement_1 \frac{t_2}{t_1} \geq \amortilelement_2$
implies that $\amorevelement_1 \frac{t_2}{t_1} \leq \amorevelement_2$
because
\begin{align*}
\frac{t_2}{t_1} \, \sdamorev_1(\type) 
  &= \frac{t_2}{t_1}\,\big(t_1 - \frac{\sdamortil_1(\type)}{\dens(\type)}\big) 
  = \frac{t_2}{t_1}\,\big(t_1 - \frac{\sdamortil_1(\type)}{\dens(\type)}\big)  
  \leq t_2 - \frac{\sdamortil_2(\type)}{\dens(\type)} 
  = \sdamorev_2(\type).
\end{align*}
Thus, it suffices to identify conditions under which $\amortilelement_1
\frac{t_2}{t_1} \geq \amortilelement_2$. 

The following constructs the canonical amortization $\amorev$ and specifies the direction of $\amortil$.  Similar to \autoref{s:uniform}, $\amortil$ is tangent to the equi-quantile curve, that in the section are defined by conditioning on the value for bundle $s$.  The proof is similar to the proof of \autoref{lem:strongamort} and is deferred to \aref{l:amortil-angle-additiveproof}. 

\begin{lemma}\label{l:amortil-angle-additive}\label{lem:gamma}
A canonical amortization $\amorev = \type - \amortil/\dens$ satisfying $\amorevelement_1(\type) + \amorevelement_2(\type) = \sumamorev(t_1+t_2)$ exists and is unique, where $\amortil(\type)$ is tangent to the equi-quantile curve crossing $\type$.
\end{lemma}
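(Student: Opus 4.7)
The proof parallels the construction in \autoref{d:2d-extension} and the proof of \autoref{lem:strongamort}, with the favorite-outcome projection replaced by the sum-of-values projection, so that the role of the single variable $t_1$ is taken over by the sum $s = t_1+t_2$. By the symmetry of the distribution, it suffices to construct $\amortil$ on $\typespace \cap \{t_1 \geq t_2\}$ and extend by reflection across the diagonal.

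\textbf{Construction and uniqueness.} The required constraint $\amorevelement_1(\type) + \amorevelement_2(\type) = \sumamorev(s)$ is equivalent to the pointwise sum condition
\begin{align*}
\amortilelement_1(\type) + \amortilelement_2(\type) \;=\; \tfrac{1-\sumdist(s)}{\sumdens(s)}\,\dens(\type),
\end{align*}
which pins down one linear combination of $(\amortilelement_1,\amortilelement_2)$. I would rotate to the orthonormal coordinates $u=(t_1+t_2)/\sqrt 2$, $w=(t_1-t_2)/\sqrt 2$: the sum condition determines $\amortil_u := (\amortilelement_1+\amortilelement_2)/\sqrt 2$ pointwise, and divergence density equality becomes $\partial_w \amortil_w = -\dens - \partial_u \amortil_u$, a first-order ODE in $w$ at each fixed $u$ for $\amortil_w := (\amortilelement_1-\amortilelement_2)/\sqrt 2$. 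The initial condition $\amortil_w(u,0) = 0$ is dictated by symmetry across the diagonal (equivalently, by boundary orthogonality on the diagonal in the half-space). Integrating the ODE yields a unique $\amortil$, and hence a unique $\amorev$. Boundary orthogonality on the far boundary $s=\bar s$ is automatic because $\sumdist(\bar s)=1$ forces $\amortilelement_1+\amortilelement_2=0$ there while the outward normal is proportional to $(1,1)$; any residual flux on the remaining boundary (where $s$ or $t_2$ is minimized) is admissible as inflow under the \boundary condition of \autoref{def:amortiltoamorev}.

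\textbf{Tangency.} Following the divergence-theorem argument of \autoref{lem:strongamort} with $t_1$ replaced by $s$, I would fix $(s,q)$ and consider the region
\begin{align*}
T(s,q) = \{\type' \in \typespace \cap \{t_1 \geq t_2\} : t_1'+t_2' \geq s,\ F(t_2' \mid t_1'+t_2') \leq q\}.
\end{align*}
By divergence density equality, the interior contributes $-q(1-\sumdist(s))$ to the total flux on $\partial T(s,q)$. On the portion of $\partial T(s,q)$ lying in $\{t_1'+t_2'=s\}$, the sum condition combined with the defining identity $\int_{t_2' \leq C_q(s)} \dens(s-t_2',t_2')\,\dd t_2' = q\,\sumdens(s)$ produces flux $+q(1-\sumdist(s))$, which exactly cancels the interior term; the diagonal and $s=\bar s$ arcs contribute zero by the orthogonality established above. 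Hence the flux of $\amortil$ across the portion of $C_q$ bounding $T(s,q)$ vanishes for every $(s,q)$, which forces $\amortil(\type)$ to be tangent to $C_q$ at each point.

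\textbf{Main obstacle.} The delicate step is the flux calculation on the constant-$s$ side of $\partial T(s,q)$: combining the pointwise sum condition with the defining identity of $C_q$ must yield precisely the contribution that cancels the interior divergence term, leaving only the flux across $C_q$; this cancellation is what pins down the tangency. The remaining bookkeeping (symmetry extension, boundary conditions at the far boundary, and well-posedness of the ODE) is a direct adaptation of the corresponding steps in \autoref{d:2d-extension} and \autoref{lem:strongamort}.
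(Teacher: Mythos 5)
The overall strategy is the same as the paper's (divergence theorem applied to the region $T(s,q)$ with the sum constraint computing the flux across the line $t_1+t_2=s$), and the explicit ODE construction of $\amortil$ in rotated coordinates is a reasonable way to exhibit existence and uniqueness. However, there is a genuine gap in the boundary accounting. For $q<1$, the boundary of your $T(s,q)=\{t_1'+t_2'\geq s,\ F(t_2'\mid t_1'+t_2')\leq q\}$ has four pieces: the curve $C_q$ (top), the constant-sum segment $\{t_1'+t_2'=s\}$ (left), the far boundary $s=\bar s$ (right), and the arc of type space where the ratio $\theta=t_2/t_1$ is minimized (bottom). The diagonal $t_1=t_2$, which is where your initial condition $\amortil_w(u,0)=0$ enforces orthogonality, is \emph{not} part of $\partial T(s,q)$ when $q<1$; it is $C_1$, the top boundary only in the limiting case $q=1$. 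So "the diagonal and $s=\bar s$ arcs contribute zero" mis-attributes the required vanishing flux: the argument needs orthogonality on the \emph{bottom} arc. Worse, you explicitly assert that "any residual flux on the remaining boundary (where $s$ or $t_2$ is minimized) is admissible as inflow," which is incorrect for the bottom: inflow on the $t_2$-minimum boundary (a) would destroy tightness for bundle pricing, since utility there is strictly positive for types beyond the reserve, violating the canonical-amortization condition of \autoref{def:amortiltoamorev} (see footnote~\ref{footnote:canonical-exceptions}); and (b) would leave an uncontrolled term in your divergence calculation, so the conclusion $\int_{C_q}\amortil\cdot\normal=0$ does not follow as stated.

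The gap is fixable, and the fix stays within your framework: apply the divergence theorem first at $q=1$, where the top boundary \emph{is} the diagonal and your constructed orthogonality applies; this yields $\int_{\text{bottom}(s,1)}\amortil\cdot\normal\,\dd\type=0$ for every $s$, and differentiating in $s$ gives pointwise orthogonality on the bottom arc. With that established, the argument for general $q<1$ goes through. Alternatively, one can redefine the test region as $\{t_1'+t_2'\geq s,\ F(t_2'\mid t_1'+t_2')\geq q\}$ (above $C_q$) so that the diagonal genuinely is a boundary arc; the left-flux computation then uses the complementary mass $1-q$ and the cancellation is the same. Either way, this additional step is needed to close the proof.
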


\begin{proof}[Proof of \autoref{lem:additiveamortization}]
The assumption that $F(\theta|s)$ is monotone implies that the equi-quantile curves are \thetamonotone.  The tangency property of \autoref{lem:gamma} implies that $\amortilelement_1
\frac{t_2}{t_1} \geq \amortilelement_2$ and subsequently $\amorevelement_1(\type)\frac{t_2}{t_1}
\leq \amorevelement_2(\type)$.  \autoref{phi-sufficient-additive} then implies that $\virt$ is an amortization.
\end{proof}

\begin{proof}[Proof of \autoref{thm:additiveif}] 
\autoref{lem:additiveamortization} showed that $\virt$ is an amortization.  \autoref{l:vsm-consistency-add} showed that the allocation of bundle pricing maximizes virtual surplus with respect to $\virt$, and \autoref{l:sd-proj-add} showed that $\virt$ is tight for bundle pricing.  Invoking \autoref{prop:vvf=>IC+opt} completes the proof.
\end{proof}
\section{Discussion}\label{s:conclusions}
We briefly discuss the generality of the design of virtual values can be applied to prove optimality of mechanisms.  In the context of the simple favorite outcome mechanism studied in this paper, the method gives very general and nearly tight conditions of optimality.  However, the approach has certain limitations.  For example, with linear values and costs, pointwise optimization of surplus can result only in deterministic outcomes, whereas randomized outcomes are know to be optimal in various settings.\footnote{Randomized outcomes might arise from surplus optimization if virtual values are equal.  However, imposing such a constraint will severely limit the applicability of the approach.}  In spite of that, virtual surplus optimization can create internal allocations with nonlinear valuations and costs, as studied for example by \citet{A96} and \citet{RC98}.

\bibliographystyle{apalike}
\bibliography{bibs}

\begin{thebibliography}{}

\bibitem[Adams and Yellen, 1976]{AdY76}
Adams, W.~J. and Yellen, J.~L. (1976).
\newblock Commodity bundling and the burden of monopoly.
\newblock {\em The quarterly journal of economics}, pages 475--498.

\bibitem[Alaei et~al., 2013]{AFHH13}
Alaei, S., Fu, H., Haghpanah, N., and Hartline, J. (2013).
\newblock The simple economics of approximately optimal auctions.
\newblock In {\em FOCS}.

\bibitem[Armstrong, 1996]{A96}
Armstrong, M. (1996).
\newblock Multiproduct nonlinear pricing.
\newblock {\em Econometrica}, 64(1):51--75.

\bibitem[Borodin and El-Yaniv, 1998]{BoE98}
Borodin, A. and El-Yaniv, R. (1998).
\newblock {\em Online Computation and Competitive Analysis}.
\newblock Cambridge University Press, New York, NY, USA.

\bibitem[Castro, 2007]{cas07}
Castro, L. I.~d. (2007).
\newblock Affiliation, equilibrium existence and the revenue ranking of
  auctions.

\bibitem[Daskalakis et~al., 2013]{DDT13}
Daskalakis, C., Deckelbaum, A., and Tzamos, C. (2013).
\newblock Mechanism design via optimal transport.
\newblock In {\em Proceedings of the fourteenth ACM conference on Electronic
  commerce}, pages 269--286. ACM.

\bibitem[Daskalakis et~al., 2014]{DDT14-2}
Daskalakis, C., Deckelbaum, A., and Tzamos, C. (2014).
\newblock Strong duality for a multiple-good monopolist.
\newblock {\em CoRR}, abs/1409.4150.

\bibitem[Giannakopoulos and Koutsoupias, 2014]{GK14}
Giannakopoulos, Y. and Koutsoupias, E. (2014).
\newblock Duality and optimality of auctions for uniform distributions.
\newblock In {\em ACM Conference on Electronic Commerce}.

\bibitem[Hart and Nisan, 2012]{HaN12}
Hart, S. and Nisan, N. (2012).
\newblock Approximate revenue maximization with multiple items.
\newblock In {\em {ACM} Conference on Electronic Commerce, {EC} '12, Valencia,
  Spain, June 4-8, 2012}, page 656.

\bibitem[Manelli and Vincent, 2006]{MV06}
Manelli, A.~M. and Vincent, D.~R. (2006).
\newblock Bundling as an optimal selling mechanism for a multiple-good
  monopolist.
\newblock {\em Journal of Economic Theory}, 127(1):1--35.

\bibitem[McAfee and McMillan, 1988]{MM88}
McAfee, R.~P. and McMillan, J. (1988).
\newblock Multidimensional incentive compatibility and mechanism design.
\newblock {\em Journal of Economic Theory}, 46(2):335--354.

\bibitem[Milgrom and Weber, 1982]{MiW82}
Milgrom, P.~R. and Weber, R.~J. (1982).
\newblock A theory of auctions and competitive bidding.
\newblock {\em Econometrica: Journal of the Econometric Society}, pages
  1089--1122.

\bibitem[Mussa and Rosen, 1978]{MuR78}
Mussa, M. and Rosen, S. (1978).
\newblock Monopoly and product quality.
\newblock {\em Journal of Economic theory}, 18(2):301--317.

\bibitem[Myerson, 1981]{M81}
Myerson, R. (1981).
\newblock Optimal auction design.
\newblock {\em Mathematics of Operations Research}, 6(1):pp. 58--73.

\bibitem[Pavlov, 2011]{Pav11}
Pavlov, G. (2011).
\newblock A property of solutions to linear monopoly problems.
\newblock {\em The BE Journal of Theoretical Economics}, 11(1).

\bibitem[Pycia, 2006]{Pyc06}
Pycia, M. (2006).
\newblock Stochastic vs deterministic mechanisms in multidimensional screening.

\bibitem[Riley and Zeckhauser, 1983]{RZ-83}
Riley, J. and Zeckhauser, R. (1983).
\newblock Optimal selling strategies: When to haggle, when to hold firm.
\newblock {\em The Quarterly Journal of Economics}, 98(2):267--289.

\bibitem[Rochet, 1985]{Roc85}
Rochet, J. (1985).
\newblock The taxation principle and multi-time hamilton-jacobi equations.
\newblock {\em Journal of Mathematical Economics}, 14(2):113 -- 128.

\bibitem[Rochet and Chone, 1998]{RC98}
Rochet, J.-C. and Chone, P. (1998).
\newblock Ironing, sweeping, and multidimensional screening.
\newblock {\em Econometrica}, 66(4):783--826.

\bibitem[Rochet and Stole, 2003]{RoS03}
Rochet, J.-C. and Stole, L.~A. (2003).
\newblock The economics of multidimensional screening.
\newblock {\em Econometric Society Monographs}, 35:150--197.

\bibitem[Stokey, 1979]{Sto79}
Stokey, N.~L. (1979).
\newblock Intertemporal price discrimination.
\newblock {\em The Quarterly Journal of Economics}, pages 355--371.

\bibitem[Thanassoulis, 2004]{Tha04}
Thanassoulis, J. (2004).
\newblock Haggling over substitutes.
\newblock {\em J. Economic Theory}, 117(2):217--245.

\bibitem[Wang and Tang, 2014]{WT-13}
Wang, Z. and Tang, P. (2014).
\newblock Optimal mechanisms with simple menus.
\newblock In {\em ACM Conference on Electronic Commerce}.

\bibitem[Wilson, 1993]{Wil93}
Wilson, R.~B. (1993).
\newblock {\em Nonlinear pricing}.
\newblock Oxford University Press.

\end{thebibliography}

\newpage
\appendix
\section{Proofs from Section 4}
\label{a:missing}
This section includes proofs from  \autoref{s:uniform}.

\subsection{Proof of \autoref{t:uniform-pricing-strong-amortization-many-agents-items}}\label{a:proofmostgeneral}
\begin{reptheorem}{t:uniform-pricing-strong-amortization-many-agents-items}
A favorite-outcome projection mechanism is optimal for an item with $m\geq1$ configurations, multiple independent agents, and any service cost $\cost \geq 0$, if the distribution of each agent is \quasisymmetric and (a) the favorite outcome projection has monotone
non-decreasing amortization $\phi_{\max}(v) = v -
\frac{1-\sdpdist(v)}{\sdpdens(v)}$ and (b) $\dist(\theta_j|\val,i, \boldsymbol{q}_{<j})$ is monotone non-increasing in $\val$ for all $i$, $j$, $\theta_j$, and $\boldsymbol{q}_{<j}$.
\end{reptheorem}

\begin{proof}
The construction extends the construction of \autoref{t:uniform-pricing-strong-amortization-general}.  Let outcome 1 be the favorite outcome.  For $\boldsymbol{q}$, let $\curve^{\boldsymbol{q}}(t_1)$ be a function that maps $t_1$ to $(t_2,\ldots,t_m)$ such that $\boldsymbol{q}(\type) = \boldsymbol{q}$.  Define  $\amortil$ by integrating by parts along the curves $\curve^{\boldsymbol{q}}(t_1)$.  This defines $\amortilelement_1(\type) =  \frac{1- \sdpdist(t_1)}{\sdpdens(t_1)} \dens(\type)$, and $\amortilelement_i(\type) = \amortilelement_1(\type)\partial_{t_1} \curve_i^{\boldsymbol{q}}(t_1)$.  The assumptions of the theorem also implies that  $\amortilelement_i(\type) - (t_i/t_1) \amortilelement_1(\type)\leq 0$.  As a result, $\amorevelement_i(\type) \leq (t_i/t_1) \amorevelement_1(\type)$.

With multiple agents, $m\geq 1$, and uniform service cost $c$, ex-post optimization of virtual surplus allocates the agent with the highest positive virtual value.  The argument above shows that the highest positive virtual value of any agent corresponds to the favorite outcome of that agent, and is equal to the virtual value of the single-dimensional projection.
\end{proof}

\subsection{Product Distributions Over Values}\label{app:independent}
In this section we derive conditions that prove optimality of the single-dimensional projection for product distributions over values.
\begin{theorem}
Uniform pricing is optimal for any cost $c$ for an instance with two outcomes where the value for each outcome is drawn independently from a distribution with density proportional to $e^{h(\log(x))}$.
\end{theorem}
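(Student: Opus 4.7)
The plan is to apply \autoref{t:uniform-pricing-strong-amortization-general}: since the distribution over $(t_1,t_2)$ is i.i.d.\@ it is symmetric, hence \quasisymmetric, so it suffices to verify the regularity condition (a) on the favorite-outcome projection and the stochastic dominance condition (b) on $\dist(\theta\mid v,i)$. Let $F$ denote the marginal CDF and $f(x)\propto e^{h(\log x)}$ the density, and change variables via $z=\log v$, $H(z)=h(z)+z$, so that $H$ is convex (sum of convex $h$ and linear) and non-decreasing. With $\tilde F(z)=\int_{-\infty}^{z} e^{H(y)}\,\dd y$, one has $F(v)=\tilde F(\log v)$ and $v f(v) = e^{H(\log v)}$.

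For (a), let $G(v)=F(v)^2=\sdpdist(v)$ and $g(v)=2F(v)f(v)=\sdpdens(v)$. A direct differentiation yields
\[
  \sdpamorev'(v) \;=\; 2 + \frac{(1-G(v))\,g'(v)}{g(v)^2},
\]
so it suffices to prove $g'\ge 0$, i.e.\@ $f(v)^2+F(v)f'(v)\ge 0$, i.e.\@ $f(v)/F(v)+f'(v)/f(v)\ge 0$. The first term is positive and the second equals $h'(\log v)/v$, which is non-negative because $h$ is non-decreasing. Hence $g$ is non-decreasing and $\sdpamorev$ is monotone.

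For (b), independence gives $\dist(\theta\mid v,1)=F(v\theta)/F(v)$, so I need $F(v\theta)/F(v)$ non-increasing in $v$ for each $\theta\in[0,1]$; equivalently, in the $z=\log v$ coordinate, $\tilde F(z+\log\theta)/\tilde F(z)$ is non-increasing in $z$, which is equivalent to the log-convexity of $\tilde F$, i.e.\@ $(\log\tilde F)'(z)=e^{H(z)}/\tilde F(z)$ being non-decreasing. Computing
\[
  (\log\tilde F)''(z) \;=\; \frac{e^{H(z)}}{\tilde F(z)}\left(H'(z) - \frac{e^{H(z)}}{\tilde F(z)}\right),
\]
non-negativity reduces to $H'(z)\tilde F(z)\ge e^{H(z)}$. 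This is the key step: by convexity, $H'(y)\le H'(z)$ for all $y\le z$, so
\[
  H'(z)\,\tilde F(z) \;=\; \int_{-\infty}^{z} H'(z)\,e^{H(y)}\,\dd y
  \;\ge\; \int_{-\infty}^{z} H'(y)\,e^{H(y)}\,\dd y \;=\; e^{H(z)},
\]
using $H(y)\to-\infty$ as $y\to-\infty$ (which holds because $h$ is bounded above on any interval $(-\infty,z]$ by monotonicity, and $H(y)=h(y)+y$).

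The only real obstacle is the correlation step (b); condition (a) amounts to an elementary monotonicity check. Putting (a) and (b) together lets me invoke \autoref{t:uniform-pricing-strong-amortization-general} to conclude that uniform pricing is optimal for every $c\ge 0$.
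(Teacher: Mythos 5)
Your proof is correct, and for the key correlation step you take a genuinely different route from the paper. For the regularity condition (a) you and the paper do essentially the same thing: show the density $\sdpdens=2F f$ of the max is non-decreasing (using only that $h$ is non-decreasing), which forces $\sdpamorev$ to be monotone; you simply display the differentiation explicitly. For the stochastic-dominance condition (b), the paper proves the strictly stronger \emph{affiliation} (log-supermodularity) property of the joint density $\denscor{M\!R}(t_1,\theta)=f(t_1)f(t_1\theta)$ via a clean ``equal geometric mean'' trick: it observes that $t_1\theta'\cdot t_1'\theta = t_1\theta\cdot t_1'\theta'$, so convexity of $h$ on log-values yields the supermodularity inequality directly, and affiliation then implies FOSD. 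You instead prove exactly the needed FOSD monotonicity of $\dist(\theta\mid v,1)=F(v\theta)/F(v)$, reducing it to log-convexity of $\tilde F$ in the log-coordinate and establishing that via the integral inequality $H'(z)\tilde F(z)\ge e^{H(z)}$, which you get by comparing $H'(z)$ to $H'(y)$ for $y\le z$ using convexity and then integrating by the fundamental theorem of calculus. Your argument is a bit more computational but proves precisely what is required; the paper's is shorter and yields the stronger MLR property as a byproduct. One shared implicit assumption worth noting: both arguments need the density formula $\eta e^{h(\log x)}$ to hold down to $x\to 0^{+}$ (your step $e^{H(y)}\to 0$ as $y\to-\infty$, the paper's step that $t_1\theta$ lies in the support), so the theorem is understood to apply when the value support is an interval of the form $(0,\bar t\,]$; this is consistent with the failure of the conclusion for uniform $[5,6]$.
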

We will show that the distribution satisfies the conditions of \autoref{t:uniform-pricing-strong-amortization-many-agents-items}.  In order to show that $\dist(\theta|\val)$ is monotone in $\val$, we show that the joint distribution of $\theta$ and $\val$ satisfies the stronger property of affiliation. That is,
\begin{align*}
\denscor{M\!R}(t_1,\theta) \times \denscor{M\!R}(t'_1,\theta') \geq \denscor{M\!R}(t_1,\theta') \times \denscor{M\!R}(t'_1,\theta), \qquad &\forall t_1 \leq t'_1, \theta \leq \theta',
\end{align*}

\noindent where $\denscor{M\!R}(t_1,\theta) = f(t_1,t_1\theta)$ is the joint distribution of $t_1$ and $\val$. Since the distribution is a product one, this implies that $\denscor{M\!R}(t_1,\theta) = f_1(t_1) f_2(t_1\theta)$. Notice that pair of values $t\theta'$ and $t'\theta$ have the same geometric mean as the pair $t\theta$, $t'\theta'$. Also given the assumptions, $t\theta \leq t'\theta, t\theta' \leq t'\theta'$. Since $\dens(x) = \eta \cdot e^{h(\log(x))}$,
\begin{eqnarray*}
f_2(t_1\theta) \times f_2(t'_1\theta') \geq f_2(t\theta') \times f_2(t'\theta).
\end{eqnarray*}

Multiplying both sides by $f_1(t_1) \times f_1(t'_1)$ we get
\begin{eqnarray*}
f_1(t_1) f_2(t_1\theta) \times f_1(t'_1) f_2(t'_1\theta') \geq f_1(t_1) f_2(t_1\theta') \times f_1(t'_1) f_2(t'_1\theta),
\end{eqnarray*}

\noindent which since the distribution is a product distribution implies that
\begin{align*}
\denscor{M\!R}(t_1,\theta) \times \denscor{M\!R}(t'_1,\theta') \geq \denscor{M\!R}(t_1,\theta') \times \denscor{M\!R}(t'_1,\theta).
\end{align*}

To complete the proof, we need to show that $\sdpdist$ is regular.  This is the case because $\sdpdens(\val) = \dist(\val)\dens(\val)$, $\dens(\val) = \eta \cdot e^{h(\log(\val))}$ is monotone in $\val$ by monotonicity of $h$.
\subsection{Proof of \autoref{thm:unitdemandironing}}\label{a:ironing}

\begin{reptheorem}{thm:unitdemandironing}
A favorite-outcome projection mechanism is optimal for an item with $m=2$ configurations, multiple independent agents, and any service cost $\cost \geq 0$, if the distribution of each agent is \quasisymmetric with convex equi-quantile curves.
\end{reptheorem}

We will design a virtual value function $\virt$ from the canonical amortization $\amorev$ satisfying conditions of \autoref{lem:framework}. Importantly, $\virt$ satisfies the monotonicity of $\virtelement_1$ without requiring regularity of the distribution of the favorite item projection. We will start by defining a mapping between the type space and a two-dimensional {quantile} space. We will then use Myerson's ironing to pin down the first coordinate $\virtelement_1$ of the amortization. The second component $\virtelement_2$ is then defined such that the expected virtual surplus with respect $\virt$ upper bounds revenue for all incentive compatible mechanisms. To do this, we invoke integration by parts along curves defined by the quantile mapping, and then use incentive compatibility to identify a direction that the vector $\virt - \amorev$ may have for $\virt$ to be an upper bound on revenue. We use this identity to solve for $\virtelement_2$, and finally identify conditions such that optimization of $\virt$ gives uniform pricing.

We first transform the value space to quantile space using following mappings. Recall from \autoref{s:uniform} that $\sdpdist$ and $\sdpdens$ are the distribution and the density functions of the favorite item projection. Define the first quantile mapping
\begin{align*}
q_1(t_1,t_2) &= 1- \sdpdist(t_1)
\intertext{to be the probability that a random draw $t'_1$ from $\sdpdist$ satisfies $t'_1\geq t_1$, and the second quantile mapping}
q_2(t_1,t_2) &= 1- \frac{\int_{t'_2=0}^{t_2}\dens(t_1,t'_2)\; \dd t'_2}{\sdpdens(t_1)}
\intertext{where $\sdpdens(t_1) = \int_{0}^{t_1} \dens(t_1,t'_2) \; \dd t'_2$ to the probability that a random draw $\type'$ from a distribution with density $f$, conditioned on $t'_1=t_1$, satisfies $t'_2\geq t_2$. The determinant of the Jacobian matrix of the transformation is}
\begin{vmatrix}
\frac{\partial q_1}{\partial t_1}&\frac{\partial q_1}{\partial t_2}\\
\frac{\partial q_2}{\partial t_1}&\frac{\partial q_2}{\partial t_1}
\end{vmatrix}
&=
\begin{vmatrix}
-\sdpdens(t_1)&0\\
\frac{\partial q_2}{\partial t_1}&-\frac{\dens(t_1,t_2)}{\sdpdens(t_1)}
\end{vmatrix}
= \dens(t_1,t_2).
\intertext{As a result, we can express revenue in quantile space as follows}
\int \int \sagentmech(\typeprelim) \cdot \amorev(\typeprelim) \; f(\typeprelim)  \; \dd \typeprelim &= \int_{q_1=0}^1 \int_{q_2=0}^1 \sagentmech^Q(\quantile) \cdot \amorev^Q(\quantile) \; \dd \quantile,
\end{align*}
\noindent where $\sagentmech^Q$ and $\amorev^Q$ are representations of $\sagentmech$ and $\amorev$ in quantile space. In particular, $\amorevelement_1^Q(\quantile) = \sdpamorev(t_1(q_1))$ might not be monotone in $q_1$. In what follows we design the amortization $\virt^Q$ using $\amorev^Q$.

We now derive $\virt^Q$ from the properties it must satisfy. In particular, we require $\virtelement^Q_1(\quantile) = \virtelement^Q_1(q_1)$ to be a monotone non-decreasing function of $q_1$, and that $\virtelement^Q_1(\quantile) \geq \virtelement^Q_2(\quantile)$ whenever either is positive. These properties will imply that a point-wise optimization of $\virt^Q$ will result in an incentive compatible allocation of only the favorite item, such that $\sagentmechelement_1^Q(\quantile) = \sagentmechelement_1^Q(q_1)$, and $\sagentmechelement_2^Q(\quantile) =0$ (which is the case for the allocation of uniform pricing). Note that for any such allocation,
\begin{align*}
\int_{q_1=0}^1 \int_{q_2=0}^1 \sagentmech^Q(\quantile) \cdot \amorev^Q(\quantile) \; \dd \quantile &= \int_{q_1} \sagentmechelement^Q_1(q_1) \amorevelement^Q_1(q_1) \; \dd q_1.
\intertext{Similarly, for any such allocation,}
\int_{q_1=0}^1 \int_{q_2=0}^1 \sagentmech^Q(\quantile) \cdot \virt^Q(\quantile) \; \dd \quantile &= \int_{q_1} \sagentmechelement^Q_1(q_1) \virtelement^Q_1(q_1) \; \dd q_1.
\end{align*}
\noindent We can therefore use Myerson's ironing and define $\virtelement^Q_1$ to be the derivative of the convex hull of the integral of $\amorevelement^Q_1$. This will imply that $\virt^Q$ upper bounds revenue for any allocation that satisfies $\sagentmechelement^Q_1(\quantile) = \sagentmechelement^Q_1(q_1)$, and $\sagentmechelement^Q_2(\quantile) = 0$, with equality for the allocation that optimizes $\virt^Q$ pointwise.

We will next define $\virtelement^Q_2$ such that $\virt^Q$ upper bounds revenue for \emph{all} incentive compatible allocations. That is, we require that for all incentive compatible $\sagentmech$,
\begin{align*}
\int \int \sagentmech^Q(\quantile) \cdot (\virt^Q - \amorev^Q)(\quantile) \; \dd \quantile &\geq 0.
\end{align*}
\noindent Using integration by parts we can write
\begin{align*}
\hspace{-20 mm}\int \int \sagentmech^Q(\quantile) \cdot (\virt^Q - \amorev^Q)(\quantile) \; \dd \quantile &= \int_{q_2} \int_{q_1} \frac{\dd}{\dd q_1} \sagentmech^Q(\quantile) \cdot \int_{q'_1\geq q_1} (\virt^Q - \amorev^Q)(q'_1,q_2) \; \dd q'_1 \; \dd q_1 \; \dd q_2.
\end{align*}
\noindent Incentive compatibility implies that the dot product of any vector and the change in allocation rule in the direction of that vector is non-negative (\autoref{l:gradutil=alloc}). In particular this must be true for the tangent vector to equi-quantile curve parameterized by $q_2$. Thus incentive compatibility of $\sagentmech$ implies that the above expression is positive if the vector that is multiplied by $\frac{\dd}{\dd q_1} \sagentmech^Q(\quantile)$ is tangent to the equi-quantile curve $(t_1(q'_1,q_2),t_2(q'_1,q_2)), 0\leq q'_1 \leq q_1$ at $q'_1=q_1$,
\begin{align*}
\frac{\int_{q'_1\geq q_1} (\virtelement^Q_2 - \amorevelement_2^Q)(q'_1,q_2) \; \dd q'_1}{\int_{q'_1\geq q_1} (\virtelement^Q_1 - \amorevelement_1^Q)(q'_1,q_2) \; \dd q'_1} &= \frac{\frac{\dd}{\dd q_1}t_2(\quantile)}{\frac{\dd}{\dd q_1}t_1(\quantile)}.
\end{align*}

\noindent We will set $\virtelement^Q_2$ to satisfy the above equality. In particular, define for simplicity $\mu(\quantile) = \frac{\frac{\dd}{\dd q_1}t_2(\quantile)}{\frac{\dd}{\dd q_1}t_1(\quantile)}$ and take derivative of the above equality with respect to $q_1$
\begin{align*}
\hspace{-10 mm}\virtelement^Q_2(\quantile) &= \amorevelement_2^Q(\quantile) + (\virtelement^Q_1 - \amorevelement_1^Q)(\quantile)\cdot\mu(\quantile) - \int_{q'_1\geq q_1} (\virtelement^Q_1 - \amorevelement_1^Q)(q'_1,q_2) \; \dd q'_1 \cdot \frac{\dd}{\dd q_1}\mu(\quantile).
\end{align*}

As a result, $\virt^Q$ defined above is a tight amortization if its optimization indeed gives uniform pricing. The next lemma formally states the above discussion.

\begin{lemma}\label{virtisweakamort}
The virtual surplus, with respect to $\virt^Q$ of any incentive compatible allocation $\sagentmech$ upper bounds its revenue. If $\sagentmechelement_1$ is only a function of $q_1$ (equivalently $t_1$), $\sagentmechelement'_1(q_1)=0$ whenever $\int_{q'_1\geq q_1}(\virtelement^Q_1-\amorevelement^Q_1)(q'_1)\; \dd q'_1>0$,  and $\sagentmechelement_2(\quantile)=0$ for all $\quantile$, the expected virtual surplus with respect to $\virt^Q$ equals revenue.
\end{lemma}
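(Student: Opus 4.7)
The plan is to apply \autoref{prop:amorev-to-virt} with the canonical amortization $\amorev^Q$ given by the quantile-coordinate representation of $\amorev$ from \autoref{d:2d-extension}, so that $\amorevelement^Q_1(q_1)=\sdpamorev(t_1(q_1))$ and, by \autoref{lem:strongamort}, $\amortil$ is tangent to the equi-quantile curves. It then suffices to show that $\int \sagentmech^Q(\quantile)\cdot(\virt^Q-\amorev^Q)(\quantile)\,\dd\quantile\geq 0$ for every incentive compatible $\sagentmech$, with equality for the restricted class of allocations described in the lemma.

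For the inequality I would apply the integration by parts displayed in the paragraph preceding the lemma, rewriting the integral as
\begin{align*}
\int_{q_2}\int_{q_1} \tfrac{\dd}{\dd q_1}\sagentmech^Q(\quantile)\cdot I(\quantile)\,\dd q_1\,\dd q_2,
\qquad
I(q_1,q_2)=\int_{q'_1\geq q_1}(\virt^Q-\amorev^Q)(q'_1,q_2)\,\dd q'_1.
\end{align*}
The boundary terms at $q_1=1$ vanish because $I(1,q_2)=0$, and those at $q_1=0$ vanish because $I(0,q_2)=0$ in both coordinates: the first coordinate from the endpoint-matching of the ironing operation that defines $\virtelement^Q_1$, and the second coordinate from a further integration by parts that rewrites $\virtelement^Q_2-\amorevelement^Q_2$, via its defining formula, as a total $q_1$-derivative that telescopes. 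By construction $I(q_1,q_2)$ is parallel to the tangent $(\partial_{q_1} t_1,\partial_{q_1} t_2)$ of the equi-quantile curve at $(q_1,q_2)$; since $\sagentmech$ is the gradient of a convex utility by \autoref{l:gradutil=alloc}, its Jacobian is positive semidefinite, and $\tfrac{\dd}{\dd q_1}\sagentmech^Q$ is the directional derivative of $\sagentmech$ along that same tangent. Consequently $\tfrac{\dd}{\dd q_1}\sagentmech^Q\cdot I$ is non-negative pointwise and the double integral is non-negative.

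For tightness under the hypotheses of the lemma, $\sagentmechelement_2\equiv 0$ and $\sagentmechelement_1$ depends only on $q_1$, so the virtual surplus collapses to $\int_0^1 \sagentmechelement_1(q_1)\,\virtelement^Q_1(q_1)\,\dd q_1$, while the revenue of such an allocation, viewed as a single-dimensional mechanism on the favorite-outcome projection, equals $\int_0^1 \sagentmechelement_1(q_1)\,\amorevelement^Q_1(q_1)\,\dd q_1$ by \citet{M81}. Integrating the difference once more by parts yields $\int_0^1 \sagentmechelement'_1(q_1)\,I_1(q_1)\,\dd q_1$ with $I_1(q_1)=\int_{q'_1\geq q_1}(\virtelement^Q_1-\amorevelement^Q_1)(q'_1)\,\dd q'_1$, and the boundary contributions vanish from $I_1(0)=I_1(1)=0$. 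The stated complementary slackness $\sagentmechelement'_1(q_1)=0$ whenever $I_1(q_1)>0$ is precisely Myerson's ironing slackness (the allocation is forced to be constant on the ironed intervals where the hull strictly differs from the original), which kills the integrand and yields equality. The main obstacle is the bookkeeping of the two layers of boundary terms, which hinges on $\virtelement^Q_2$ having been designed precisely so that $I$ is aligned with the equi-quantile tangent and so that $I_2(0,q_2)$ reduces, via a one-dimensional integration-by-parts identity, to a boundary term in $I_1$ that vanishes by the endpoint-matching of ironing.
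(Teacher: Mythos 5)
Your proposal is correct and follows essentially the same route as the paper: the paper's own ``proof'' of this lemma is the discussion immediately preceding it, which does the integration by parts in $q_1$, constructs $\virtelement^Q_2$ precisely so that $I(\quantile)=\int_{q'_1\ge q_1}(\virt^Q-\amorev^Q)(q'_1,q_2)\,\dd q'_1$ is tangent to the equi-quantile curve, and invokes convexity of utility (\autoref{l:gradutil=alloc}) to get pointwise nonnegativity, with tightness from Myerson's ironing complementary slackness. The only added value in your write-up, which the paper leaves implicit, is the explicit bookkeeping of the boundary terms at $q_1=0,1$ (endpoint-matching of the hull for coordinate one, and the telescoping identity $\virtelement^Q_2-\amorevelement^Q_2=-\tfrac{\dd}{\dd q_1}(I_1\mu)$ for coordinate two, which also shows $I=I_1(1,\mu)$ is indeed parallel to the tangent). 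One minor point worth tightening, shared with the paper: for $\tfrac{\dd}{\dd q_1}\sagentmech^Q\cdot I\ge 0$ one needs $I$ to be a \emph{non-negative} scalar multiple of the tangent $(\partial_{q_1}t_1,\partial_{q_1}t_2)$ — parallelism alone is not enough, since the PSD quadratic form flips sign with a negative multiple — so the sign of $I_1/\partial_{q_1}t_1$ must be checked against the sign convention of the ironing; neither your proof nor the paper's spells this out.
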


We will finally need to verify that $\virt^Q$ also satisfies the properties required for ex-post optimization. \autoref{ironingcomparevirt} below identifies convexity of equi-quantile curves as a sufficient condition.  The proof requires the following technical lemma.

\begin{lemma}\label{lem:tmorethanphi}
The amortization $\virtelement$ satisfies $\virtelement_1(\type) \leq t_1$.
\end{lemma}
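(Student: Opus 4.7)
The plan is to unpack the construction of $\virtelement^Q_1$ via Myerson ironing and verify the bound pointwise in quantile coordinates. Since $\virtelement_1(\type) = \virtelement^Q_1(q_1(t_1))$ depends only on $t_1$, and $R(q_1)/q_1 = t_1(q_1)$ for the revenue curve $R(q_1) = q_1\,t_1(q_1)$ (whose derivative is $\amorevelement^Q_1$), the claim $\virtelement_1(\type) \le t_1$ is equivalent to $\bar R'(q_1) \le R(q_1)/q_1$ for $q_1 \in (0,1]$, where $\bar R$ is the Myerson-ironed hull of $R$ whose derivative defines $\virtelement^Q_1$. A key observation I will use repeatedly is that $t_1(\cdot)$ is non-increasing in the quantile, since $q_1 = 1-\sdpdist(t_1)$. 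With this in hand, I would split into two cases depending on whether $q_1$ lies in an ironed interval.

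In the unironed case, $\bar R$ agrees with $R$ locally, so
\[
\bar R'(q_1) = R'(q_1) = \sdpamorev(t_1(q_1)) = t_1(q_1) - \frac{1-\sdpdist(t_1(q_1))}{\sdpdens(t_1(q_1))} \le t_1(q_1),
\]
since the subtracted term is non-negative. In the ironed case, where $q_1 \in [q_1^a, q_1^b]$, the hull $\bar R$ is the linear chord from $(q_1^a, R(q_1^a))$ to $(q_1^b, R(q_1^b))$, so $\virtelement^Q_1(q_1)$ equals the chord slope
\[
s \;=\; \frac{q_1^b\, t_1(q_1^b) - q_1^a\, t_1(q_1^a)}{q_1^b - q_1^a}.
\]
A one-line rearrangement shows that $s \le t_1(q_1^b)$ is equivalent to $t_1(q_1^a) \ge t_1(q_1^b)$, which holds since $q_1^a \le q_1^b$ and $t_1(\cdot)$ is non-increasing in the quantile. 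Combined with $t_1(q_1^b) \le t_1(q_1)$ for $q_1 \le q_1^b$, this yields $\virtelement^Q_1(q_1) \le t_1(q_1)$ on the ironed interval as well.

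The only mildly delicate point is the degenerate edge case of an ironed interval reaching $q_1^a = 0$: here the chord-slope formula breaks, but $R(0) = 0$ forces $s = R(q_1^b)/q_1^b = t_1(q_1^b)$ and the same bound follows directly. No other obstacle arises; the proof is essentially a clean two-case argument exploiting (a) that ironing replaces $R$ by a chord of itself and (b) monotonicity of $t_1$ in the quantile.
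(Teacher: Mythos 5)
Your proof is correct and follows essentially the same two-case structure as the paper's: in the unironed region the bound follows since $\sdpamorev(t_1) = t_1 - \frac{1-\sdpdist(t_1)}{\sdpdens(t_1)} \le t_1$, and in an ironed interval the chord slope of $R(q)=q\,t_1(q)$ is bounded above by $t_1$ at the right endpoint using that $t_1(\cdot)$ is non-increasing in the quantile, and then by $t_1$ at the interior point. The only cosmetic difference is that you prove the chord-slope inequality by an explicit equivalence and flag the $q_1^a=0$ edge case, while the paper shows the same inequality chain directly.
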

\begin{proof}
In \emph{un-ironed} regions, that is whenever $\virtelement_1 = \amorevelement_1$, by definition we have $\virtelement_1(\type) = t_1 - \frac{1-\sdpdist(t_1)}{\sdpdens(t_1)} \leq t_1$. If the curve is ironed between $q_1$ and $q'_1 \geq q_1$, then $\virtelement^Q_1$ is the derivative of convex hull of $\amorevelement_1^Q$, which is $\int_0^q t_1(q') - \frac{q}{\sdpdens(t_1(q))} \; \dd q' = qt_1(q)$. Thus, for all $q''_1$ with $q_1 \leq q''_1 \leq q'_1$ we have
\begin{align*}
\virtelement^Q_1(q''_1) &= \frac{q'_1t'_1(q'_1)-q_1t_1(q_1)}{q'_1-q_1}\\
&\leq \frac{q'_1t_1(q'_1) - q_1t_1(q'_1)}{q'_1 - q_1}\\
&=t_1(q'_1) \leq t_1(q''_1).
\end{align*}
\end{proof}
\begin{lemma}\label{ironingcomparevirt}
If the equi-quantile curves are convex for all $q_2$, the amortization $\virt^Q$ defined above satisfies $\theta(\quantile) \virtelement^Q_1(\quantile) \geq \virtelement^Q_2(\quantile)$. As a result, $\virtelement^Q_1\geq \virtelement^Q_2$ whenever either is positive.
\end{lemma}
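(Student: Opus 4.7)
The plan is to reduce the inequality $\theta \virtelement^Q_1 \geq \virtelement^Q_2$ to an explicit algebraic identity and then read off two non-negative contributions from the convexity hypothesis and from Myerson's ironing. First I would exploit tangency of $\amortil$ to the equi-quantile curves, which was already established in \autoref{lem:strongamort} for the canonical amortization underlying $\amorev$. Writing $\mu(\quantile)$ for the common slope $dt_2/dt_1$ of the tangent at $\type$, tangency gives $\amortilelement_2 = \mu\, \amortilelement_1$, and since $\amorevelement_i = t_i - \amortilelement_i/\dens$ we obtain the identity $\amorevelement_2 - \mu\,\amorevelement_1 = t_2 - \mu\, t_1$. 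This kills the $\amorev$-terms cleanly when inserted into the defining formula of $\virtelement^Q_2$.

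Next I would substitute into the definition
\begin{align*}
\virtelement^Q_2(\quantile) = \amorevelement_2^Q(\quantile) + (\virtelement^Q_1 - \amorevelement_1^Q)(\quantile)\,\mu(\quantile) - \Delta(\quantile)\,\tfrac{\dd}{\dd q_1}\mu(\quantile),
\end{align*}
where $\Delta(\quantile) = \int_{q'_1 \geq q_1}(\virtelement^Q_1 - \amorevelement_1^Q)(q'_1,q_2)\,\dd q'_1$, and carry out a short computation using $\theta = t_2/t_1$ to arrive at the key identity
\begin{align*}
\theta(\quantile)\,\virtelement^Q_1(\quantile) - \virtelement^Q_2(\quantile) = (t_1 - \virtelement^Q_1(\quantile))\bigl(\mu(\quantile) - \theta(\quantile)\bigr) + \Delta(\quantile)\,\tfrac{\dd}{\dd q_1}\mu(\quantile).
\end{align*}

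The task then reduces to showing both summands are non-negative. For the first summand: $t_1 - \virtelement^Q_1 \geq 0$ is exactly \autoref{lem:tmorethanphi}, and $\mu - \theta \geq 0$ is a standard fact about convex functions with $C_{q_2}(0)=0$, namely that the tangent slope at $t_1$ dominates the chord slope $C_{q_2}(t_1)/t_1$; this is where convexity of the equi-quantile curves enters. For the second summand, the same convexity implies $\mu$ is non-decreasing in $t_1$ along the curve, hence non-increasing in $q_1$ (since $q_1 = 1 - \sdpdist(t_1)$ is decreasing in $t_1$), so $\tfrac{\dd}{\dd q_1}\mu \leq 0$. Meanwhile, Myerson's ironing gives $\virtelement^Q_1$ as the derivative of the concave hull of the cumulative $\int_0^{q_1}\amorevelement_1^Q$, and matching endpoints $\bar R(1)=R(1)$ of the concave hull with the original force $\Delta(\quantile) \leq 0$. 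The product is therefore non-negative, completing the first assertion.

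The second assertion, that $\virtelement^Q_1 \geq \virtelement^Q_2$ whenever either is positive, follows in one line: if $\virtelement^Q_1 > 0$ then $\virtelement^Q_1 \geq \theta\,\virtelement^Q_1 \geq \virtelement^Q_2$ using $\theta \in [0,1]$ and the inequality just proved; and if $\virtelement^Q_2 > 0$ then $\theta\,\virtelement^Q_1 \geq \virtelement^Q_2 > 0$ forces $\virtelement^Q_1 > 0$, reducing to the first case. I expect the main obstacle to be the verification that the Myerson-style boundary conditions on the concave hull give the right sign for $\Delta$ and that the tangency-based substitution $\amorevelement_2 - \mu\,\amorevelement_1 = t_2 - \mu\, t_1$ holds pointwise (not merely in some integrated sense), since the argument hinges on this cancellation; once those are in hand, the remaining steps are routine.
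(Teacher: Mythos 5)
Your proof is correct and follows essentially the same route as the paper's: tangency of $\amortil$ to the equi-quantile curves (from \autoref{lem:strongamort}), the definition of $\virtelement^Q_2$, $t_1 - \virtelement^Q_1 \geq 0$ from \autoref{lem:tmorethanphi}, convexity giving $\mu \geq \theta$, and the ironing term $\Delta\cdot\frac{\dd}{\dd q_1}\mu$ being non-negative. The paper reaches the same conclusion by first establishing $\mu\virtelement^Q_1 - \virtelement^Q_2 \geq t_1\mu - t_2$, rearranging to $t_2 - \virtelement^Q_2 \geq \mu(t_1 - \virtelement^Q_1) \geq \theta(t_1 - \virtelement^Q_1)$, and then isolating $\theta\virtelement^Q_1 - \virtelement^Q_2$; your single identity $\theta\virtelement^Q_1 - \virtelement^Q_2 = (t_1 - \virtelement^Q_1)(\mu-\theta) + \Delta\frac{\dd}{\dd q_1}\mu$ packages this slightly more cleanly but is algebraically the same decomposition.

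One small but substantive observation in your favor: you have the two individual signs in $\Delta\cdot\frac{\dd}{\dd q_1}\mu$ stated the right way. With the paper's convention $q_1 = 1 - \sdpdist(t_1)$, the quantile is decreasing in $t_1$, so convexity of $C_{q_2}$ gives $\frac{\dd\mu}{\dd q_1} \leq 0$, and Myerson's ironing in this convention takes a \emph{concave} hull of the revenue curve, so $\Delta = \int_{q_1'\geq q_1}(\virtelement^Q_1 - \amorevelement^Q_1)\,\dd q_1' = -[\bar H(q_1)-H(q_1)] \leq 0$. The paper's proof text asserts $\Delta \geq 0$ and $\frac{\dd\mu}{\dd q_1}\geq 0$ (and elsewhere writes ``convex hull''), which appear to be two cancelling sign slips; the product is non-negative either way, so the paper's conclusion stands, but your version is the one with correct individual signs. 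Your one-line argument for the second assertion using $\theta\in[0,1]$ is also fine and fills a step the paper leaves implicit.
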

\begin{proof}
\autoref{lem:strongamort} showed that $\amortil$ is tangent to the equi-quantile curves.  This implies that $\amorevelement_1^Q(\quantile)\mu(\quantile) - \amorevelement_2^Q(\quantile) = t_1(\quantile)\mu(\quantile) - t_2(\quantile)$. By rearranging the definition of $\amorevelement_2$ we get
\begin{align*}
\hspace{-6.8ex}\virtelement^Q_1(\quantile)\mu(\quantile) - \virtelement^Q_2(\quantile) &= \amorevelement_1^Q(\quantile)\mu(\quantile) - \amorevelement_2(\quantile) +  \int_{q'_1\geq q_1} (\virtelement^Q_1 - \amorevelement_1^Q)(q'_1,q_2) \; \dd q'_1 \cdot \frac{\dd}{\dd q_1}\mu(\quantile)\\
&= t_1(\quantile)\mu(\quantile) - t_2(\quantile) + \int_{q'_1\geq q_1} (\virtelement^Q_1 - \amorevelement_1^Q)(q'_1,q_2) \; \dd q'_1 \cdot \frac{\dd}{\dd q_1}\mu(\quantile)\\
&\geq t_1(\quantile)\mu(\quantile) - t_2(\quantile),
\intertext{where the inequality followed since by definition of $\virtelement^Q_1$, we have $\int_{q'_1\geq q_1} (\virtelement^Q_1 - \amorevelement_1^Q)(q'_1,q_2) \; \dd q'_1 \geq 0$, and $\frac{\dd}{\dd q_1}\mu(\quantile)\geq 0$ by the assumption of the lemma. We can now rearrange the above inequality and write}
t_2(\quantile) - \virtelement^Q_2(\quantile) &\geq \mu(\quantile) (t_1(\quantile) - \virtelement^Q_1(\quantile))\\
&\geq \theta(\quantile) (t_1(\quantile) - \virtelement^Q_1(\quantile)),
\end{align*}
\noindent where the inequality followed since convexity of equi-quantile curves imply that $\mu(\quantile) \geq \theta(\quantile)$, and by \autoref{lem:tmorethanphi}, $t_1(\quantile) - \virtelement^Q_1(\quantile)\geq 0$.

We can now use the above inequality to write
\begin{align*}
\theta(\quantile) \virtelement^Q_1(\quantile) &= \theta(\quantile) (t_1(\quantile) + (\virtelement^Q_1(\quantile)-t_1(\quantile))\\ &= t_2(\quantile) + \theta(\quantile) (\virtelement^Q_1(\quantile)-t_1(\quantile)) \\
&\geq t_2(\quantile) + \virtelement^Q_2(\quantile)-t_2(\quantile)\\
&= \virtelement^Q_2(\quantile).
\end{align*}
\end{proof}

\begin{proof}[Proof of \autoref{thm:unitdemandironing}]
Combining \autoref{virtisweakamort} and \autoref{ironingcomparevirt} proves the theorem.
\end{proof}

\section{Proofs from Section 5}
This section contains proofs from \autoref{a:add bundle}.

\subsection{Proof of \autoref{l:vsm-consistency-add}}\label{l:vsm-consistency-addproof}
\begin{replemma}{l:vsm-consistency-add}
The allocation of a bundle pricing mechanism pointwise optimizes virtual surplus with respect to vector field $\virt$ for all costs $\cost \max(x_1,x_2)$ if and only if: $\virtelement_1(\type)$ and $\virtelement_2(\type)$ have the same sign, $\virtelement_1(\type)+\virtelement_2(\type)$ is only a function of $t_1+t_2$ and is monotone non-decreasing in $t_1+t_2$.
\end{replemma}
\begin{proof}
We need to show that for the uniform price $p$, the allocation
function $\sagentmech$ of posting a price $p$ for the bundle
optimizes $\amorev$ pointwise. Pointwise optimization of $\sagentmech
\cdot \virt$ will result in $\sagentmech = (1,1)$ whenever
$\virtelement_1+\virtelement_2 \geq c$, and $\sagentmech = (0,0)$ otherwise.
\end{proof}

\subsection{Proof of \autoref{l:sd-proj-add}}\label{l:sd-proj-addproof}
\begin{replemma}{l:sd-proj-add}
The expected revenue of a bundle pricing is equal to its expected
virtual surplus with respect to the two-dimensional extension $\virt$ of the sum-of-values projection (\autoref{def:sumextension}).
\end{replemma}

\begin{proof}
Let $\sagentmech^p$ be the allocation corresponding to posting
price $p$ for the bundle, that is $\sagentmechelement_1^p(\type) =
\sagentmechelement_2^p(\type) = 1$ if $t_1 +t_2 \geq p$, and $\sagentmechelement_1^p(\type) = \sagentmechelement_2^p(\type)= 0$
otherwise. We will show that the virtual surplus of $\sagentmech^p$ is equal to the revenue of posting price
$p$, $R(p) = p (1-\dist_{sum}(p))$. The virtual surplus is
\begin{align*}
\int_{\typeprelim\in T} (\sagentmech^p \cdot \amorev f)(\typeprelim) \; \dd \typeprelim 
&= \int_{\type \in T} \sagentmech^p(t_1,t_2) \cdot \amorev(t_1,t_2) f(t_1,t_2) \; \dd \type\\
& = \int_{\type \in T, t_1 + t_2\geq p} \phi_{sum}(t_1+t_2) f(t_1,t_2) \; \dd \type.\\
&= - \int_{s \geq p} \frac{\dd}{\dd s} (s(1-\dist_{sum}(s)) \; \dd s \\
&= R(p) - R(1) = R(p).
\end{align*}
\end{proof}

\subsection{Proof of \autoref{phi-sufficient-additive}}\label{phi-sufficient-additiveproof}
\begin{replemma}{phi-sufficient-additive}
For any symmetric distribution over values for items, the two-dimensional extension of the sum-of-values projection $\virt$
is an amortization of revenue if there exists an amortization of revenue $\amorev$
with $\amorevelement_1(\type) + \amorevelement_2(\type) =
\sumamorev(t_1+t_2)$ that satisfies $\amorevelement_1(\type)\frac{t_2}{t_1}
\leq \amorevelement_2(\type)$.
\end{replemma}
\begin{proof}
Without loss of generality, in proving equation~\eqref{exp2} we can assume that the allocation is symmetric.  This is because by symmetry of the distribution, there exists an optimal mechanism that is also symmetric. Therefore, it is sufficient to prove the lemma only for symmetric incentive compatible allocations (in particular, we assume that $x_1(t_1,t_1) = x_2(t_1,t_1)$ for all $t_1$).\footnote{In general, when optimal mechanisms are known to satisfy a certain property, the inequality of amortization needs to be shown only for mechanisms satisfying that property.}

Fix the sum $s = t_1+t_1$.  Denote the expected difference between
$\virt$ and $\amorev$ conditioned on $t_2/t_1 \leq \theta$ by:
\begin{align*}
\cumdiff(s,\theta) &=  \int_{\theta'=0}^\theta [\virt - \amorev]^{S\!R}(s,\theta') \denscor{S\!R}(s,\theta')\; \frac{s}{1+\theta} \dd \theta'.
\end{align*}
We will only be interested in three properties of $\cumdiff$:
\begin{enumerate}[(a)]
\item \label{p:diff} $\cumdiffelement_2(s,\theta) = - \cumdiffelement_1(s,\theta)$, i.e., this is the expected amount of value shifted from coordinate one to coordinate two of $\virt$ relative to $\amorev$. This follows from the fact that $\amorevelement_1(\type) + \amorevelement_2(\type) = \virtelement_1(\type) + \virtelement_2(\type) = \sumamorev(t_1+t_2)$.
\item \label{p:shift} $\cumdiffelement_2(s,\theta) \geq 0$, i.e., this shift is non-negative according to the assumption of the lemma.  
\item \label{p:int} $\cumdiff(s,0) = \bfzero$, as the range of the integral is empty at $\theta=0$.
\end{enumerate}
Write the left-hand side of equation~\eqref{exp2} as:
\begin{align*}
&\hspace{-6.8ex} \expect{\sagentmech(\typeprelim) \cdot (\virt(\typeprelim) - \amorev(\typeprelim)) \given t_1 + t_2 = s }\\
&=\int_{\theta=0}^1 \sagentmechcor{S\!R}(s,\theta) \cdot [\virt - \amorev]^{S\!R}(s,\theta) \denscor{S\!R}(s,\theta) \frac{s}{1+\theta} \; \dd \theta\\
&= \int_{\theta=0}^1 \sagentmechcor{S\!R}(s,\theta) \cdot \frac{d}{d\theta} \int_{\theta'=0}^\theta [\virt - \amorev]^{S\!R}(s,\theta') \denscor{S\!R}(s,\theta') \frac{s}{1+\theta'} \; \dd \theta' \; \dd \theta. \\
\intertext{Substituting $\cumdiff$ into the integral above, we have}
&=  \int_{\theta=0}^1 \sagentmechcor{S\!R}(s,\theta) \cdot \frac{d}{d\theta} \cumdiff(s,\theta) \; \dd \theta\\
&=  \sagentmechcor{S\!R}(s,\theta) \cdot \cumdiff(s,\theta)\big|_{\theta=0}^1 - \int_{\theta=0}^1 \frac{d}{d\theta} \sagentmechcor{S\!R}(s,\theta) \cdot  \cumdiff(s,\theta) \; \dd \theta.\\
&= - \int_{\theta=0}^1 \frac{d}{d\theta} \sagentmechcor{S\!R}(s,\theta) \cdot  \cumdiff(s,\theta) \; \dd \theta \; \dd s \\
&\geq 0.
\end{align*}
The second equality is integration by parts.  The third equality follows because the first term on the left-hand side is zero:
For $\theta = 0$,
$\cumdiff(s,\theta)=\bfzero$ by property~\eqref{p:int}; for $\theta=1$, $\sagentmechcorelement{S\!R}_1(s,\theta) =
\sagentmechcorelement{S\!R}_2(s,\theta)$ by symmetry, and
$\cumdiffelement_1(s,\theta) = -\cumdiffelement_2(s,\theta)$ by property~\eqref{p:diff}.  The final inequality follows from $ - \frac{d}{d\theta}
\sagentmechcor{S\!R}(s,\theta) \cdot (1,-1) \geq 0$ (\autoref{lem:addIC}) and properties~\eqref{p:diff} and~\eqref{p:shift}.
\end{proof}

\subsection{Proof of \autoref{l:amortil-angle-additive}}\label{l:amortil-angle-additiveproof}
\begin{replemma}{l:amortil-angle-additive}
A canonical amortization $\amorev = \type - \amortil/\dens$ satisfying $\amorevelement_1(\type) + \amorevelement_2(\type) = \sumamorev(t_1+t_2)$ exists, is unique, where $\amortil(\type)$ is tangent to the equi-quantile curve crossing $\type$.
\end{replemma}
\begin{proof}
We assume that $\amorev$ satisfying the requirements of the lemma exists, derive the closed form suggested in the lemma, and then verify that the derived $\amorev$ indeed satisfies all the required properties.  We fix
$s$ and $q$ and apply the divergence theorem to $\amortil$ on
the subspace of type space to the right of $t_1+t_2=s$ and below $C_q$. More formally, divergence theorem is applied to the set of types $T(s,q) = \{\type'\in \typespace|t'_1 +t'_2\geq s; \dist(t_2|s) \leq q\}$. The divergence theorem equates the integral of the
orthogonal magnitude of vector field $\amortil$ on the boundary of the subspace to the
integral of its divergence within the subspace.  As the upper boundary
of this subspace is $C_q$, one term in this
equality is the integral of $\amortil(\type')$ with the upward
orthogonal vector to $C_q$ at $\type'$.  Differentiating this integral with respect to $t_1$ gives the desired quantity.
\begin{align}
&\hspace{-6.8ex}\int_{\type' \in \text{TOP}(s,q)}\normal(\type') \cdot \amortil(\type') \, \dd \type' \nonumber\\
&= \int_{\type' \in T(s,q)} \nabla \cdot \amortil(\type') \, \dd \type'
-\int_{\type' \in \{\text{RIGHT,BOTTOM,LEFT}\}(s,q)}  \normal(\type') \cdot \amortil(\type') \, \dd \type'.\label{eq:div2}
\intertext{Using divergence density equality and boundary orthogonality the right hand side becomes}
&= -\int_{\type' \in T(s,q)} \dens(\type') \, \dd \type'
-\int_{\type' \in \{\text{LEFT}\}(s,q)}  \normal(\type') \cdot \amortil(\type') \, \dd \type' \nonumber\\
&= -q(1-\sumdist(s))
-\int_{\type' \in \{\text{LEFT}\}(\quantile)}  \normal(\type') \cdot \amortil(\type') \, \dd \type' \nonumber
\end{align}
\noindent where the last equality followed directly from definition of $T(s,q)$. By definition of $\amortil$, and since normal $\normal$ at the left boundary is $(-1,-1)$,
\begin{align*}
\int_{\type' \in \{\text{LEFT}\}(s,q)}  \normal(\type') \cdot \amortil(\type') \, \dd \type'&=-\frac{1-\sumdist(s)}{\sumdens(s)}\int_{t'_2\leq C_q(t_1)} \dens(t_1,t'_2)\; \dd t'_2\\
&= -\frac{1-\sumdist(s)}{\sumdens(s)} q \sumdens(s) \\
&= - (1-\sumdist(s)) q
\end{align*}
As a result, the right hand side of equation~\eqref{eq:div2} sums to zero, and we have
\begin{align*}
\int_{\type' \in \text{TOP}(s,q)}\normal(\type') \cdot \amortil(\type') \, \dd \type' = 0.
\end{align*}
\noindent Since the above equation must hold for all $s$ and $q$, we conclude that $\amortil$ is tangent to the equi-quantile curve at any type.

\end{proof}

\end{document}